\newlength\algotmplength
\newcommand{\inlinealg}[1]{\begin{tikzpicture}[baseline] \node[anchor=base,inner sep=0.5pt, fill=black!5!white] (n) {\normalfont\bfseries\vrule height 1em depth 0.5em width 0pt #1}; \end{tikzpicture}\hskip0pt}
\newcommand{\algorithm}[1]{%
  \begingroup%
  \leavevmode\nopagebreak[4]\vskip0pt%
  \setlength\parindent{0pt}%
  \setlength\parskip{0pt}%
  \setlength\algotmplength{\dimexpr\linewidth-1em}%
  \begin{tikzpicture}
    \node[inner sep=0.5em, fill=black!5!white] {\parbox{\algotmplength}{
      \bfseries\raggedright\strut\ignorespaces#1\strut}};
  \end{tikzpicture}%
  \endgroup%
}
\newcommand{\AW}{\vskip0.5em}
\newcommand{\AC}[1]{
  \strut%
  \begingroup%
  \leavevmode\vskip0pt%
  \setlength\algotmplength{\dimexpr\linewidth-1em}%
  \begin{tikzpicture}
    \node[inner sep=0pt] (X) {\parbox{\algotmplength}{\bfseries\raggedright\strut\ignorespaces#1\strut}};
    \draw[rounded corners=4pt,line width=0.66pt] ($(X.north west)+(-0.5em, 0.125em)$) -- ($(X.south west)+(-0.5em, -0.25em)$) -- +(1em, 0);
    \path (X.north west) +(0, 0.125em) (X.south west) +(-1em, -0.25em);
  \end{tikzpicture}%
  \endgroup%
  \ignorespaces%
}
\newcommand{\ACC}[1]{
  \strut%
  \begingroup%
  \leavevmode\vskip0pt%
  \setlength\algotmplength{\dimexpr\linewidth-1em}%
  \begin{tikzpicture}
    \node[inner sep=0pt] (X) {\parbox{\algotmplength}{\bfseries\raggedright\strut\ignorespaces#1\strut}};
    \draw[line width=0.66pt] ($(X.north west)+(-0.5em, 0.125em)$) -- ($(X.south west)+(-0.5em, -0.25em)$);
    \path (X.north west) +(0, 0.125em) (X.south west) +(-1em, -0.25em);
  \end{tikzpicture}%
  \endgroup%
  \vskip-0.25em
  \ignorespaces%
}
\newcommand{\ACX}[1]{
  \strut%
  \begingroup%
  \leavevmode\vskip0pt%
  \setlength\algotmplength{\dimexpr\linewidth-1em}%
  \begin{tikzpicture}
    \node[inner sep=0pt] (X) {\parbox{\algotmplength}{\bfseries\raggedright\strut\ignorespaces#1}};
    \draw[line width=0.66pt] ($(X.north west)+(-0.5em, 0.125em)$) -- ($(X.south west)+(-0.5em, 0)$);
    \path (X.north west) +(-1em, 0.125em);
  \end{tikzpicture}%
  \endgroup%
  \ignorespaces%
}
\newcommand{\AXC}[1]{
  \begingroup%
  \leavevmode%
  \setlength\algotmplength{\dimexpr\linewidth-1em}%
  \begin{tikzpicture}
    \node[inner sep=0pt] (X) {\parbox{\algotmplength}{\bfseries\raggedright\strut\ignorespaces#1\strut}};
    \draw[rounded corners=4pt,line width=0.66pt] ($(X.north west)+(-0.5em, 0)$) -- ($(X.south west)+(-0.5em, -0.25em)$) -- +(1em, 0);
    \path (X.north west) (X.south west) +(-1em, -0.25em);
  \end{tikzpicture}%
  \endgroup%
  \ignorespaces%
}
\newcommand{\proc}[1]{\textsc{\mdseries #1}}
\newcommand{\cmnt}[1]{{\mdseries\color{blue!80!black}$(\!\star$ #1 $\star\!)$}}
\newtheorem{theorem}{Theorem}
\newtheorem{lemma}[theorem]{Lemma}
\newtheorem{corollary}[theorem]{Corollary}
\theoremstyle{definition}
\newtheorem{definition}[theorem]{Definition}
\newtheoremstyle{break}
  {\topsep}{\topsep}%
  {}{}%
  {\bfseries}{}%
  { }{}%
\theoremstyle{break}
\newtheorem{algo}[theorem]{Algorithm}
\newlength{\negph@wd}
\DeclareRobustCommand{\negphantom}[1]{%
  \ifmmode
    \mathpalette\negph@math{#1}%
  \else
    \negph@do{#1}%
  \fi
}
\newcommand{\negph@math}[2]{\negph@do{$\m@th#1#2$}}
\newcommand{\negph@do}[1]{%
  \settowidth{\negph@wd}{#1}%
  \hspace*{-\negph@wd}%
}
\DeclareMathOperator{\im}{im}
\newcommand{\cW}{\mathcal W}
\newcommand{\ang}[1]{{\langle #1 \rangle}}
\newcommand{\pwin}{\mathrel{\vec{\in}}}
\newcommand{\pwsubseteq}{\mathrel{\vec{\subseteq}}}
\newcommand{\permsim}{\mathrel{\sim_{S_n}}}
\newcommand{\notpermsim}{\mathrel{\not\sim_{S_n}}}
\newcommand{\negsim}{\mathrel{\sim_{\ang{\mathbf b}}}}
\newcommand{\subs}{\mathrel{\raisebox{-2.75pt}{\ensurestackMath{\stackon[0.25pt]{\sim}{\adjustbox{trim=0pt 1.75pt 0pt 0pt, clip}{$\subseteq$}}}}}}
\newcommand{\ohot}[2]{\ang{#1{:}#2}}
\newcommand{\eps}{{\varepsilon}}
\title{An Answer to the Bose--Nelson Sorting Problem for 11 and 12 Channels}
\author{Jannis Harder\\\small\texttt{\href{mailto:me@jix.one}{me@jix.one}}}
\date{December 8, 2020\footnote{Minor revision on July 24, 2022}}
\begin{document}

\maketitle

\begin{abstract}
    We show that 11-channel sorting networks have at least 35 comparators and that 12-channel sorting networks have at least 39 comparators.
    This positively settles the optimality of the corresponding sorting networks given in The Art of Computer Programming vol.\ 3 and closes the two smallest open instances of the Bose--Nelson sorting problem.
    We obtain these bounds by generalizing a result of Van Voorhis from sorting networks to a more general class of comparator networks.
    From this we derive a dynamic programming algorithm that computes the optimal size for a sorting network with a given number of channels.
    From an execution of this algorithm we construct a certificate containing a derivation of the corresponding lower size bound, which we check using a program formally verified using the Isabelle/HOL proof assistant.
\end{abstract}
\section{Introduction} \label{sec:intro}

Sorting of finite sequences is a classical problem in algorithms.
It has many applications in theory and practice and is often used as a building block for various data structures and other algorithms.

General purpose sorting is usually done using comparison sorting which can only inspect the input elements by checking whether a pair of of elements is in a given order.
In many sorting algorithms a comparison is followed by an optional exchange of the compared elements fixing their resulting order, which is also called a compare-exchange operation.

For a fixed length $n$ of the input sequence, sorting networks are the subclass of these algorithms where the sequence of compare-exchange operations is oblivious, i.e fixed independently of the algorithm's input.
Here $n$ is also called the sorting network's number of channels.

Depending on the use case, one usually wants to minimize either the sorting network's size, which is the total number of comparisons, or the the depth, which is the largest number of comparisons performed on a single element.
Here we are only concerned with optimal size sorting networks.
For recent results on optimal depth sorting networks see \textcite{codishSortingNetworksEnd2019}.

Finding the optimal size $s(n)$ of a sorting network for a given input length $n$ is also known as the Bose--Nelson sorting problem, named after \textcite{boseSortingProblem1962} who gave the first sub-quadratic upper bound.

The asymptotic growth of $s(n)$ has been determined by \textcite{ajtaiLogSortingNetwork1983} who found a sorting network construction of size $O(n \log n)$, matching the general comparison sort lower bound of $\Omega(n \log n)$.
Their construction results in networks that are orders of magnitude larger than the smallest known sorting networks for all practical input sizes \cite{knuthArtComputerProgramming1998}. The same still holds for later constructions that improve the constant factor but maintain the optimal asymptotic growth, e.g.\ the construction by \textcite{goodrichZigzagSortSimple2014}.

In practice sorting networks are constructed using one of the $O(n\,(\log n)^2)$ methods by \textcite{batcherSortingNetworksTheir1968}.
Batcher's constructions recursively combine smaller sorting networks into larger ones, and thus every improvement for the best known sorting network size for small $n$ translates into improvements for larger networks constructed using these methods.

For small input lengths, the Bose--Nelson sorting problem was first answered for $n \le 8$ by Floyd and Knuth using a computer search for the $n = 7$ case \autocite{floydBoseNelsonSortingProblem1973}.

\textcite{vanvoorhisImprovedLowerBound1972} showed $s(n) \ge s(n - 1) + \lceil \log_2 n \rceil$, which improves upon the information theory bound of general comparison sorting. It is strict for all currently known $s(n)$ with $n$ even.

Finding the next value of $s(n)$ took over 40 years until \textcite{codishSortingNineInputs2016} computed $s(9) = 25$ and $s(10) = 29$.
For $n = 9$ they employed a computer search based on a new method they introduced, called \emph{generate-and-prune}.
For $n = 10$, given the value of $s(9)$, Van Voorhis's bound matched a known upper bound and so they also obtained the value of $s(10)$.

Every time a value of $s(n)$ has been determined so far, it matched the size of the then smallest known sorting network.
\textcite{knuthArtComputerProgramming1998} lists the following known upper bounds:

\begin{center}
\begin{tabular}{r@{}c@{}rrrrrrrrrrrr}
        $n$    & ${}={}$   & 1 & 2 & 3 & 4 & 5 &  6 &  7 &  8 &  9 & 10 & 11 & 12 \\
        $s(n)$ & ${}\le{}$ & 0 & 1 & 3 & 5 & 9 & 12 & 16 & 19 & 25 & 29 & 35 & 39 \\
\end{tabular}
\end{center}

The sorting networks establishing the upper bounds for $n = 11$ and $n = 12$ are attributed to G.~Shapiro and M.~W.~Green.

To compute $s(11)$ and $s(12)$, we introduce the notion of a partial sorting network in \autoref{sec:partialnets} and develop a theory of partial sorting networks and their optimal sizes.
For this we first restate established facts about sorting networks in terms of partial sorting networks.
Then, as our main contribution to the theory of sorting networks, we generalize Van Voorhis's bound to partial sorting networks.

To make effective use of our new result within a computer search, we need to ensure certain conditions, which we do by introducing well-behaved sequence sets in \autoref{sec:wellbehaved}.

In \autoref{sec:algo} we then develop a successive approximation dynamic programming algorithm for computing $s(n)$.
This algorithm uses our new result to prune large parts of the search space.
In \autoref{sec:implement} we describe some implementation details with further details in \autoref{sec:parallel}.

To be able to trust the result of our computations, in \autoref{sec:certificates} we extend our algorithm to produce a certificate for the lower bound it finds.
In \autoref{sec:verify}, we then formalize the Bose--Nelson sorting problem within the Isabelle/HOL proof assistant \autocite{nipkowIsabelleHOLProof2002}.
Based on that we implement a certificate checking algorithm with a machine checked formal proof of its correctness.

Finally in \autoref{sec:compute}, we use our parallel implementation to compute $s(11) = 35$, matching the known upper bound.
We verify the result for $s(11)$ using our formally verified certificate checker.
From that we also obtain $s(12) = 39$, again using Van Voorhis's lower bound and the known upper bounds.
\section{Preliminaries}

\paragraph{Sorting}

We are interested in sorting sequences of a fixed length $n$ which contain values of a totally ordered set $\Sigma$.
The set of all these sequences is denoted by $\Sigma^n$ and we write a sequence $x \in \Sigma^n$ and its elements as $x = (x_1, \ldots, x_n)$.

A sequence is \emph{sorted} if it is non-decreasing and the \emph{sorted sequence} $x^\mathbf s$ of a given sequence $x$ is the unique sorted sequence that is a permutation of $x$.
Computing $x^\mathbf s$ given $x$ is called \emph{sorting}.

We also call the indices $\{1, \ldots, n\}$ \emph{channels} and write $[n]$ for the set of all $n$ channels. The value $x_i$ of a sequence $x$ is also called the value of channel $i$ in $x$ or just channel $i$ in $x$.

\paragraph{Comparators and Exchanges}

To perform sorting we will use two operations, called \emph{exchanges} and \emph{comparators}.

The set of exchanges on $n$ channels, $E_n$, contains an exchange $(i,j)$ for each \emph{unordered} tuple $i, j \in [n]$ with $i \neq j$.
An exchange $(i,j)$ acts on a sequence $x \in \Sigma^n$, written $x^{(i,j)}$, by transposition of channels $i$ and $j$.

The set of comparators on $n$ channels, $C_n$, contains a comparator $[i,j]$ for each \emph{ordered} tuple $i, j \in [n]$ with $i \neq j$.
A comparator $[i,j]$ acts on on a sequence $x \in \Sigma^n$, written $x^{[i,j]}$, by comparing the order of $x_i$ and $x_j$.
If $x_i \le x_j$ the result is $x$, if $x_i \ge x_j$ the result is $x^{(i,j)}$, so that $(x^{[i,j]})_i \le (x^{[i,j]})_j$.

Note that, given a fixed sequence $x$, the actions of both, comparators and exchanges, are permutations.

Given a set of operations $A$, the set of all finite sequences containing only operations in $A$ is denoted by $A^*$.
We write sequences of operations by juxtaposition of its elements.
The action of a sequence of operations consists of each operation's action composed from \emph{left to right}, so that $x^{a_1a_2} = (x^{a_1})^{a_2}$ for $a_1, a_2 \in A$. The \emph{empty sequence}, denoted by $\eps$, has the identity as its action.

\paragraph{Comparator Networks and Sorting Networks}

A \emph{comparator network} $c$ on $n$ channels is a finite sequence of exchange and/or comparator operations $c \in (E_n \cup C_n)^*$.

A \emph{sorting network} $c$ on $n$ channels is a comparator network which has sorting as its action, i.e.\ $x^c = x^\mathbf s$ for all length-$n$ sequences $x$.

Two comparator networks are called \emph{equivalent} if they have the same action.
For example, all \emph{sorting networks} are equivalent by definition.

\begin{figure}
    \centering
    \begin{minipage}{.5\textwidth}
        \centering
\begin{tikzpicture}[
    line width=0.66pt,
    nscomp/.style={Circle[length=1.5mm,sep=-0.75mm]}-{Stealth[length=2.5mm, inset=0.5mm]},
    scomp/.style={Circle[length=1.5mm,sep=-0.75mm]}-{Circle[length=1.5mm,sep=-0.75mm]},
]

    \tikzmath { \xs = 4.5 / 10;}

    \newcommand{\cbend}[1]{-- +(0.2, 0) .. controls +(1, 0) and +(-1, 0) .. +(1.8, #1) -- ++(2, #1)}
    \newcommand{\cstep}[1]{-- ++(2 * #1, 0)}

    \newcommand{\nscomp}[1]{edge[shift={(1, 0)}, nscomp] +(0, #1)}

    \begin{scope}[
        y=-0.66cm,x=\xs cm,xshift=0,
    ]

        \draw (0, 0)
            \cstep{1}
            \cbend{1}
            \cstep{1}
            \cbend{1}
            \cstep{1}

            (0, 1)
            \cstep{1}
            \cbend{-1}
            \nscomp{2}
            \cstep{3}

            (0, 2)
            \nscomp{-2}
            \cstep{3}
            \cbend{-1}
            \nscomp{1}
            \cstep{1}
        ;

        \node at (1, 2.5) {$[3, 1]$};
        \node at (3, 2.5) {$(1, 2)$};
        \node at (5, 2.5) {$[1, 3]$};
        \node at (7, 2.5) {$(2, 3)$};
        \node at (9, 2.5) {$[2, 3]$};

    \end{scope}

\end{tikzpicture}%
        \captionof{figure}{Comparator network diagram}
        \label{fig:diagram}
    \end{minipage}%
    \begin{minipage}{.5\textwidth}
        \centering
\begin{tikzpicture}[
    line width=0.66pt,
    nscomp/.style={Circle[length=1.5mm,sep=-0.75mm]}-{Stealth[length=2.5mm, inset=0.5mm]},
    scomp/.style={Circle[length=1.5mm,sep=-0.75mm]}-{Circle[length=1.5mm,sep=-0.75mm]},
]

    \tikzmath { \xs = 4.5 / 10;}

    \newcommand{\cbend}[1]{-- +(0.2, 0) .. controls +(1, 0) and +(-1, 0) .. +(1.8, #1) -- ++(2, #1)}
    \newcommand{\cstep}[1]{-- ++(2 * #1, 0)}

    \newcommand{\nscomp}[1]{edge[shift={(1, 0)}, nscomp] +(0, #1)}
    \newcommand{\scomp}[1]{edge[shift={(1, 0)}, scomp] +(0, #1)}

    \begin{scope}[
        y=-0.66cm,x=\xs cm,xshift=0,
    ]

        \draw (0, 0)
            \scomp{2}
            \cstep{2}
            \scomp{1}
            \cstep{1}

            (0, 1)
            \cstep{1}
            \scomp{1}
            \cstep{2}

            (0, 2)
            \cstep{3}

        ;

        \node at (1, 2.5) {$[1, 3]$};
        \node at (3, 2.5) {$[2, 3]$};
        \node at (5, 2.5) {$[1, 2]$};

    \end{scope}

\end{tikzpicture}%
        \captionof{figure}{Standard form diagram}
        \label{fig:diagram_std}
    \end{minipage}%
\end{figure}

We can represent a comparator network graphically using a diagram (see \autoref{fig:diagram}).
The diagram consists of $n$ horizontal wires, one for each channel, numbered from top to bottom.
From left to right, for each exchange $(i, j)$ we cross the corresponding wires and for each comparator $[i, j]$ we draw a vertical arrow pointing from $i$ to $j$. The left side of the diagram represents the input to the sorting network and the right side represents the output.

\paragraph{The Bose--Nelson Sorting Problem}

We are interested in sorting as efficiently as possible.
Below, we will see that we can always avoid using exchanges, thus we define the \emph{size} of a comparator network $c$ as the number of comparator operations contained.
We denote the size of $c$ with $s(c)$.
The minimal size among all sorting network on $n$ is denoted by $s(n)$.
Determining $s(n)$ is known as the Bose--Nelson sorting problem \cite{floydBoseNelsonSortingProblem1973}.

\paragraph{Standard Forms and Permutations}

There are many different comparator networks that have the same action.
This is still the case if we restrict it to comparator networks having the same size and/or to sorting networks.

It is helpful to define a standard form of comparator networks that excludes some equivalent comparator networks.
We do this by defining a set of \emph{standard comparators}, $D_n = \{\, [i, j] \in C_n \mid i < j \,\}$.
When drawing a diagram for a network in which all comparators are standard, we usually draw them without arrow tips as they would all point downwards (see \autoref{fig:diagram_std}).

The symmetric group on the set of $n$ channels is denoted by $S_n$.
Given a permutation of channels $\sigma \in S_n$, we define its action on a sequence as rearranging the sequence's values according to the permutation such that $(x^\sigma)_{i^\sigma} = x_i$ for all channels $i \in [n]$.
Then every sequence of exchanges in $E_n^*$ has an action that is also the action of a permutation $\sigma \in S_n$.
The converse is also true: for every permutation $\sigma \in S_n$ we can find a sequence of exchanges $e_\sigma \in E_n^*$ such that $\sigma$ and $e_\sigma$ have the same action.
We fix one such sequence of exchanges $e_\sigma$ for every $\sigma \in S_n$.
We require that the identity permutation is mapped to the empty sequence of exchanges $\eps$, but allow arbitrary choices otherwise.
From here on we will identify $\sigma$ and $e_\sigma$, allowing us to write permutations as part of a comparator network.

A comparator network is in \emph{standard form} if it consists of a prefix $t \in D_n^*$ of standard comparators followed by a suffix that is a permutation $\sigma \in S_n$, i.e.\ $c$ is in \emph{standard form} if $c \in D_n^*S_n$.

A classic result by \textcite{floydBoseNelsonSortingProblem1973} states that every comparator network can be turned into an equivalent standard form comparator network of the same size. This can be done using a small set of rewrite rules that orient all comparators in the same direction and move all exchanges to the end (see \autoref{fig:stdrules}).
Additionally, if a \emph{sorting} network $c$ is in standard form, the permutation at the end is always the identity permutation, and thus we need no exchanges at all.

\begin{figure}
    \centering

\newcommand{\cbend}[1]{-- +(0.2, 0) .. controls +(1, 0) and +(-1, 0) .. +(1.8, #1) -- ++(2, #1)}
\newcommand{\cstep}[1]{-- ++(2 * #1, 0)}

\newcommand{\nscomp}[1]{edge[shift={(1, 0)}, nscomp] +(0, #1)}

\tikzmath { \xs = 4 / 11;}
\tikzmath { \yscale = -2 /3; }

\tikzset{
	every picture/.style={line width=0.66pt, font=\small},
	nscomp/.style={Circle[length=1.5mm,sep=-0.75mm]}-{Stealth[length=2.5mm, inset=0.5mm]},
	scomp/.style={Circle[length=1.5mm,sep=-0.75mm]}-{Circle[length=1.5mm,sep=-0.75mm]},
	slbl/.style={anchor=east},
	mlbl/.style={fill=white, inner sep=1pt, text opacity=1, fill opacity=0.85, font=\small},
	elbl/.style={anchor=west},
	baseline=(current bounding box.center),
}

\hfill
\begin{tikzpicture}[y=\yscale cm,x=\xs cm]
	\draw
	(0, 0)
	\cstep{1}

	(0, 1)
	\nscomp{-1}
	\cstep{1}

	(4, 0)
	\nscomp{1}
	\cstep{1}
	\cbend{1}

	(4, 1)
	\cstep{1}
	\cbend{-1}
	;

	\node at (3, 0.5) {$\Rightarrow$};

	\node at (1, 2) {$[j, i]$};
	\node at (5, 2) {$[i, j]$};
	\node at (7, 2) {$(i, j)$};
\end{tikzpicture}
\hfill
\begin{tikzpicture}[y=\yscale cm,x=\xs cm]
	\draw
	(0, 0)
	\cbend{1}
	\cstep{1}

	(0, 1)
	\cbend{-1}
	\nscomp{1}
	\cstep{1}

	(6, 0)
	\nscomp{1}
	\cstep{1}

	(6, 1)
	\cstep{1}
	;

	\node at (5, 0.5) {$\Rightarrow$};

	\node at (1, 2) {$(i, j)$};
	\node at (3, 2) {$[i, j]$};
	\node at (7, 2) {$[i, j]$};
\end{tikzpicture}
\hfill\strut

\vskip0.5em

\hfill
\begin{tikzpicture}[y=\yscale cm,x=\xs cm]
	\draw
	(0, 0)
	\cbend{1}
	\nscomp{1}
	\cstep{1}

	(0, 1)
	\cbend{-1}
	\cstep{1}

	(0, 2)
	\cstep{2}

	(6, 0)
	\nscomp{2}
	\cstep{1}
	\cbend{1}

	(6, 1)
	\cstep{1}
	\cbend{-1}

	(6, 2)
	\cstep{2}
	;

	\node at (5, 1) {$\Rightarrow$};

	\node at (1, 3) {$(i, j)$};
	\node at (3, 3) {$[j, k]$};
	\node at (7, 3) {$[i, k]$};
	\node at (9, 3) {$(i, j)$};
\end{tikzpicture}
\hfill
\begin{tikzpicture}[y=\yscale cm,x=\xs cm]
	\draw
	(0, 0)
	\cbend{1}
	\cstep{1}

	(0, 1)
	\cbend{-1}
	\cstep{1}

	(0, 2)
	\cstep{1}
	\nscomp{-1}
	\cstep{1}

	(6, 0)
	\cstep{1}
	\cbend{1}

	(6, 1)
	\cstep{1}
	\cbend{-1}

	(6, 2)
	\nscomp{-2}
	\cstep{2}
	;

	\node at (5, 1) {$\Rightarrow$};

	\node at (1, 3) {$(i, j)$};
	\node at (3, 3) {$[k, j]$};
	\node at (7, 3) {$[k, i]$};
	\node at (9, 3) {$(i, j)$};
\end{tikzpicture}
\hfill
\tikzmath { \yscale = -0.75 * 3  / 4; }
\begin{tikzpicture}[y=\yscale cm,x=\xs cm]
	\draw
	(0, 0)
	\cbend{1}
	\cstep{1}

	(0, 1)
	\cbend{-1}
	\cstep{1}

	(0, 2)
	\cstep{1}
	\nscomp{1}
	\cstep{1}

	(0, 3)
	\cstep{2}

	(6, 0)
	\cstep{1}
	\cbend{1}

	(6, 1)
	\cstep{1}
	\cbend{-1}

	(6, 2)
	\nscomp{1}
	\cstep{2}

	(6,3)
	\cstep{2}
	;

	\node at (5, 1.5) {$\Rightarrow$};

	\node at (1, 4) {$(i, j)$};
	\node at (3, 4) {$[k, l]$};
	\node at (7, 4) {$[k, l]$};
	\node at (9, 4) {$(i, j)$};
\end{tikzpicture}
\hfill\strut
    \caption{Rewrite rules transforming a comparator network into standard form}
    \label{fig:stdrules}
\end{figure}

\paragraph{Comparator Networks as Circuits}

The diagrams for comparator networks hint at a different way to represent them.
We can view comparator networks as a circuit with $n$ ordered input ports and $n$ ordered output ports, built using only \emph{comparator gates}.
Each comparator gate has two input ports and two output ports, where the outputs are labeled \textbf{min} and \textbf{max} to distinguish them. The inputs are not labeled, as swapping the inputs of a comparator does not affect the output.
The \textbf{min} and \textbf{max} outputs take as values the minimum and the maximum of the two inputs, respectively.
Additionally we require that the circuit is acyclic and that every port is connected to exactly a single other port.
We will call such a circuit a \emph{comparator circuit}.

The diagram of a comparator network $c$, as described above, gives us such a circuit when we replace each arrow with a gate and consider the exchanges as part of the wiring between gates.
Applying a sequence $x$ to the inputs of that circuit results in $x^c$ on the outputs.

Given a comparator circuit, we can also construct a comparator network which computes the circuit's outputs as its action.
To do so, we perform a topological sort to get a linear order on the gates that respects their connectivity.
Then, beginning at the circuit's inputs, we assign channels to every wire such that the two wires connected to the outputs of a gate have the same two channels as the wires connected to the inputs.
For the outputs, we always assign the channel with the larger number to the \textbf{max} output.

\begin{figure}
    \centering

\begin{tikzpicture}[
    y=-0.5cm, x=0.5cm, line width=0.66pt, font=\small,
	baseline=(current bounding box.center),
]

    \newcommand{\gate}[1]{
        node[
            inner ysep=0pt,
            draw, rectangle split,
            rectangle split parts=5,
            minimum height=0.1pt,
            rectangle split draw splits=false,
            text width=1.2cm,
        ] (#1) {
            \nodepart{one}\vrule height 2pt depth 0pt width 0pt
            \nodepart{two}
            \textbf{in} \hfill \textbf{min}
            \nodepart{three}\vrule height 6pt depth 0pt width 0pt
            \nodepart{four}
            \textbf{in} \hfill \textbf{max}
            \nodepart{five}\vrule height 2pt depth 0pt width 0pt
        }
    }

    \newcommand{\ina}[1]{(#1.two west)}
    \newcommand{\inb}[1]{(#1.four west)}
    \newcommand{\outa}[1]{(#1.two east)}
    \newcommand{\outb}[1]{(#1.four east)}

    \path
        (0, 0) node (in1) {$1$}
        (0, 2) node (in2) {$2$}
        (0, 4) node (in3) {$3$}

        (3, 1.2)
        \gate{A}

        (7.25, 2.9)
        \gate{B}

        (11.5, 1)
        \gate{C}

        (15, 0) node (out1) {$1$}
        (15, 2) node (out2) {$2$}
        (15, 4) node (out3) {$3$}
    ;

    \newcommand{\curve}{.. controls +(10pt, 0) and +(-10pt, 0) ..}
    \newcommand{\wcurve}{.. controls +(30pt, 0) and +(-30pt, 0) ..}
    \newcommand{\lcurve}{-- +(2cm, 0) .. controls +(20pt, 0) and +(-20pt, 0) ..}

    \draw
        (in1.east) \curve \inb{A}
        (in2.east) \curve \ina{A}
        (in3.east) \wcurve \ina{B}
        \outa{A} \curve \inb{B}
        \outb{A} \wcurve \ina{C}
        \outb{B} \curve \inb{C}

        \outa{C} \curve (out1.west)
        \outb{C} \curve (out3.west)
        \outa{B} \lcurve (out2.west)
    ;

    \tikzset{lbl/.style={inner xsep=0.5pt, color=blue}}

    \draw  \ina{A} node[anchor=south east, lbl, inner ysep=1pt] {\scriptsize 2};
    \draw  \inb{A} node[anchor=north east, lbl, inner ysep=1pt] {\scriptsize 1};
    \draw \outa{A} node[anchor=south west, lbl, inner ysep=1pt] {\scriptsize 1};
    \draw \outb{A} node[anchor=north west, lbl, inner ysep=1pt] {\scriptsize 2};

    \draw  \ina{B} node[anchor=south east, lbl, inner ysep=1pt] {\scriptsize 3};
    \draw  \inb{B} node[anchor=north east, lbl, inner ysep=1pt] {\scriptsize 1};
    \draw \outa{B} node[anchor=south west, lbl, inner ysep=1.5pt] {\scriptsize 1};
    \draw \outb{B} node[anchor=north west, lbl, inner ysep=1pt] {\scriptsize 3};

    \draw  \ina{C} node[anchor=south east, lbl, inner ysep=1pt] {\scriptsize 2};
    \draw  \inb{C} node[anchor=north east, lbl, inner ysep=3pt] {\scriptsize 3};
    \draw \outa{C} node[anchor=south west, lbl, inner ysep=1pt] {\scriptsize 2};
    \draw \outb{C} node[anchor=north west, lbl, inner ysep=3pt] {\scriptsize 3};

    \draw (out1.west) node[anchor=south east, lbl, inner ysep=1pt] {\scriptsize 2};
    \draw (out2.west) node[anchor=south east, lbl, inner ysep=1pt] {\scriptsize 1};
    \draw (out3.west) node[anchor=south east, lbl, inner ysep=4pt] {\scriptsize 3};

\end{tikzpicture}
$\Rightarrow$
\begin{tikzpicture}[
    line width=0.66pt,
    nscomp/.style={Circle[length=1.5mm,sep=-0.75mm]}-{Stealth[length=2.5mm, inset=0.5mm]},
    scomp/.style={Circle[length=1.5mm,sep=-0.75mm]}-{Circle[length=1.5mm,sep=-0.75mm]},
	baseline=(current bounding box.center),
]

    \tikzmath { \xs = 4.5 / 10;}

    \newcommand{\cbend}[1]{-- +(0.2, 0) .. controls +(1, 0) and +(-1, 0) .. +(1.8, #1) -- ++(2, #1)}
    \newcommand{\cstep}[1]{-- ++(2 * #1, 0)}

    \newcommand{\nscomp}[1]{edge[shift={(1, 0)}, nscomp] +(0, #1)}
    \newcommand{\scomp}[1]{edge[shift={(1, 0)}, scomp] +(0, #1)}

    \begin{scope}[
        y=-0.66cm,x=\xs cm,xshift=0,
    ]

        \draw (0, 0)
            \scomp{1}
            \cstep{1}
            \scomp{2}
            \cstep{2}
            \cbend{1}

            (0, 1)
            \cstep{2}
            \scomp{1}
            \cstep{1}
            \cbend{-1}

            (0, 2)
            \cstep{4}

        ;

        \node at (1, 2.5) {$[1, 2]$};
        \node at (3, 2.5) {$[1, 3]$};
        \node at (5, 2.5) {$[2, 3]$};
        \node at (7, 2.5) {$(1,2)$};

    \end{scope}

\end{tikzpicture}%
    \caption{Construction of a comparator network from a comparator circuit}
    \label{fig:circuit}
\end{figure}

This gives us as comparator network $d \in D_n^*$, but as the channels assigned to the wires of the circuit's outputs might not match the order of the outputs, the action of $d$ produces a sequence that is a permutation of the circuit's outputs.
If $\sigma$ is the inverse of this permutation of the output, appending $\sigma$ gives a comparator network $d\sigma$ which has the circuit's outputs as its action.
See \autoref{fig:circuit} for an example.

As $d\sigma \in D_n^*S_n$ this network is in standard form.
The constructed comparator network is not unique as in general there are multiple possible ways to linearly order the gates.
We will define an arbitrary choice to be \emph{the} comparator network of a circuit.

Also note that the construction for either direction maintains the size of the comparator network, allowing us to freely switch between representations as suitable.

\paragraph{Sequence Sets}

A \emph{sequence set} of length $n$ is a subset of $\Sigma^n$.
For an operation $r$ that acts on sequences, we have a corresponding elementwise action on sequence sets, i.e.\ $X^r = \{\, x^r \mid x \in X \,\}$.
For a set of operations $R$, and a single sequence $x$ we apply each operation's action to the sequence and get $x^R = \{\, x^r \mid r \in R \,\}$.
In the following we will consider sequence sets to be the fundamental objects on which our operations act, and given a sequence set $X$ and a set of operations $R$ we will write $X^R$ for a set of sequence sets, each contained set being the result of an $r \in R$ acting on $X$, i.e.\ $\{\, \{\,x^r \mid x \in X\,\} \mid r \in R\,\}$.
Note that $X^R$ does not collapse into a set of sequences.
This differs from the more common convention which would collapse $X^R$ for sets $X$ and $R$ into a single set of sequences.

\paragraph{The Zero-One Principle}

So far we have not fixed the choice of $\Sigma$.
An important result for analyzing sorting networks is the zero-one principle which states that a comparator network is a sorting network if and only if it sorts all Boolean input sequences \cite{floydBoseNelsonSortingProblem1973,knuthArtComputerProgramming1998}.
We will denote the set of Boolean values $\{0, 1\}$ with $B$ and from here on, unless noted otherwise, will fix $\Sigma = B$.

\section{Partial Sorting Networks} \label{sec:partialnets}

To develop a search procedure for $s(n)$, we introduce the notion of a partial sorting network which has to sort only some input sequences.
We then derive several statements about the minimal size of partial sorting networks, on which we can base our search procedure.
These include well known properties of sorting networks, restated in terms of partial sorting networks, as well as novel results that are essential for our search procedure's performance.

\begin{definition}[Partial Sorting Networks]
	Given a set of input sequences $X \subseteq B^n$, a comparator network $c$ on $n$ channels is a \emph{partial sorting network} on $X$ if for each input sequence in $X$, the network's corresponding output is sorted, i.e. if $X^c = X^\mathbf s$.
	We use $s(X)$ to denote the minimal size required for a partial sorting network on $X$, which we will also call the \emph{minimal size for sorting} $X$.
\end{definition}

Next we derive several bounds on partial sorting networks which we will later use to build a recursive search procedure that can compute $s(X)$ for many $X$ including $B^n$.

\begin{lemma}[Minimal Size Inclusion Monotonicity] \label{fact:mono}
	For all sequence sets $X \subseteq Y \subseteq B^n$ the minimal sizes for sorting are bounded by $0 \le s(X) \le s(Y) \le s(B^n) = s(n)$.
\end{lemma}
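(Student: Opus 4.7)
The plan is to prove the four inequalities of the chain in turn, each essentially by unwinding the definition of partial sorting network together with the zero--one principle at the very end.

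First, I would note that $0 \le s(X)$ is immediate: $s(X)$ is the minimal number of comparators in some comparator network, hence a nonnegative integer (with the convention that $s(\emptyset) = 0$, witnessed by the empty network $\eps$, if $X$ happens to be empty).

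For the middle inequality $s(X) \le s(Y)$ whenever $X \subseteq Y$, the key observation is that being a partial sorting network is monotone in the \emph{opposite} direction as the input set: if $c$ satisfies $Y^c = Y^\mathbf s$, then for every $x \in X \subseteq Y$ we already have $x^c = x^\mathbf s$, so $X^c = X^\mathbf s$ and $c$ is also a partial sorting network on $X$. Thus the set of comparator networks that partially sort $Y$ is contained in the set of those that partially sort $X$, and taking minima over sizes flips the inclusion into $s(X) \le s(Y)$. The same argument applied with $Y \subseteq B^n$ yields $s(Y) \le s(B^n)$.

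Finally, for the equality $s(B^n) = s(n)$, I would appeal directly to the zero--one principle stated at the end of the preliminaries: a comparator network $c$ on $n$ channels satisfies $(B^n)^c = (B^n)^\mathbf s$ if and only if $c$ sorts every Boolean input, if and only if $c$ is a sorting network. Hence the partial sorting networks on $B^n$ are exactly the sorting networks on $n$ channels, and their minimal sizes agree by definition.

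There is no real obstacle here; the only subtlety worth flagging is the direction-reversal between input-set inclusion and the set of partial sorting networks, and the appeal to the zero--one principle to identify the boundary case $X = B^n$ with the classical notion of a sorting network.
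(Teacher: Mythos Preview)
Your proof is correct and follows essentially the same approach as the paper: nonnegativity of comparator counts, the observation that any partial sorting network on a superset is also one on a subset (so minima are ordered accordingly), and the zero--one principle to identify $s(B^n)$ with $s(n)$. The only difference is cosmetic---you spell out the direction-reversal and the empty case a bit more explicitly than the paper does.
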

\begin{proof}
	By the zero-one principle, a sorting network on $n$ channels is the same as a partial sorting network on $B^n$, thus $s(n) = s(B^n)$.
	As the number of comparators is always nonnegative we have $0 \le s(X)$.
	For all $X \subseteq Y \subseteq s(B^n)$ we have $s(X) \le s(Y) \le s(B^n)$, since every sorting network is a partial sorting network on $Y$ and in turn every partial sorting network on $Y$ also is a partial sorting network on $X$.
\end{proof}

\begin{definition}[Similarity] \label{def:sim}
	Given two sequence sets $X, Y \subseteq B^n$, we say $X$ and $Y$ are \emph{similar under permutation} or \emph{permutation-similar}, written $X \permsim Y$ if $X \in Y^{S_n}$, i.e. if $X$ and $Y$ are in the same orbit of the group action of $S_n$ on sequence sets.

	Given a sequence $x$, we denote its elementwise Boolean negation $x^\mathbf b$. Boolean negation and the identity operation form the two element cyclic group $\ang{\mathbf b}$.
	We say $X$ and $Y$ are \emph{similar under Boolean negation} or \emph{negation-similar}, written $X \negsim Y$ if $X \in Y^\ang{\mathbf b}$.

	Let $X \sim Y$ be the equivalence relation generated by both $\permsim$ and $\negsim$.
	We say $X$ is \emph{similar} to $Y$ when $X \sim Y$.
\end{definition}

As permutation of channels commutes with Boolean negation, given $X \sim Y$, we can always find a $T$ such that $X \permsim T \negsim Y$.

Next we show that the minimal size for sorting a sequence set is invariant under permutation and Boolean negation:

\begin{lemma}[Minimal Size Permutation Invariance] \label{fact:permute}
	All permutation-similar sequence sets $X \permsim Y$, have the same minimal size $s(X) = s(Y)$ for sorting.
\end{lemma}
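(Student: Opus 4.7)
The plan is to show each direction $s(Y) \le s(X)$ via an explicit construction, then conclude equality by symmetry.

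Assume $X \permsim Y$, so by definition there is some $\sigma \in S_n$ with $Y = X^\sigma$. Let $c$ be a minimum-size partial sorting network on $X$, so $X^c = X^\mathbf s$ and $s(c) = s(X)$. My candidate partial sorting network for $Y$ is $c' = \sigma^{-1} c$ (using the paper's identification of permutations with their fixed exchange sequences $e_{\sigma^{-1}}$). To verify correctness, take any $y \in Y$; by construction $y = x^\sigma$ for some $x \in X$, so $y^{\sigma^{-1}} = x$ and hence $y^{c'} = (y^{\sigma^{-1}})^c = x^c = x^\mathbf s$. Since $y$ is a permutation of $x$, we have $x^\mathbf s = y^\mathbf s$, so indeed $Y^{c'} = Y^\mathbf s$.

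The key observation is that $c'$ and $c$ have the same \emph{size} in the sense of the paper: size counts only comparators, and $\sigma^{-1}$ contributes only exchanges. Therefore $s(c') = s(c) = s(X)$, which gives $s(Y) \le s(X)$. Applying the same argument in the reverse direction using $\sigma^{-1}$ (noting $X = Y^{\sigma^{-1}}$, hence also $Y \permsim X$) yields $s(X) \le s(Y)$, and equality follows.

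There is essentially no obstacle here beyond careful bookkeeping: the statement is a direct consequence of the definition of size, the freeness of exchanges, and the fact that the sorted sequence is invariant under permuting the input. The only minor subtlety worth flagging explicitly is the distinction between the elementwise action on sets and the fact that prepending $\sigma^{-1}$ as a sequence of exchanges preserves size precisely because exchanges are not counted — a convention that was fixed earlier when standard forms were introduced.
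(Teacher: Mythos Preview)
Your proof is correct and follows essentially the same approach as the paper: prepend the appropriate permutation (realized as a sequence of exchanges, which do not count toward size) to a minimal partial sorting network on one set to obtain one on the other, then invoke symmetry of $\permsim$. Your write-up is in fact slightly more explicit than the paper's in verifying $x^{\mathbf s} = y^{\mathbf s}$ and in noting why exchanges do not affect size.
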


\begin{proof}
	When $X \permsim Y$, there is a permutation $\sigma$ such that $X = Y^\sigma$. For all partial sorting networks $c$ on $Y$, we get the partial sorting network $\sigma c$ on $X$ having the same size as $c$, thus $s(X) \le s(Y)$. As $\permsim$ is an equivalence relation we also have $s(Y) \le s(X)$ by symmetry and therefore $s(X) = s(Y)$.
\end{proof}

\begin{lemma}[Minimal Size Negation Invariance] \label{fact:invert}
	All negation-similar sequence sets $X \negsim Y$, have the same minimal size $s(X) = s(Y)$ for sorting.
\end{lemma}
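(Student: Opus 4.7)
The plan is to mirror the structure of \autoref{fact:permute}: from a partial sorting network on one side of the negation-similarity, construct a network of equal size on the other side. Since $\negsim$ is symmetric, it suffices to establish $s(X^\mathbf{b}) \le s(X)$ for every $X \subseteq B^n$; the reverse inequality then follows by swapping the roles of $X$ and $X^\mathbf{b}$.

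The key local identity is that negation nearly commutes with a comparator, provided we flip its direction: for any sequence $x$, $(x^\mathbf{b})^{[i,j]} = (x^{[j,i]})^\mathbf{b}$, which reduces to $\min(1-a, 1-b) = 1 - \max(a,b)$ and its dual. Exchanges commute with negation outright. Given a partial sorting network $c$ on $X$, I would therefore construct $\tilde c$ by replacing each comparator $[i,j]$ occurring in $c$ with $[j,i]$ while leaving exchanges unchanged, and argue by induction on the length of $c$ that $(y^\mathbf{b})^{\tilde c} = (y^c)^\mathbf{b}$ for every $y$. Applied elementwise to $X$ this gives $(X^\mathbf{b})^{\tilde c} = (X^c)^\mathbf{b} = (X^\mathbf{s})^\mathbf{b}$.

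The set $(X^\mathbf{s})^\mathbf{b}$ consists of negated sorted sequences, which are sorted in non-increasing rather than non-decreasing order, so $\tilde c$ alone is not yet a partial sorting network on $X^\mathbf{b}$. To repair this, I would append the channel-reversal permutation $\rho \in S_n$ defined by $i \mapsto n+1-i$; elementwise, reversing the non-increasing sequence $(x^\mathbf{s})^\mathbf{b}$ yields the non-decreasing sort of $x^\mathbf{b}$, i.e.\ $((x^\mathbf{s})^\mathbf{b})^\rho = (x^\mathbf{b})^\mathbf{s}$, and hence $((X^\mathbf{s})^\mathbf{b})^\rho = (X^\mathbf{b})^\mathbf{s}$. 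Thus $\tilde c \rho$ is a partial sorting network on $X^\mathbf{b}$, and because $\rho$ contributes only exchanges, $s(\tilde c \rho) = s(c)$. Choosing $c$ optimal yields $s(X^\mathbf{b}) \le s(X)$, and symmetry completes the proof.

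The main conceptual obstacle compared with \autoref{fact:permute} is the comparator-flipping identity: a downward-oriented comparator does not commute with negation on its own, and the unavoidable side-effect of flipping directions — that the resulting network sorts into the wrong direction — is what forces the trailing reversal $\rho$. Everything else is a routine induction along the sequence $c$.
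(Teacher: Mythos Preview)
Your proof is correct and follows essentially the same approach as the paper: construct the flipped network by reversing every comparator's direction (what the paper phrases as ``exchanging the outputs of every comparator''), observe that this yields the negation of the sorted output, and repair the sort direction by appending the channel-reversal permutation. Your presentation is slightly more explicit about the local identity $(x^\mathbf{b})^{[i,j]} = (x^{[j,i]})^\mathbf{b}$ and the induction along $c$, but the construction and the symmetry argument are identical.
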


\begin{proof}
	We only have to handle the case when $X = Y^\mathbf b$ as otherwise $X = Y$.
	For a given partial sorting network $c$ on $Y$ we construct a comparator network $d$ by exchanging the outputs of every comparator.
	This construction gives us $z = y^c$ exactly if $z^\mathbf b = y^{\mathbf b d}$.
	As $z$ is sorted if and only if the reverse of $z^\mathbf b$ is sorted, we can construct a partial sorting network on $Y^\mathbf b = X$ by appending the permutation that reverses the order of channels.
	Let $\sigma$ be that permutation, then $d \sigma$ is the resulting network.
	This network has the same number of comparators as $c$ and we obtain $s(X) \le s(Y)$. As $\negsim$ is an equivalence relation we also have $s(Y) \le s(X)$ by symmetry and therefore $s(X) = s(Y)$.
\end{proof}

Combined we get the invariance under similarity.

\begin{corollary}[Minimal Size Similarity Invariance] \label{fact:sim}
	All similar sequence sets $X \sim Y$, have the same minimal size $s(X) = s(Y)$ for sorting.
\end{corollary}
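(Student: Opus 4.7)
The plan is to reduce this corollary directly to the two preceding lemmas (\autoref{fact:permute} and \autoref{fact:invert}) by using the commutation observation made immediately after \autoref{def:sim}. Since $\sim$ is defined as the equivalence relation generated by $\permsim$ and $\negsim$, I would in principle need to handle arbitrary finite chains of these two basic relations; but the commutation remark already tells us that any such chain collapses into a single permutation step followed by a single negation step.

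Concretely, first I would invoke the stated fact that for every pair $X \sim Y$ there exists an intermediate sequence set $T$ with $X \permsim T \negsim Y$. Then \autoref{fact:permute} yields $s(X) = s(T)$, and \autoref{fact:invert} yields $s(T) = s(Y)$, so $s(X) = s(Y)$ by transitivity.

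The only step that is not completely mechanical is justifying the commutation claim itself, in case the reader does not take it on faith. If needed, I would verify it by noting that for any permutation $\sigma \in S_n$ and any sequence $x$, one has $(x^\mathbf b)^\sigma = (x^\sigma)^\mathbf b$, because both operations act on distinct ``axes'' of a sequence: $\sigma$ permutes the positions of channels while $\mathbf b$ flips each channel's value independently. From this elementwise identity, the commutation lifts to sequence sets, and therefore any zig-zag of $\permsim$ and $\negsim$ steps can be rearranged so that all negations are collected together (and cancel in pairs, since $\ang{\mathbf b}$ has order two) and all permutations are composed into one. This is the only potential obstacle, and it is really just bookkeeping — no new idea is required beyond the two lemmas already in hand.
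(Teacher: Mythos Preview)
Your proposal is correct and matches the paper's intended argument: the paper states the corollary without an explicit proof, simply noting ``Combined we get the invariance under similarity,'' relying on the two preceding lemmas together with the commutation remark after \autoref{def:sim} that any $X \sim Y$ factors as $X \permsim T \negsim Y$. Your write-up just spells this out, and even the commutation justification you offer is the one the paper gestures at.
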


Next we are going to combine set inclusion and similarity.

\begin{definition}[Subsumption] \label{def:subsume}
	Let $\subs$ be the preorder generated by inclusion ($\subseteq$) and similarity ($\sim$).
	When $X \subs Y$ for $X, Y \subseteq B^n$ we say that $X$ \emph{subsumes} $Y$.
\end{definition}

With this definition, invariance under similarity and monotonicity with respect to set inclusion combine to monotonicity with respect to subsumption.

\begin{corollary}[Minimal Size Subsumption Monotonicity] \label{fact:subsume}
	For all subsuming $X \subs Y$, the minimal sizes are related by
	$s(X) \le s(Y)$. \qed
\end{corollary}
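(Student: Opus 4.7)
The plan is to unfold the definition of the preorder $\subs$ and apply the already-established monotonicity and invariance facts along a witnessing chain. By \autoref{def:subsume}, $\subs$ is the preorder generated by $\subseteq$ and $\sim$, so whenever $X \subs Y$ there exists a finite chain $X = Z_0, Z_1, \ldots, Z_k = Y$ in which each consecutive pair satisfies either $Z_{i-1} \subseteq Z_i$ or $Z_{i-1} \sim Z_i$.

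I would then proceed by induction on the length $k$ of such a chain. The base case $k = 0$ gives $X = Y$, for which $s(X) = s(Y)$ holds trivially. For the inductive step, assume $s(Z_0) \le s(Z_{k-1})$. The final link is handled by one of two cases: if $Z_{k-1} \subseteq Z_k$, then $s(Z_{k-1}) \le s(Z_k)$ by \autoref{fact:mono}; if $Z_{k-1} \sim Z_k$, then $s(Z_{k-1}) = s(Z_k)$ by \autoref{fact:sim}. In both cases $s(Z_{k-1}) \le s(Z_k)$, which combines with the inductive hypothesis to give $s(X) \le s(Y)$, as required.

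There is essentially no obstacle here. The entire content of the corollary is the elementary observation that a function which is monotone with respect to one generating relation and invariant under the other is monotone with respect to the preorder they generate; the preceding lemmas do all the real work.
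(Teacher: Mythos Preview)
Your proposal is correct and is exactly what the paper intends: the corollary is stated with a bare \qed, indicating it follows immediately from \autoref{fact:mono} and \autoref{fact:sim} by the obvious chain argument you have spelled out. There is nothing to add.
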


As permutations and Boolean negation both act elementwise on a sequence set, $\sim$ and $\subseteq$ commute under composition of relations. Thus given $X \subs Y$ we can always find a $T$ such that $X \sim T \subseteq Y$.

A very similar notion of subsumption and a corresponding bound on the minimal sorting network size was used by \citeauthor*{codishSortingNineInputs2016} \cite{codishSortingNineInputs2016} to first compute $s(9)$.
Apart from the difference that we define it using partial sorting networks and that we include Boolean negation, they are identical.
A similar idea, used to compute the minimal depth of a sorting network, appears in \cite{bundalaOptimalSortingNetworks2014}.

Next we show that sorting a sufficiently large set of input sequences requires at least one comparator.

\begin{lemma}[Pigeon Hole Bound] \label{fact:nonempty}
	For all sequence sets $X \subseteq B^n$ with $\lvert X \rvert > n + 1$, we have a minimal size $s(X) \ge 1$ for sorting.
\end{lemma}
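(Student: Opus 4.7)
The plan is to argue by contradiction: assume $s(X) = 0$, i.e.\ that there is a partial sorting network $c$ on $X$ containing no comparators, and derive a cardinality contradiction.

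First I would observe that a comparator network with zero comparators consists entirely of exchanges, hence lies in $E_n^*$, so its action coincides with that of some permutation $\sigma \in S_n$. Thus the hypothesis $X^c = X^{\mathbf s}$ becomes $X^\sigma = X^{\mathbf s}$, and since $\sigma$ acts as a bijection on $B^n$ the cardinalities satisfy $|X| = |X^\sigma| = |X^{\mathbf s}|$.

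Next I would count the sorted Boolean sequences of length $n$. Every such sequence has the form consisting of some number $k \in \{0, 1, \ldots, n\}$ of leading zeros followed by $n-k$ trailing ones, so there are exactly $n+1$ sorted sequences in $B^n$. Consequently $|X^{\mathbf s}| \le n+1$, because $X^{\mathbf s}$ is a subset of this $(n+1)$-element set of sorted sequences.

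Combining these two observations yields $|X| = |X^{\mathbf s}| \le n+1$, contradicting the hypothesis $|X| > n+1$. Hence no comparator network of size $0$ sorts $X$, and $s(X) \ge 1$. The argument is essentially a pigeonhole count, and the only mild subtlety worth flagging explicitly is that a size-$0$ network need not be the empty sequence $\eps$ but may still contain exchanges; treating its action as an arbitrary permutation of $B^n$ handles this cleanly.
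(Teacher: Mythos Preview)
Your proof is correct and follows essentially the same approach as the paper: a size-zero comparator network acts as a permutation (hence a bijection on $B^n$), while there are only $n+1$ sorted Boolean sequences, so a set with more than $n+1$ elements cannot be mapped into the sorted sequences. Your explicit remark that a size-zero network may still contain exchanges is a nice clarification the paper leaves implicit.
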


\begin{proof}
	Let $c \in E_n^*$ be an arbitrary comparator network without comparators.
	Application of $c$ is the same as application of a fixed permutation.
	Hence, for a set of input sequences $X$ and corresponding output sequences $Y$, we have $\lvert X \rvert=\lvert Y \rvert$.
	As there are only $|B^{n\mathbf s}| = n+1$ sorted Boolean sequences of length $n$ the network $c$ cannot sort a larger set of input sequences, including $X$. Thus a partial sorting network on such an $X$ requires at least one comparator.
\end{proof}

We can compute $s(X)$ exactly by considering all possible choices for the first comparator of a network, which must include the first comparator of each minimal size sorting network on $X$.

\begin{lemma}[Initial Comparator Recurrence] \label{fact:basicsucc}
	Given a sequence set $X \subseteq B^n$ with $s(X) \ge 1$, the minimal size for sorting $X$ is $s(X) = 1 + \min\,\{\, s(X^{[i,j]}) \mid [i, j] \in D_n \,\}$.
\end{lemma}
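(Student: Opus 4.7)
The plan is to prove the two inequalities separately, using the standard-form result for the harder direction.

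For the upper bound $s(X) \le 1 + \min\{\, s(X^{[i,j]}) \mid [i,j] \in D_n\,\}$, I would fix any $[i,j] \in D_n$ and let $c$ be an optimal partial sorting network on $X^{[i,j]}$, so $s(c) = s(X^{[i,j]})$. Then $[i,j]c$ is a comparator network on $n$ channels whose action sends $X$ first to $X^{[i,j]}$ and then to $(X^{[i,j]})^c = (X^{[i,j]})^{\mathbf s} = X^{\mathbf s}$ (since applying $[i,j]$ only permutes the entries of each sequence, preserving its sorted version). Hence $[i,j]c$ is a partial sorting network on $X$ of size $1 + s(X^{[i,j]})$, giving the bound after taking the minimum over all $[i,j] \in D_n$.

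For the lower bound $s(X) \ge 1 + \min\{\, s(X^{[i,j]}) \mid [i,j] \in D_n\,\}$, I would start from an arbitrary partial sorting network $c$ on $X$ achieving the minimal size $s(X)$. By the standard-form result of Floyd and Knuth (\autoref{fig:stdrules}), I can assume $c \in D_n^* S_n$ without changing its size. Since $s(X) \ge 1$, the prefix of standard comparators is non-empty, so $c = [i,j]\, c'$ for some $[i,j] \in D_n$ and some $c' \in D_n^* S_n$ with $s(c') = s(X) - 1$. Applying $c$ to $X$ gives $X^{\mathbf s} = X^c = (X^{[i,j]})^{c'}$, so $c'$ is a partial sorting network on $X^{[i,j]}$ of size $s(X) - 1$. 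Thus $s(X^{[i,j]}) \le s(X) - 1$, and in particular $\min\{\, s(X^{[i,j]}) \mid [i,j] \in D_n\,\} \le s(X) - 1$, which rearranges to the desired inequality.

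The only subtle step is justifying that the first operation in the optimal network can be assumed to be a standard comparator; this is handled cleanly by appealing to the standard-form normalization established earlier, which is size-preserving and thus does not inflate $s(X)$. Combining the two inequalities yields the stated equality. No further case analysis is required, so the main (mild) obstacle is simply ensuring that the minimum on the right-hand side is well-defined, which follows from $D_n$ being finite and non-empty for $n \ge 2$ (the case $n \le 1$ being vacuous because $s(X) \ge 1$ would fail).
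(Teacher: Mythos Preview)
Your proof is correct and follows essentially the same approach as the paper: both directions are argued separately, with the upper bound obtained by prepending a comparator to an optimal network on $X^{[i,j]}$, and the lower bound obtained by passing to standard form and peeling off the first comparator. The only cosmetic difference is that the paper additionally notes that the suffix $r$ is itself optimal for $X^{[k,l]}$ (yielding $s(X) = 1 + s(X^{[k,l]})$ for that particular $[k,l]$), whereas you stop at the inequality $s(X^{[i,j]}) \le s(X) - 1$, which is already enough.
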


\begin{proof}
	Given a comparator $[i, j]$, for every partial sorting network $c$ on $X^{[i,j]}$ of size $m$, we have the partial sorting network $[i, j]c$ on $X$ of size $m + 1$.
	By choosing a minimal size $c$, we get the upper bound $s(X) \le 1 + \min\,\{\, s(X^{[i,j]}) \mid [i, j] \in D_n \,\}$.

	For a lower bound, let $d$ be a partial sorting network on $X$ in standard form of minimal size. As we require $s(X) \ge 1$, we know that $d$ has at least one comparator, and we have $d = [k, l]r$ for some comparator $[k, l]$ and a remaining suffix $r$.
	Now $r$ is a partial sorting network on $X^{[k,l]}$.
	As $d$ is of minimal size for sorting $X$, so is $r$ for sorting $X^{[k,l]}$ since all shorter partial sorting networks on $X^{[k,l]}$ could be extended to sort $X$ by prepending $[k, l]$.
	Thus $s(X) = 1 + s(X^{[k, l]})$.

	As $[k, l] \in D_n$, we get $s(X) = 1 + s(X^{[k, l]}) \ge 1 + \min\,\{\, s(X^{[i,j]}) \mid [i, j] \in D_n \,\}$ matching the lower bound.
\end{proof}

We can slightly improve this by excluding some possible choices for the first comparator.
For some sets $X$ with $s(X) \ge 1$ and some comparators $[k, l]$ we can observe that $X^{[k, l]} \permsim X$ will hold.
Here $[k, l]$ is called a \emph{redundant comparator}, which can be replaced by an exchange and thus is never part of a minimal size comparator network (see Exercise 5.3.4.15 of \cite{knuthArtComputerProgramming1998}, credited to R.~L.~Graham).
This fact is also used in the \enquote{Generate} step of \cite{codishSortingNineInputs2016}.

\begin{definition}[Successors]
	Given a sequence set $X \subseteq B^n$, the \emph{successors} of $X$ are the sequence sets $\{\, X^{[i,j]} \mid [i, j] \in D_n,\, X^{[i,j]} \notpermsim X \,\}$ and are denoted by $\mathcal U(X)$.
\end{definition}

\begin{lemma}[Successor Recurrence] \label{fact:succ}
	Given a sequence set $X \subseteq B^n$ with $s(X) \ge 1$, the minimal size is
	$s(X) = 1 + \min\,\{\, s(Y) \mid Y \in \mathcal U(X) \,\}$.
\end{lemma}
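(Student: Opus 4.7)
My plan is to reduce this directly to the Initial Comparator Recurrence (\autoref{fact:basicsucc}) and then argue that the minimum appearing there cannot be attained by a standard comparator whose action is permutation-similar to the identity on $X$, so the minimum is already achieved inside $\mathcal{U}(X)$.

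First, I would invoke \autoref{fact:basicsucc} to obtain
\[
s(X) \;=\; 1 + \min\,\{\, s(X^{[i,j]}) \mid [i,j] \in D_n \,\}.
\]
Since $\{\, s(Y) \mid Y \in \mathcal{U}(X) \,\} \subseteq \{\, s(X^{[i,j]}) \mid [i,j] \in D_n \,\}$ (as a set of values), the inequality $s(X) \le 1 + \min\,\{\, s(Y) \mid Y \in \mathcal{U}(X) \,\}$ will follow once we know $\mathcal{U}(X)$ is nonempty.

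For the matching lower bound, let $[k,l] \in D_n$ attain the minimum in the basic recurrence, so $s(X) = 1 + s(X^{[k,l]})$. I would then argue that $X^{[k,l]} \notpermsim X$, hence $X^{[k,l]} \in \mathcal{U}(X)$. Suppose otherwise that $X^{[k,l]} \permsim X$. By \autoref{fact:permute} we would get $s(X^{[k,l]}) = s(X)$, yielding $s(X) = 1 + s(X)$, which is impossible since $s(X)$ is finite (bounded above by $s(n)$ via \autoref{fact:mono}). This contradiction establishes $X^{[k,l]} \in \mathcal{U}(X)$, which simultaneously shows $\mathcal{U}(X) \neq \emptyset$ and that the minimum over $\mathcal{U}(X)$ is at most $s(X^{[k,l]}) = s(X) - 1$.

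Combining the two directions gives the claimed equality. There is no real obstacle here; the only subtlety is remembering that the finiteness of $s(X)$ (needed to rule out $s(X) = 1 + s(X)$) is guaranteed by the hypothesis together with \autoref{fact:mono}, and that redundancy of $[k,l]$ in the sense $X^{[k,l]} \permsim X$ is precisely what lets permutation invariance turn the would-be minimizer into an immediate contradiction.
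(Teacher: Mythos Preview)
Your proposal is correct and follows essentially the same route as the paper: both start from the Initial Comparator Recurrence (\autoref{fact:basicsucc}), then use permutation invariance (\autoref{fact:permute}) to argue that a minimizer $[k,l]$ with $X^{[k,l]} \permsim X$ would force $s(X) = 1 + s(X)$, a contradiction, so the minimum is already attained in $\mathcal U(X)$. Your write-up is a bit more explicit about why the contradiction is genuine (finiteness of $s(X)$ via \autoref{fact:mono}) and about $\mathcal U(X)\neq\emptyset$, but the underlying argument is the same.
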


\begin{proof}
	When $X^{[k,l]} \permsim X$, there is a permutation $\sigma$ such that $X^{[k, l]} = X^\sigma$.
	In that case we have $s(X) = s(X^{[k,l]})$ by the permutation invariance (\autoref{fact:permute}).
	Effectively this means that $[k, l]$ could either be removed or replaced by the exchange $(k, l)$.

	At the same time, by the initial comparator recurrence (\autoref{fact:basicsucc}) we also have $s(X) = 1 + \min\,\{\, s(X^{[i,j]}) \mid [i, j] \in D_n \,\}$.
	This leads to a contradiction if $s(X^{[k,l]})$ were minimal among the $s(X^{[i,j]})$ with $[i, j] \in D_n$.

	Therefore, we can exclude these comparators from the minimum and get $s(X) = 1 + \min\,\{\, s(X^{[i, j]}) \mid [i, j] \in D_n,\, X^{[i, j]} \notpermsim X \,\} = 1 + \min\, \{\, s(Y) \mid Y \in \mathcal U(X) \,\}$.
\end{proof}

\subsection{Van Voorhis's Bound}

A very powerful lower bound for the size of sorting networks is due to Van Voorhis \cite{vanvoorhisImprovedLowerBound1972}.
It states that $s(n) \ge s(n-1) + \lceil \log_2 n \rceil$ for all $n \ge 1$.
This bound is strict for at least $n = 2, 3, 4, 6, 8, 10$, and as we will see also for $n = 12$.
These are more than half of the currently known values of $s(n)$.
Motivated by this we will later generalize it to partial sorting networks.

Van Voorhis's bound is obtained by showing that a sorting network on $n$ channels can be used to construct a sorting network on $n-1$ channels that is sufficiently smaller.
Such a network is constructed by tracing a path taken by a maximal input value and removing all comparators on that path.

\begin{definition}[Pruned Comparator Networks] \label{def:pruning}
	Given an $n$-channel comparator network $c$ and one of its channels $i \in [n]$,
	the \emph{pruned comparator network}, denoted by $c/i$, is a comparator network on $n - 1$ channels that we obtain by removing a path through the comparator network's circuit as described by the procedure below. We write $\delta(c, i)$ for the number $s(c) - s(c/i)$ of comparators removed while performing this procedure.

	\begin{figure}
		\centering
\begin{tabular}{c}
\begin{tikzpicture}[
    line width=0.66pt,
    nscomp/.style={Circle[length=1.5mm,sep=-0.75mm]}-{Stealth[length=2.5mm, inset=0.5mm]},
    scomp/.style={Circle[length=1.5mm,sep=-0.75mm]}-{Circle[length=1.5mm,sep=-0.75mm]},
]

    \tikzmath { \xs = 4.5 / 20;}

    \newcommand{\cbend}[1]{-- +(0.2, 0) .. controls +(1, 0) and +(-1, 0) .. +(1.8, #1) -- ++(2, #1)}
    \newcommand{\cstep}[1]{-- ++(2 * #1, 0)}

    \newcommand{\nscomp}[1]{edge[shift={(1, 0)}, nscomp] +(0, #1)}
    \newcommand{\scomp}[1]{edge[shift={(1, 0)}, scomp] +(0, #1)}

    \begin{scope}[
        y=-0.66cm,x=\xs cm,xshift=0,
    ]

        \path[line width=8pt, draw=red!30!white]
            (0,0) --
            (2, 0) --
            (2, 2) --
            (8, 2) --
            (8, 3) --
            (12, 3)
            ;

        \draw (0, 0)
            \cstep{0.5}
            \scomp{2}
            \cstep{2}
            \scomp{1}
            \cstep{3}
            \cstep{0.5}

            (0, 1)
            \cstep{0.5}
            \cstep{1}
            \scomp{2}
            \cstep{3}
            \scomp{1}
            \cstep{1}
            \cstep{0.5}

            (0, 2)
            \cstep{0.5}
            \cstep{3}
            \scomp{1}
            \cstep{2}
            \cstep{0.5}

            (0, 3)
            \cstep{0.5}
            \cstep{5}
            \cstep{0.5}
        ;

    \end{scope}
\end{tikzpicture}
\end{tabular}
$\Rightarrow$
\begin{tabular}{c}
\begin{tikzpicture}[
    line width=0.66pt,
    nscomp/.style={Circle[length=1.5mm,sep=-0.75mm]}-{Stealth[length=2.5mm, inset=0.5mm]},
    scomp/.style={Circle[length=1.5mm,sep=-0.75mm]}-{Circle[length=1.5mm,sep=-0.75mm]},
]

    \tikzmath { \xs = 4.5 / 20;}

    \newcommand{\cbend}[1]{ .. controls +(1, 0) and +(-1, 0) ..  ++(2, #1)}
    \newcommand{\cstep}[1]{-- ++(2 * #1, 0)}

    \newcommand{\nscomp}[1]{edge[shift={(1, 0)}, nscomp] +(0, #1)}
    \newcommand{\scomp}[1]{edge[shift={(1, 0)}, scomp] +(0, #1)}

    \begin{scope}[
        y=-0.66cm,x=\xs cm,xshift=0,
    ]

        \path[line width=8pt, draw=red!30!white]
            (0,0) --
            (2, 0) --
            (2, 2) --
            (8, 2) --
            (8, 3) --
            (12, 3)
            ;

        \draw

            (0, 1)
            \cstep{0.5}
            \cstep{1}
            \scomp{2}
            \cstep{3}
            \scomp{1}
            \cstep{1}
            \cstep{0.5}

            (0, 2)
            \cstep{0.5}
            \cbend{-2}
            \cstep{1}
            \scomp{1}
            \cstep{3}
            \cstep{0.5}

            (0, 3)
            \cstep{0.5}
            \cstep{3}
            \cbend{-1}
            \cstep{1}
            \cstep{0.5}
        ;

    \end{scope}
\end{tikzpicture}
\end{tabular}
$\Rightarrow$
\begin{tabular}{c}
\begin{tikzpicture}[
    line width=0.66pt,
    nscomp/.style={Circle[length=1.5mm,sep=-0.75mm]}-{Stealth[length=2.5mm, inset=0.5mm]},
    scomp/.style={Circle[length=1.5mm,sep=-0.75mm]}-{Circle[length=1.5mm,sep=-0.75mm]},
]

    \tikzmath { \xs = 4.5 / 20;}

    \newcommand{\cbend}[1]{ .. controls +(1, 0) and +(-1, 0) ..  ++(2, #1)}
    \newcommand{\cstep}[1]{-- ++(2 * #1, 0)}

    \newcommand{\nscomp}[1]{edge[shift={(1, 0)}, nscomp] +(0, #1)}
    \newcommand{\scomp}[1]{edge[shift={(1, 0)}, scomp] +(0, #1)}

    \begin{scope}[
        y=-0.66cm,x=\xs cm,xshift=0,
    ]

        \draw

            (0, 0)
            \cstep{0.5}
            \cbend{1}
            \scomp{1}
            \cstep{3}
            \scomp{1}
            \cstep{1}
            \cstep{0.5}

            (0, 1)
            \cstep{0.5}
            \cbend{-1}
            \cstep{1}
            \scomp{1}
            \cstep{3}
            \cstep{0.5}

            (0, 2)
            \cstep{0.5}
            \cstep{5}
            \cstep{0.5}
        ;

    \end{scope}
\end{tikzpicture}%
\end{tabular}
		\captionof{figure}{Pruning the first channel of a sorting network}
		\label{fig:prune}
	\end{figure}

	To obtain $c/i$, we start with the circuit for $c$ and remove all wires along the path starting at input $i$ following the $\mathbf{max}$ output of every comparator encountered and ending at an output $j$ of the circuit.
	After removing these wires, the comparator gates on the path are left with a single incoming and a single outgoing wire each.
	We remove each such gate together with its adjacent wires and insert a shortcutting wire connected the two ports from which we just removed wires.
	Finally, we remove the input $i$ and output $j$ which are now isolated, keeping the order of the remaining inputs and outputs. See \autoref{fig:prune} for an example.
\end{definition}

To analyze a pruned comparator network we will need the following two definitions.

\begin{definition}[Pruned Sequences]
	For a sequence $x \in \Sigma^n$ and a channel $i \in [n]$, the \emph{pruned sequence} $x/i$ is obtained by removing the channel $i$ from $x$, resulting in $(x_1, \ldots, x_{i-1}, x_{i+1}, \ldots, x_n) \in \Sigma^{n - 1}$.
\end{definition}

\begin{definition}[One-Hot Sequences]
	For a channel $i \in [n]$ the \emph{one-hot} sequence $\ohot in$ is the unique sequence of length $n$ for which $\ohot in / i = (0, \ldots, 0)$ and $\ohot in_i = 1$.
\end{definition}

First we will show that the resulting smaller network on $n - 1$ channels is indeed a sorting network.

\begin{lemma}[Pruned Sorting Networks] \label{fact:prunedsorts}
	Given an $n$-channel sorting network $c$, the comparator network $c/i$ is a sorting network for all channels $i \in [n]$.
\end{lemma}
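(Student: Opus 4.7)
The plan is to show that for every $y \in B^{n-1}$, the pruned network $c/i$ outputs the sorted sequence $y^\mathbf s$. I will compare the execution of $c/i$ on $y$ with the execution of $c$ on the lifted input $x \in B^n$ obtained by inserting $x_i = 1$ and placing $y$ on the remaining channels. First I pin down where the pruning path exits: feeding the one-hot sequence $\ohot{i}{n}$ into $c$ gives the sorted sequence $\ohot{n}{n}$, and tracing the single 1 through the circuit shows that at every comparator it reaches, its partner is a 0, so it exits on the max output. Thus the trajectory of this 1 coincides with the pruning path, and since the 1 ends up at channel $n$ of the output, the pruning path terminates at output $j = n$.

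Next I will argue that on the lifted input $x$, every wire along the pruning path carries the value 1 throughout $c$'s execution: an easy induction along the path, since the initial segment has value $x_i = 1$ and each successive segment is a max output of the form $\max(1, \cdot) = 1$. Consequently at each pruning-path comparator the non-path input $v$ is paired with a 1, so the min output receives $\min(1, v) = v$, meaning that the comparator acts as the identity on the non-path wire. This is exactly the behaviour of the shortcut that the pruning procedure installs in $c/i$. Every comparator outside the pruning path is shared between $c$ and $c/i$, so an induction along the circuit's topological order shows that the non-path wires of $c$ on $x$ and the wires of $c/i$ on $y$ carry identical values.

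Finally, since $c$ sorts $x$ and $x$ has one more 1 than $y$, the sorted output $x^\mathbf s$ has the ``extra'' 1 at channel $n$ (the exit of the pruning path), so removing channel $n$ yields $y^\mathbf s$; by the previous step this equals $y^{c/i}$, giving the result. The main obstacle is the alignment in the second paragraph: formalising the wire-level correspondence between $c$ and $c/i$ under the input/output renumbering induced by pruning, and checking that the inserted shortcut indeed implements the identity action that a pruning-path comparator performs on its non-path value once a 1 sits on the other side. Once this bookkeeping is settled, the conclusion follows by local checks at each comparator.
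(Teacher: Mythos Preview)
Your proof is correct and follows essentially the same approach as the paper: identify the pruning path with the trajectory of the $1$ in $\ohot{i}{n}$ to see it ends at output $n$, argue that for any lifted input $x$ with $x_i=1$ the path carries a constant $1$ so each on-path comparator passes its off-path input straight through, and conclude $(x/i)^{c/i} = x^c/n$ is sorted. The only cosmetic difference is that the paper invokes monotonicity of comparator operations (comparing $x \ge \ohot{i}{n}$ pointwise) where you run an explicit induction along the path; both arguments establish the same fact.
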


\begin{proof}
	The path removed when pruning $c$ is exactly the path obtained by applying $c$ to $\ohot in$ and following the value $1$ through the network's circuit.
	Since $c$ is a sorting network, it will produce the output $\ohot nn$, the only sorted one-hot sequence, and the pruned path will end in channel $n$.

	Furthermore, as all operations in a comparator network are monotone, the pruned path
	will have a constant value of $1$ for all input sequences $x$ with $x_i = 1$.
	This means that each comparator gate on the path will pass the value of the input that is not on the path to the output that is also not on the path, matching the new connections in the pruned network.

	From this we can see that given an input sequence $x \in B^n$ with $x_i = 1$ we get the same result whether we apply $c$ and then prune the output sequence, or whether we prune the input sequence and then apply $c/i$, so we have $x^c/i = (x/i)^{c/i}$.
	As every input sequence $y \in B^{n - 1}$ has such a corresponding $x$ with $x_i = 1$ and $x/i = y$ and every $x^c/i$ is sorted, we can conclude that every $y^{c/i}$ is also sorted.
	Thus $c/i$ is a sorting network.
\end{proof}

\begin{lemma}[Van Voorhis \cite{vanvoorhisImprovedLowerBound1972}] \label{fact:vanvoorhis}
	For all $n \ge 1$ the size of an $n$-channel sorting network is bounded by $s(n) \ge s(n-1) + \lceil \log_2 n \rceil$.
\end{lemma}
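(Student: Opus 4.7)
The plan is to leverage the pruning construction from \autoref{fact:prunedsorts}. For any optimal $n$-channel sorting network $c$ (so $s(c) = s(n)$) and any channel $i$, $c/i$ is an $(n-1)$-channel sorting network, so $s(n-1) \le s(c/i) = s(c) - \delta(c,i)$. It therefore suffices to exhibit some $i \in [n]$ with $\delta(c,i) \ge \lceil \log_2 n \rceil$.

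From the proof of \autoref{fact:prunedsorts}, $\delta(c,i)$ counts the comparators on the forward path from input port $i$ that always takes the $\mathbf{max}$ output, and this path ends at output port $n$. I would reverse these $n$ paths and take their union to obtain a rooted tree $T$ based at output $n$: each internal node is a comparator whose max output lies on some reversed path, and its two children are the wires entering its two input ports. Input $i$ appears as a distinct leaf of $T$ at depth $\delta(c,i)$.

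The remaining bookkeeping is that $T$ really contains all $n$ circuit inputs as distinct leaves. To see this I would assign to each wire $w$ the set $S_w \subseteq [n]$ of inputs whose max-path traverses $w$ and show inductively that $|S_g| = |S_{u_1}| + |S_{u_2}|$ for any comparator $g$ on $T$ with input wires $u_1, u_2$ (disjointness holds because a given max-path enters $g$ through exactly one of its two input ports). Since the wire into output $n$ has $|S| = n$ while each leaf contributes $1$ if it is a circuit input and $0$ if it is a min-output wire of another gate, exactly the $n$ circuit inputs appear as the max-contributing leaves.

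A rooted binary tree with $n$ leaves has depth at least $\lceil \log_2 n \rceil$, so some input $i$ satisfies $\delta(c,i) \ge \lceil \log_2 n \rceil$ as required. The main obstacle is the wire bookkeeping in the previous paragraph: min-output wires must be included among the children at each comparator to keep $T$ binary, while being carefully excluded from the leaf-count for circuit inputs; once this is handled cleanly, the depth bound on $T$ and hence the claimed inequality are immediate.
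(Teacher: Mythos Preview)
Your proposal is correct and follows the same route as the paper: take the union of all max-paths from the $n$ inputs (each ending at output $n$), read off a binary tree with the inputs as leaves at depths $\delta(c,i)$, and apply the height bound for $n$-leaf binary trees. One minor point: you need not pad $T$ with min-output wires to make it full binary---the paper's tree simply allows internal comparator nodes with a single child, and the bound $\lceil \log_2 n\rceil$ on the height of an $n$-leaf binary tree requires only \emph{at most} two children per node; this sidesteps the bookkeeping you flag as the main obstacle (and the small extra step, in your version, of checking that some circuit-input leaf, not a dummy min-output leaf, attains the tree's height).
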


\begin{proof}
	Let $c$ be an arbitrary minimal size $n$-channel sorting network.
	For every channel $i \in [n]$, we obtain a pruned comparator network $c/i$.
	As each $c/i$ is a sorting network with $n - 1$ channels, it has size $s(c/i) \ge s(n - 1)$.
	Since we removed $\delta(c,i) = s(c) - s(c/i)$ comparators when constructing $c/i$, we get that $s(n) \ge s(n - 1) + \delta(c,i)$ for every channel $i$.
	Combining these bounds we get $s(n) \ge s(n - 1) + \max\,\{\, \delta(c,i) \mid i \in [n] \,\}$.

	As we do not know the values of $\delta(c,i)$, we will use a lower bound for $\max\,\{\, \delta(c,i) \mid i \in [n] \,\}$ independent of $c$.

	\begin{figure}
		\centering
\begin{tabular}{c}
\begin{tikzpicture}[
    line width=0.66pt,
    nscomp/.style={Circle[length=1.5mm,sep=-0.75mm]}-{Stealth[length=2.5mm, inset=0.5mm]},
    scomp/.style={Circle[length=1.5mm,sep=-0.75mm]}-{Circle[length=1.5mm,sep=-0.75mm]},
]

    \tikzmath { \xs = 4.5 / 20;}

    \newcommand{\cbend}[1]{-- +(0.2, 0) .. controls +(1, 0) and +(-1, 0) .. +(1.8, #1) -- ++(2, #1)}
    \newcommand{\cstep}[1]{-- ++(2 * #1, 0)}

    \newcommand{\nscomp}[1]{edge[shift={(1, 0)}, nscomp] +(0, #1)}
    \newcommand{\scomp}[1]{edge[shift={(1, 0)}, scomp] +(0, #1)}
    \newcommand{\xcomp}[1]{edge[shift={(1, 0)}, scomp, draw=blue] +(0, #1)}

    \begin{scope}[
        y=-0.66cm,x=\xs cm,xshift=0,
    ]

        \path[line width=8pt, draw=red!30!white]
            (0, 0) --
            (2, 0) --
            (2, 3) --
            (16, 3) --
            (16, 4)

            (0, 1) --
            (4, 1) --
            (4, 4)

            (0, 2) --
            (12, 2) --
            (12, 4)

            (0, 3) --
            (16, 3)

            (0, 4) --
            (20, 4)
        ;

        \path[line width=1pt, xshift=4.5pt, draw=white]
            (2, 0.5) -- (2, 2.5)
            (4, 1.5) -- (4, 3.5)
            (12, 2.5) -- (12, 3.5)
        ;

        \path[line width=1pt, xshift=-4.5pt, draw=white]
            (2, 0.5) -- (2, 2.5)
            (4, 1.5) -- (4, 3.5)
            (12, 2.5) -- (12, 3.5)
        ;

        \draw (0, 0)
            \cstep{0.5}
            \scomp{3}
            \cstep{2}
            \scomp{2}
            \cstep{2}
            \scomp{1}
            \cstep{5}
            \cstep{0.5}

            (0, 1)
            \cstep{0.5}
            \cstep{1}
            \scomp{3}
            \cstep{2}
            \scomp{2}
            \cstep{3}
            \scomp{1}
            \cstep{3}
            \cstep{0.5}

            (0, 2)
            \cstep{0.5}
            \cstep{5}
            \scomp{2}
            \cstep{3}
            \scomp{1}
            \cstep{1}
            \cstep{0.5}

            (0, 3)
            \cstep{0.5}
            \cstep{7}
            \scomp{1}
            \cstep{2}
            \cstep{0.5}

            (0, 4)
            \cstep{0.5}
            \cstep{9}
            \cstep{0.5}
        ;

        \path
            (-1, 0) node {\strut$1$}
            (-1, 1) node {\strut$2$}
            (-1, 2) node {\strut$3$}
            (-1, 3) node {\strut$4$}
            (-1, 4) node {\strut$5$}

            (2, -0.5) node {\strut$a$}
            (4,  0.5) node {\strut$b$}
            (6, -0.5) node {\strut$c$}
            (8,  0.5) node {\strut$d$}
            (12, 1.5) node {\strut$e$}
            (16, 2.4) node {\strut$f$}
        ;

    \end{scope}
\end{tikzpicture}
\end{tabular}
$\Rightarrow$
\begin{tabular}{c}
\begin{tikzpicture}[inner sep=5pt]
    \node (f) {$f$};
        \node[below of=f, xshift=-1cm] (d) {$d$};
            \node[below of=d] (a) {$a$};
                \node[below of=a, xshift=-0.5cm] (i1) {$1$};
                \node[below of=a, xshift=0.5cm] (i4) {$4$};
        \node[below of=f, xshift=1.25cm] (e) {$e$};
            \node[below of=e, xshift=-0.75cm] (c) {$c$};
                \node[below of=c] (i3) {$3$};
            \node[below of=e, xshift=0.75cm] (b) {$b$};
                \node[below of=b, xshift=-0.5cm] (i2) {$2$};
                \node[below of=b, xshift=0.5cm] (i5) {$5$};

    \draw
        (f)
            edge (d)
            edge (e)
        (d)
            edge (a)
        (e)
            edge (c)
            edge (b)
        (a)
            edge (i1)
            edge (i4)
        (c)
            edge (i3)
        (b)
            edge (i2)
            edge (i5)
    ;
\end{tikzpicture}
\end{tabular}
		\captionof{figure}{Binary tree of pruned paths}
		\label{fig:prune_tree}
	\end{figure}

	For this we observe that $\delta(c,i)$ is the number of comparators on the pruned path when constructing $c/i$ (\autoref{def:pruning}).
	In the proof of \autoref{fact:prunedsorts} we saw that every pruned path ends in the output channel $n$.
	Thus, if we take the union of the pruned paths for all $i$ and remove the common output $n$, we obtain a binary tree rooted in the comparator gate connected to output $n$, where the leaves are all inputs of $c$ and all inner vertices are comparators gates (see \autoref{fig:prune_tree}).

	In this tree, the leaf for every input $i$ has a depth of $\delta(c,i)$, as the number of comparators gates on the pruned path is exactly the length of the path from the leaf to the root.
	Now $ \max\,\{\, \delta(c,i) \mid i \in [n] \,\}$ is the maximal depth among the leafs, which is, by definition, the height of the tree.
	As a binary tree with $n$ leaves has a height of at least $\lceil \log_2 n \rceil$ we now have a suitable lower bound.

	Replacing $ \max\,\{\, \delta(c,i) \mid i \in [n] \,\}$ with this lower bound we obtain $s(n) \ge s(n-1) + \lceil \log_2 n \rceil$.
\end{proof}

\subsection{Generalizing Van Voorhis's bound}

Given how effective Van Voorhis's bound is, we wish to obtain a similar bound for partial sorting networks, in the hope that this enables use of such a bound as part of a search procedure.

For sorting networks, we used the fact that we can sort all sequences of length $n - 1$ using a sorting network on $n$ channels by inserting a value of $1$ at a fixed channel $i$ of the input.
For a partial sorting network, this only works for some sequences of length $n - 1$.
To show that the pruned network matches the original network on the remaining inputs and outputs, we used that $\ohot in$ will be sorted by all sorting networks and then used the monotonicity of the sorting network operations.
Again for a partial sorting network this is not necessarily the case for every channel $i$.

We work around this by defining a suitable set of sequences to be sorted by the pruned partial sorting network and by restricting the set of inputs that we prune.

\begin{definition}[Pruned Sequence Sets] \label{def:prunedset}
	For a given sequence set $X$, the \emph{pruned sequence set} $\{\, x/i \mid x \in X,\, x_i = 1 \,\}$ is denoted by $X/i$.
\end{definition}

\begin{definition}[Prunable Channels] \label{def:prunable}
	A channel $i \in [n]$ is a \emph{prunable channel} of $X \subseteq B^n$ if $X$ contains the one-hot sequence $\ohot in$. The set of prunable channels of $X$ is denoted by $p(X)$.
\end{definition}

\begin{lemma}[Pruned Partial Sorting Networks] \label{fact:prunedpartialsorts}
	Given a partial sorting network $c$ on a sequence set $X \subseteq B^n$, for every prunable channel $i \in p(X)$, the comparator network $c/i$ is a partial sorting network on $X/i$.
\end{lemma}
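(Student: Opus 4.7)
The plan is to essentially replay the proof of \autoref{fact:prunedsorts}, but restrict attention to the subset of inputs that the hypotheses actually cover. The prunability of $i$ is exactly what gives us a replacement for the universal anchor that was previously provided by the fact that $c$ sorts everything. Concretely, because $i \in p(X)$ we have $\ohot{i}{n} \in X$, and since $c$ is a partial sorting network on $X$ we get $(\ohot{i}{n})^c = \ohot{n}{n}$. Following the value $1$ from input $i$ through the circuit therefore produces the very path described in \autoref{def:pruning}, ending in output channel $n$. So the pruning operation is well-defined and matches the circuit structure that $c/i$ was built from.

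The key lemma-within-the-proof is that for every $x \in X$ with $x_i = 1$ we have $x^c / i = (x/i)^{c/i}$. I would prove this exactly as in the sorting-network case, by induction along the pruned path: at each comparator gate on the path the on-path input carries value $1$, so the max output (on path) is $1$ and the min output (off path) equals the off-path input value. Thus the off-path input-to-output connection through the gate agrees with the shortcutting wire that replaces the gate in $c/i$, and all gates not on the path are unchanged. This is purely a monotonicity-and-bookkeeping argument about a single fixed circuit, so no condition on $X$ other than $x_i = 1$ is needed for it.

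To finish, I would combine these ingredients. Let $y \in X/i$. By \autoref{def:prunedset} there is some $x \in X$ with $x_i = 1$ and $x/i = y$. Since $c$ is a partial sorting network on $X$, $x^c$ is sorted, and since it contains at least one $1$ (with the $1$ originating at channel $i$ ending up at channel $n$), removing channel $n$ leaves a sorted Boolean sequence, so $x^c / i$ is sorted. By the previous paragraph $y^{c/i} = (x/i)^{c/i} = x^c / i$, hence $y^{c/i}$ is sorted. As this holds for every $y \in X/i$, the network $c/i$ is a partial sorting network on $X/i$.

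The only real obstacle beyond the original Van Voorhis argument is that we no longer have sorting as a universal guarantee, so I must carefully separate the two uses that the earlier proof made of that guarantee: (i) identifying the terminal channel of the path, which is now licensed solely by $\ohot{i}{n} \in X$; and (ii) verifying that the pruned network's action agrees with $c$'s action on the relevant inputs, which is a purely monotonicity statement and needs no sorting hypothesis at all. Keeping these separate is the main conceptual point; everything else is routine.
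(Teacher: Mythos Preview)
Your proposal is correct and follows essentially the same approach as the paper: use $\ohot{i}{n}\in X$ to pin the pruned path's endpoint at channel $n$, then reuse the monotonicity argument from \autoref{fact:prunedsorts} restricted to inputs $x\in X$ with $x_i=1$. The paper's own proof is in fact just a two-sentence pointer to exactly this restriction, so your version is a faithful expansion of it, with the helpful additional clarity of explicitly separating the two roles of the sorting hypothesis.
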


\begin{proof}
	As $\ohot in \in X$ for $i \in p(X)$, we know that applying $c$ to $\ohot in \in X$ results in the output sequence $\ohot nn$ and thus the pruned path starting at channel $i$ will end in channel $n$.
	Now the same argument as used in the proof for sorting networks (\autoref{fact:prunedsorts}), restricted to the input sequences $X/i$, applies.
\end{proof}

We now try to follow the steps in the proof of Van Voorhis's bound starting with these definitions for partial sorting networks.
Let $c$ be a minimal size partial sorting network on $X$. For every prunable channel $i$, we get the bound $s(X) \ge s(X/i) + \delta(c,i)$. If we combine all these bounds, we get
\[
	s(X) \ge \max\, \{\, s(X/i) + \delta(c,i) \mid i \in p(X) \,\}.
\]

At this point we cannot continue as we did for Van Voorhis's bound, as both $s(X/i)$ and $\delta(c,i)$ depend on the channel $i$.
This means where before, we used the height of the binary tree as a lower bound for the maximal $\delta(c,i)$, we now need to find a lower bound for the maximal $s(X/i) + \delta(c,i)$ over all prunable $i \in p(X)$, independent of the given comparator network $c$.

We will again construct the binary tree of all paths that we can prune.
This time, as we have to consider the different values of $s(X/i)$ for different $i \in p(X)$, so we label the leaf for input $i$ with the value $s(X/i)$.
The depth of that leaf is again $\delta(c,i)$.

Now let the \emph{cost} of a leaf be the sum of its value and its depth and the cost of a tree the maximal cost among all its leaves.
By this definition, $\max\, \{\, s(X/i) + \delta(c,i) \mid i \in p(X) \,\}$ is the cost of the binary tree of pruned paths.

\begin{figure}
	\centering
\begin{tabular}{c}
\begin{tikzpicture}[inner sep=5pt, node distance=2.5cm, font=\small]
    \node[circle, draw=black, fill=black, inner sep=1.5pt] (root) {};
        \node[circle, draw=black, below of=root, fill=black, inner sep=1.5pt, xshift=-0.75cm] (left) {};
            \node[below of=left, xshift=-0.75cm] (left_left) {$s(X/1)$};
            \node[below of=left, xshift=0.75cm] (left_right) {$s(X/3)$};
        \node[below of=root, node distance=5cm, xshift=1.5cm] (right) {$s(X/4)$};

    \draw
        (root)
            edge (left)
            edge (right)
        (left)
            edge (left_left)
            edge (left_right)
    ;
\end{tikzpicture}
\end{tabular}
$\!\Rightarrow\!$
\begin{tabular}{c}
\begin{tikzpicture}[inner sep=5pt, font=\small]
    \node[node distance=0.5cm] (root_mid) {$\max\, \{2 + s(X/1),\, 2 + s(X/3),\, 1 + s(X/4)\}$};
    \node[below of=root_mid, node distance=0.5cm, rotate=90] (root_eq2) {$=$};
    \node[below of=root_eq2, node distance=0.5cm] (root) {$1 + \max\, \{\max\,\{1 + s(X/1),\, 1 + s(X/3)\},\, s(X/4)\}$};

    \node[below of=root, node distance=1.5cm, xshift=-0.8cm] (left_top) {$\max\,\{1 + s(X/1),\, 1 + s(X/3)\}$};
    \node[below of=left_top, node distance=0.5cm, rotate=90] (left_eq) {$=$};
    \node[below of=left_eq, node distance=0.5cm] (left) {$1 + \max\, \{s(X/1),\, s(X/3)\}$};

    \node[below of=left, node distance=1.5cm, xshift=-0.5cm] (left_left) {$s(X/1)$};
    \node[below of=left, node distance=1.5cm, xshift=2cm] (left_right) {$s(X/3)$};

    \node[below of=root, node distance=4cm, xshift=3.5cm] (right) {$s(X/4)$};

    \draw
    (root)
        edge (left_top)
    ($(root.south)+(3,0)$)
        edge (right)
    ($(left.south)+(0,0)$)
        edge (left_left)
    ($(left.south)+(1.5,0)$)
        edge (left_right)
    ;

\end{tikzpicture}
\end{tabular}
	\captionof{figure}{Recursive cost computation}
	\label{fig:cost_tree}
\end{figure}
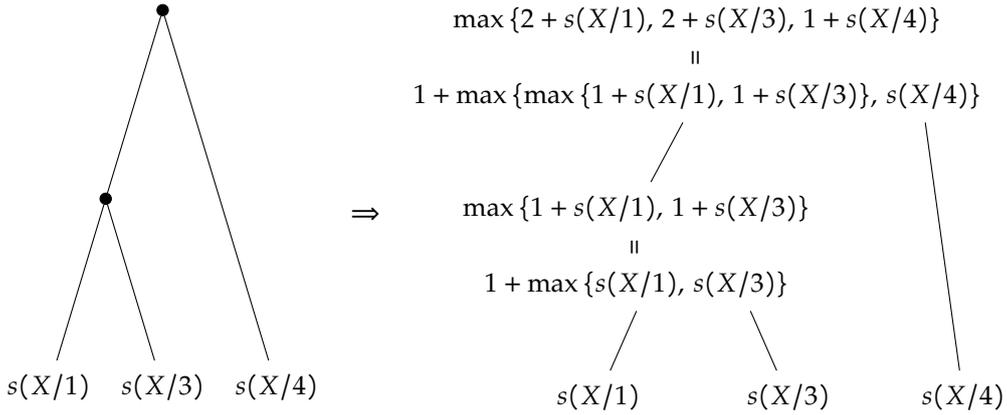

To find a lower bound for this cost, we will first describe how to compute it using the structure of the tree.
We do this using a process that labels the root of every subtree with the subtree's cost.
Then, the cost of each subtree can be computed recursively.
For a singleton tree there is nothing to be done, as the leaf at the root has depth $0$ so the existing leaf label matches the cost.
For a non-singleton tree, in each immediate subtree, all leaves have a depth that is reduced by one compared to their depth in the complete tree.
The same is true for the cost as the leaf values stay the same.
Thus the cost of a tree is one more than the maximum cost of its immediate subtrees.
\autoref{fig:cost_tree} shows an example of a small tree labeled in this way.

It is possible that some inner nodes have only one child.
As we prefer to work with full binary trees, we remove all such nodes by replacing them with their child and relabel the tree using the same procedure.
This can only reduce the tree's cost as it reduces the depth of some leaves while the leaf values stay the same.
In particular, all lower bounds for the resulting full binary tree's cost are also valid for the initial tree.

To find a lower bound for the cost of such a tree, we define a class of trees that includes these.

\begin{definition}[$\star$-Trees]
	Given a commutative binary operation $\star$ and a multiset $L$, a \emph{$\star$-tree on $L$} is a
	a labeled rooted full binary tree with leaf labels $L$ and with each inner node label $x$ given by $a \star b$ where $a$ and $b$ are that inner node's children's labels.
\end{definition}

For the binary operation $1 + \max\,\{a, b\}$ and the multiset $M = \{\, s(X/i) \mid i \in p(X)\,\}_\#$, the tree we constructed above is a $1{+}$max-tree on $M$.

To get a lower bound for $\max\, \{\, s(X/i) + \delta(c,i) \mid i \in p(X) \,\}$, independent of $c$, we use the minimal cost of all $1{+}$max-trees on $M$.
This is analogous to using the minimal height of all binary trees with $n$ leaves when deriving Van Voorhis's bound.

Computing this minimal cost can be done by a generalization of Huffman's algorithm~\cite{knuthHuffmanAlgorithmAlgebra1982}.
This generalization is parameterized over a binary operation $\star$.
The algorithm will build a $\star$-tree on a given multiset $L$ and for a suitable binary operation that $\star$-tree will have the smallest root label among all $\star$-trees on $L$.

For this to hold, the binary operation needs to form a \emph{Huffman algebra}:

\begin{definition}[Huffman Algebra \cite{knuthHuffmanAlgorithmAlgebra1982}] \label{def:huffmanalgebra}
	A \emph{Huffman algebra} $(A, \le, \star)$ is a set $A$ with a total order $\le$ and a binary operation $\star$ satisfying the following axioms:
	\begin{equation*}
		\begin{aligned}
			a &\le a \star b \\
			a \star b &= b \star a \\
			(a \star b) \star (c \star d) &= (a \star c) \star (b \star d) \\
			&\negphantom{(a \star b) \star c}\begin{aligned}
			a \star c           & \le b \star c           & \text{ if } a \le b \\
			(a \star b) \star c & \le a \star (b \star c) & \text{ if } a \le c
		\end{aligned}
		\end{aligned}
	\end{equation*}
\end{definition}

Huffman's algorithm builds a $\star$-tree with a minimal root label by starting with a forest of isolated nodes labeled with the values in $L$, which will be the leaves of the final $\star$-tree.
As long as the forest is not connected, it combines the two $\star$-trees having the smallest values at their root by making them children of a new root node where ties are broken arbitrarily.

Note that if we are only interested in the value of the root, we do not need to completely store the $\star$-trees of the forest. The algorithm only inspects the labels of the roots, so it is enough to store these labels as a multiset, resulting in the following algorithm:

\begin{algo}[Huffman's Algorithm \cite{knuthHuffmanAlgorithmAlgebra1982}]\label{alg:huffman}
	\algorithm {
		function $H_\star\,L\colon$ \AC{
			$V \gets L$

			while $|V|_\# > 1\colon$ \AC{
				$a \gets \min_\le V;\quad b \gets \min_\le (V - \{a\}_\#)$

				$V \gets V -  \{a, b\}_\# + \{a \star b\}_\#$
			}

			return $v \in V$
		}
	}
\end{algo}

\begin{lemma}[Huffman's Algorithm on a Huffman Algebra \cite{knuthHuffmanAlgorithmAlgebra1982}] \label{fact:huffmanalg}
	Given a Huffman algebra $(A, \star, \le)$ and a nonempty finite multiset $L$ with members in $A$, $H_\star\,L$ returns the minimal root value among all $\star$-trees on $L$.
\end{lemma}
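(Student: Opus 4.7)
The plan is to proceed by strong induction on $|L|_\#$. The base case $|L|_\# = 1$ is immediate: the only $\star$-tree on $L$ is a single leaf and $H_\star\,L$ returns exactly that element. For the inductive step with $|L|_\# = k \ge 2$, I would establish two supporting facts and then observe that they fit together with the recursive structure of $H_\star$.

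The first supporting fact is a \emph{reduction lemma}: if $a, b \in L$ and $T$ is a $\star$-tree on $L$ in which the leaves labeled $a$ and $b$ are siblings, then contracting those two leaves into a single leaf labeled $a \star b$ produces a $\star$-tree on $L' = L - \{a, b\}_\# + \{a \star b\}_\#$ with the same root label; conversely, any $\star$-tree on $L'$ expands back to a $\star$-tree on $L$ in which $a, b$ are siblings, again with the same root label. This bijection is purely structural and follows directly from the definition of $\star$-trees.

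The second, and the main obstacle, is the \emph{exchange lemma}: letting $a \le b$ be the two $\le$-smallest elements of $L$, there exists an optimal (minimum root-label) $\star$-tree on $L$ in which $a$ and $b$ occur as sibling leaves. To prove this, I would start with any optimal $\star$-tree $T^\ast$, pick two sibling leaves $c, d$ at a deepest level, and show that replacing them with $a, b$ does not increase the root label. The replacement is carried out in two steps, each moving one of $a, b$ into position; in each step one uses the medial axiom $(x \star y) \star (z \star w) = (x \star z) \star (y \star w)$ to repeatedly rebracket the pair of subtrees along the path between the leaf being moved and its target location, reducing the argument to a single application of the axiom $(a \star b) \star c \le a \star (b \star c)$ when $a \le c$ (together with monotonicity $a \le b \Rightarrow a \star c \le b \star c$), which captures precisely the intuition that a smaller value should sit deeper. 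The fact that $a, b$ are the two smallest entries of $L$, together with the hypothesis that $c, d$ sit at maximum depth, is what supplies the required $\le$-inequalities each time one of these axioms is invoked.

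With these two lemmas, the inductive step is immediate. By the exchange lemma some optimal $\star$-tree has $a, b$ as sibling leaves; by the reduction lemma the minimum root label for $\star$-trees on $L$ equals the minimum root label for $\star$-trees on $L' = L - \{a, b\}_\# + \{a \star b\}_\#$. Since $H_\star\,L$ performs exactly the substitution $V \mapsto V - \{a, b\}_\# + \{a \star b\}_\#$ for the two $\le$-smallest elements of $V$ and then recurses on the smaller multiset, the induction hypothesis applied to $L'$ gives the result. The principal difficulty in this plan is keeping the exchange argument clean: one must check that after rebracketing, the \emph{entire} root label of the modified tree dominates that of the original, not merely the label at the locally modified subtree, and this is where the interplay between the medial axiom, monotonicity, and the conditional sub-associativity axiom has to be orchestrated carefully.
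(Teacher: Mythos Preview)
The paper does not prove this lemma at all: it is stated with a citation to Knuth's 1982 paper \emph{Huffman's Algorithm via Algebra} and left without proof, as a known result imported from the literature. Your proposal is essentially the standard argument (and essentially Knuth's): strong induction on the multiset size, a structural reduction/expansion between $\star$-trees on $L$ and on $L - \{a,b\}_\# + \{a \star b\}_\#$, and an exchange lemma showing the two smallest leaves can be made siblings in some optimal tree. So there is nothing to compare against in the paper itself; your plan reconstructs the cited result along the expected lines.

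One small caution on the exchange step: in the purely algebraic setting the phrase ``sibling leaves at a deepest level'' is a heuristic imported from the classical weighted-path-length proof, and you should be explicit about why depth is the right invariant here. What actually drives the swap is axiom~5, $(a \star b) \star c \le a \star (b \star c)$ for $a \le c$, which says a smaller element prefers to be combined earlier (deeper); together with monotonicity and the medial law this lets you push $a$ down one level at a time along the path toward the chosen sibling pair, and the ``deepest'' choice guarantees you never have to push a small element \emph{up}. Making that inductive descent along the path precise (rather than claiming a single application suffices) is where the bookkeeping lives, as you already flag in your final paragraph.
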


The following corollary allows us to use Huffman's algorithm when given only lower bounds on the leaf values:

\begin{corollary}[Monotonicity of Huffman's Algorithm]\label{fact:huffmanmono}
	Given a Huffman algebra $(A, \star, \le)$ and $a_1, \ldots, a_k \in A$, the result of Huffman's algorithm $H_\star\,\{a_1, \ldots, a_k\}_\#$ is monotone in each of $a_1, \ldots, a_k$.
\end{corollary}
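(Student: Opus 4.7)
The plan is to reduce the corollary to \autoref{fact:huffmanalg}, which gives us an explicit characterization of $H_\star\,L$ as the minimum over all $\star$-trees on $L$ of the root label. Given that characterization, monotonicity in the leaves follows if we can show that, for each fixed tree shape, the root label is monotone in every leaf label.

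First I would extract monotonicity of $\star$ itself from \autoref{def:huffmanalgebra}. The axiom $a \star c \le b \star c$ whenever $a \le b$ gives monotonicity in the left argument, and commutativity ($a \star b = b \star a$) then gives monotonicity in the right argument. From this, a straightforward induction on the depth of a $\star$-tree $T$ shows that if $T$ and $T'$ have the same underlying tree shape and the leaf labels of $T'$ dominate those of $T$ pointwise (under $\le$), then the root label of $T'$ dominates that of $T$.

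Next I would use this to prove the claim one argument at a time. Fix an index $i$ and values $a_i \le a_i'$, and write $L = \{a_1, \ldots, a_i, \ldots, a_k\}_\#$ and $L' = \{a_1, \ldots, a_i', \ldots, a_k\}_\#$. By \autoref{fact:huffmanalg}, there is a $\star$-tree $T'$ on $L'$ whose root label equals $H_\star\,L'$. Pick any leaf of $T'$ carrying the label $a_i'$ and relabel it with $a_i$; this yields a $\star$-tree $T$ on $L$ with the same shape whose leaf labels pointwise dominate those of $T$ in the opposite direction. By the induction above, the root label of $T$ is at most that of $T'$. Hence, again invoking \autoref{fact:huffmanalg},
\[
H_\star\,L \;\le\; \operatorname{root}(T) \;\le\; \operatorname{root}(T') \;=\; H_\star\,L'.
\]
Iterating this argument across the $k$ coordinates establishes monotonicity in each $a_j$ separately.

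The only subtle point is the choice of which leaf of $T'$ to relabel when $L'$ contains several copies of $a_i'$, but since we are working with multisets of leaf labels this choice is immaterial; any such relabelling produces a valid $\star$-tree on $L$, and the pointwise leaf comparison needed for the inductive step holds regardless of which copy we pick. No additional properties of the Huffman algebra beyond those already used in \autoref{fact:huffmanalg} are required.
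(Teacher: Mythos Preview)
Your proposal is correct and follows essentially the same approach as the paper: both arguments derive monotonicity of $\star$ from the axioms, observe that the root label of a $\star$-tree is therefore monotone in each leaf, and then invoke \autoref{fact:huffmanalg} to pass to the minimum over all $\star$-trees. The paper compresses the last step into the single sentence ``Huffman's algorithm computes the minimum of all these root values,'' whereas you spell out the standard unpacking of ``a minimum of monotone functions is monotone'' via the relabelling of an optimal tree; this is the same idea at a finer level of detail.
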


\begin{proof}
	The $\star$ operation is monotone in each argument, thus the root value of every $\star$-tree on $\{a_1, \ldots, a_k\}_\#$ is monotone in each leaf value $a_1, \ldots, a_k$. Huffman's algorithm computes the minimum of all these root values.
\end{proof}

Equipped with Huffman's algorithm, we can state and prove our main theorem about partial sorting networks:

\begin{theorem}[The Huffman Bound] \label{fact:huffmanbound}
	Given a sequence set $X \subseteq B^n$ with prunable channels $p(X) \ne \emptyset$, the size for sorting $X$ is bounded by $s(X) \ge H_{{1{+}\mathrm{max}}}\,\{\, s(X/i) \mid i \in p(X) \,\}_\#$.
\end{theorem}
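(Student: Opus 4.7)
The approach is to pick a minimal-size partial sorting network $c$ on $X$, apply the pruning construction at every prunable channel $i \in p(X)$ simultaneously, and analyse the binary tree swept out by the pruned paths. For each $i \in p(X)$, Lemma~\ref{fact:prunedpartialsorts} gives that $c/i$ is a partial sorting network on $X/i$, so $s(c/i) \ge s(X/i)$. Since $s(c) = s(c/i) + \delta(c,i)$ by Definition~\ref{def:pruning}, this yields $s(X) = s(c) \ge s(X/i) + \delta(c,i)$ for every $i \in p(X)$, hence
\[ s(X) \ge \max\{\, s(X/i) + \delta(c,i) \mid i \in p(X) \,\}. \]

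The next step is to form the tree $T(c)$ that is the union of the pruned paths in the circuit of $c$ for all $i \in p(X)$. Because $\ohot{i}{n} \in X$, the network $c$ must sort $\ohot{i}{n}$ to $\ohot{n}{n}$, so every pruned path ends at the common output channel $n$, which we take as the root (and then discard). Each pruned path follows max-outputs, so once two paths arrive at a common wire they coincide from that point on; consequently every comparator gate in $T(c)$ has at most two children, one per input port, making $T(c)$ a rooted binary tree with leaves indexed by $p(X)$. Labelling leaf $i$ with $s(X/i)$, its depth in $T(c)$ equals the number of comparators on the pruned path from input $i$, which is exactly $\delta(c,i)$. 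Therefore the cost of $T(c)$ (in the sense defined before the theorem) is precisely $\max\{\, s(X/i) + \delta(c,i) \mid i \in p(X) \,\}$.

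Finally I would contract every unary inner node of $T(c)$ into its single child to obtain a full binary tree $T'(c)$; this preserves leaf labels and can only decrease leaf depths, so the cost of $T'(c)$ lower-bounds the cost of $T(c)$. But $T'(c)$ is a $1{+}\mathrm{max}$-tree on the multiset $\{\, s(X/i) \mid i \in p(X) \,\}_\#$, since by construction each inner node's label equals $1 + \max$ of its two children's labels. A short calculation verifies that $(\mathbb{N}, \le, 1{+}\mathrm{max})$ satisfies the axioms of Definition~\ref{def:huffmanalgebra}, the only nontrivial one being the interchange law, which reduces to $(a \star b) \star (c \star d) = 2 + \max\{a,b,c,d\} = (a \star c) \star (b \star d)$. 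Lemma~\ref{fact:huffmanalg} then provides $H_{1{+}\mathrm{max}}\{\, s(X/i) \mid i \in p(X) \,\}_\#$ as a lower bound on the root value of every $1{+}\mathrm{max}$-tree on that multiset, and combining the chain of inequalities delivers the claimed bound. The main obstacle I anticipate is the tree-theoretic step: giving a clean argument that the union of pruned paths really is a binary tree with leaf depth equal to $\delta(c,i)$, and checking that the unary-contraction produces the full binary tree required by the generalised Huffman framework.
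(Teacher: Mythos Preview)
Your proposal is correct and follows essentially the same approach as the paper: the paper's proof of the theorem itself is brief because the argument you outline---bounding $s(X)$ by $\max\{s(X/i)+\delta(c,i)\}$, building the binary tree of pruned paths rooted at output $n$, computing its cost recursively via $1{+}\max$, contracting unary nodes to obtain a full binary tree, and then invoking Huffman's algorithm on the resulting $1{+}\max$-tree---is developed in the discussion immediately preceding the theorem statement. Your verification of the Huffman-algebra axioms and your anticipated obstacle (the tree-theoretic structure of the union of pruned paths) also mirror the paper's treatment.
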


\begin{proof}
	We saw that the minimal root value among all the $1{+}$max-trees having leaf labels ${\{\, s(X/i) \mid i \in p(X) \,\}_\#}$ is a lower bound for $s(X)$.
	We can verify that $(\mathbb N_0, \le, 1{+}\mathrm{max})$ is a Huffman algebra by distributing $+$ over $\max$ in the Huffman algebra axioms.
	Thus Huffman's algorithm computes this minimal root value $H_{{1{+}\mathrm{max}}}\,\{\, s(X/i) \mid i \in p(X)\,\}_\#$.
\end{proof}

Note that while the Huffman bound is not affected by a permutation of channels, it is not symmetric with respect to Boolean negation.
The similar sequence sets $X$ and $X^\mathbf b$ can have different Huffman bounds even though $s(X) = s(X^\mathbf b)$.
Therefore it is often desirable to compute it for both sequence sets to get a stronger combined bound.

\section{Well-Behaved Sequence Sets} \label{sec:wellbehaved}

The Huffman bound is only applicable for sequence sets that have at least one prunable channel.
There are many sequence sets without prunable channels.
Even if a sequence set has prunable channels, it might contain sequences which are among the excluded sequences for every allowed pruning operation.
This in turn means that the Huffman bound, which is computed solely from those pruned sequence sets, is ineffective.
It cannot distinguish such a sequence set from the sequence set with those sequences removed.

If we start with $B^n$ and apply any given comparator network, we can observe that we will not encounter such a sequence set.
Thus we might hope, that the Huffman bound is effective for all sequence sets we will encounter.
As soon as we make use of the Huffman bound and start pruning sequence sets in addition to applying comparators, we will quickly find examples of sequence sets containing sequences that are excluded by each possible pruning operation.

As an example, consider the following sequence set:
\begin{align*}
    X &= (B^6)^{[1, 2][3, 4][1, 3][2, 4][5, 6][1, 5][4, 6]}/6/5 \\
      &= \{(0, 0, 0, 0), (0, 0, 0, 1), (0, 0, 1, 1), (0, 1, 0, 1), (0, 1, 1, 0), (0, 1, 1, 1), (1, 1, 1, 1) \}
\end{align*}

The only prunable channel of $X$ is channel $4$, for which pruning will exclude the sequence $(0, 1, 1, 0)$.

This turned out to be a practical issue during the development of an algorithm that uses the Huffman bound.

To ensure that we can make efficient use of the Huffman bound, we will define a suitable family of well-behaved sequence sets that excludes such sequence sets, and, apart from pruning, is closed under various operations we want to perform.
We will also define well-behaved pruned sequence sets which are maximum well-behaved subsets of the corresponding pruned sequence sets.
This allows us to apply a slightly weaker variant of the Huffman bound, while staying within the well-behaved sequence sets.

We could work around this issue completely by not using Boolean sequence sets, but instead using sets containing sequences that are permutations of $(1, \ldots, n)$.
Then for every such permutation sequence set $X$ and every channel $i$ for which there is a sequence $x \in X$ with $x_i = n$, the channel $i$ would be prunable.
The corresponding pruning operation $X/i$ would select exactly these sequences.
As every sequence in $X$ has the value $n$ for some channel, this ensures that every sequence will be kept by one of the possible pruning operations and thus also that every nonempty permutation sequence set has prunable channels.

Only superficial changes would be required to adapt the Huffman bound and the other statements about partial sorting networks from Boolean sequence sets to permutation sequence sets.
As permutation sequence sets are significantly larger than Boolean sequence sets, though, this is not practical for our purposes.

Instead we can lift the desired properties from permutation sequence sets to Boolean sequence sets, using an approach inspired by the zero-one principle.
For this, we will represents a single permutation of $(1, \ldots, n)$ using multiple Boolean sequences.

\begin{definition}[Threshold Sets] \label{def:thresholdsets}
    Let $R_n$ be the set of all sequences that are permutations of $(1, \ldots, n)$, i.e.\ $R_n = (1, \ldots, n)^{S_n}$.
    Consider the monotone functions from $[n]$ to the Booleans.
    These are the $n + 1$ \emph{threshold functions} $t_k(x) = [x \ge k]$ for all $1 \le k \le n + 1$ where $[P] = 1$ if the property $P$ holds and $[P] = 0$ otherwise.

    Given a sequence $x = (1, \ldots, n)^\sigma \in R_n$, we then define the \emph{threshold set} $T(x)$ as the set obtained by pointwise application of each threshold function to $x$, i.e. $T(x) = \{\,x^{t_k} \mid 1 \le k \le n + 1\,\}$. The set of all length-$n$ threshold sets is denoted by $\mathcal T_n$.

    Note that the threshold set $T((1, \ldots, n))$ consists of exactly all sorted Boolean sequences $(B^n)^\mathbf s$ and that pointwise function application and permutation of channels commute, so we can alternatively define threshold sets by $T((1, \ldots, n)^\sigma) = T((1, \ldots, n))^\sigma = (B^n)^{\mathbf s\sigma}$ for all permutations $\sigma$.
\end{definition}

\begin{definition}[Well-Behaved Sequence Sets]
    A Boolean sequence set $X \subseteq B^n$ is \emph{well-behaved} if it is a union of threshold sets, each in $\mathcal T_n$.
    The family of well-behaved sequence sets of length $n$ is denoted by $\cW_n$ and the family of well-behaved sequence sets of all lengths by $\cW$.
\end{definition}

\begin{lemma}[Well-Behaved Sequence Preservation]
    Given a well-behaved $X \in \cW_n$, for every sequence $x \in X$, there is a prunable channel $i \in p(X)$ such that $x/i \in X/i$.
\end{lemma}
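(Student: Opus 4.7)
The plan is to exploit the decomposition of $X$ as a union of threshold sets in $\mathcal T_n$. Given any $x \in X$, I first pick some threshold set $T(y) \subseteq X$ that contains $x$, where $y \in R_n$ is a permutation of $(1, \ldots, n)$. The key observation is that every threshold set ``announces'' one of its own prunable channels via the threshold function $t_n$: componentwise application of $t_n(v) = [v \ge n]$ to $y$ produces the sequence with a single $1$ at the unique position where $y$ takes value $n$, which is precisely $\ohot{i}{n}$ for that choice of $i$. Hence $\ohot{i}{n} = y^{t_n} \in T(y) \subseteq X$, so $i \in p(X)$.

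To conclude, I would verify $x/i \in X/i$. Since $x \in T(y)$, we have $x = y^{t_k}$ for some $1 \le k \le n+1$, and $x_i = [y_i \ge k] = [n \ge k]$. Whenever $k \le n$, this entry equals $1$, so $x$ itself is an element of $X$ with $x_i = 1$ that prunes to $x/i$, giving $x/i \in X/i$ directly.

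The one delicate point, and essentially the whole content of the lemma, is the corner case $k = n+1$: then $x$ is the all-zero sequence and has no coordinate equal to $1$, so $x$ cannot serve as its own witness. The remedy is to use $\ohot{i}{n}$ as the witness instead: it lies in $X$, its $i$-th entry is $1$, and it prunes to the all-zero sequence of length $n-1$, which coincides with $x/i$. This is exactly why the conclusion is stated as $x/i \in X/i$ rather than the stronger $x_i = 1$, and why the union-of-threshold-sets structure is the right notion of well-behavedness: whenever the all-zero sequence appears in a piece of $X$, that same piece also contributes an $\ohot{i}{n}$ (for the $i$ marking $y$'s maximum) that covers the pruning requirement.
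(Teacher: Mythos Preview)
Your proof is correct and follows essentially the same approach as the paper: pick a threshold set $T(y)\subseteq X$ containing $x$, let $i$ be the position where $y_i=n$, observe that $\ohot{i}{n}=y^{t_n}\in T(y)\subseteq X$ so $i\in p(X)$, and handle the all-zero case by using $\ohot{i}{n}$ as the witness for $x/i\in X/i$. The paper packages the same reasoning slightly differently by showing $T(y)/i=\{\,t/i\mid t\in T(y)\,\}$ without the $t_i=1$ restriction, but the content is identical.
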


\begin{proof}
    Given a sequence $x \in X$, by the definition of well-behavedness, there is a threshold set $T(y)$ such that $x \in T(y) \subseteq X$.
    For a given threshold set $T(y)$ there is a unique $i$ such that $y_i = n$.
    We then have $T(y)/i = \{\, t/i \mid t \in T(y),\, t_i = 1 \,\}$.

    Here $t_i = 1$ holds for every contained sequence $t \in T(y)$ apart from the all zero sequence $t = (0, \ldots, 0)$.
    Since pruning the all zero sequence is the same as pruning the single nonzero element of a one-hot sequence, $(0, \ldots, 0)/i = \ohot in/i$, and that one-hot sequence is contained in the threshold set, $\ohot in \in T(y)$, we can drop the $t_i = 1$ condition and get $T(y)/i = \{\, t/i \mid t \in T(y) \,\}$.

    From this we get $x/i \in T(y)/i$ and since $T(y)/i \subseteq X/i$ also $x/i \in X/i$.
\end{proof}

\begin{corollary}[Well-Behaved Prunable Channels] \label{fact:wbprunable}
    Every nonempty well-behaved sequence set $X \in \cW$ has a nonempty set of prunable channels $p(X)$.\qed
\end{corollary}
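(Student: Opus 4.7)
The plan is to derive the corollary immediately from the preceding Well-Behaved Sequence Preservation lemma. Since $X$ is nonempty, I would pick any sequence $x \in X$. Applying the lemma, there exists a channel $i \in p(X)$ with $x/i \in X/i$. The mere existence of such an $i$ witnesses that $p(X) \neq \emptyset$, which is exactly the claim.

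Alternatively, and perhaps more directly in the spirit of the definition of $\cW_n$, I would unfold well-behavedness: since $X$ is nonempty, it contains at least one threshold set $T(y)$ for some $y \in R_n$. Because $y$ is a permutation of $(1, \ldots, n)$, there is a unique channel $i$ with $y_i = n$. The threshold function $t_n$ then maps $y$ to the one-hot sequence $\ohot{i}{n}$, so $\ohot{i}{n} \in T(y) \subseteq X$, which by \autoref{def:prunable} means $i \in p(X)$.

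There is no real obstacle here; the content of the corollary is already packaged inside the preceding lemma, and the corollary is essentially an existential weakening of it. The only thing to be slightly careful about is to observe that the lemma's conclusion presupposes a witness sequence $x \in X$, which is exactly what nonemptiness provides.
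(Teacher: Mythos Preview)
Your proposal is correct and matches the paper's approach: the corollary is stated with an immediate \qed, and your first argument---pick any $x \in X$ and apply the Well-Behaved Sequence Preservation lemma to obtain a witness $i \in p(X)$---is precisely the intended one-line derivation. Your alternative direct unfolding via a contained threshold set is also fine and is essentially the content of that lemma's proof specialized to the existential conclusion.
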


For well-behaved sequence sets, we have a simple test for whether they can be sorted using only exchanges:

\begin{lemma}[Well-Behaved Sortedness Test] \label{fact:wbnonempty} For a well-behaved sequence set $X$, it has a minimal size $s(X) = 0$ for sorting if and only if $|X| \le n + 1$.
\end{lemma}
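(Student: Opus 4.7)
The plan is to prove the two directions separately, with the forward direction being essentially immediate from an earlier result and the backward direction using the structural definition of well-behavedness.

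For the forward direction, I would simply invoke the contrapositive of the Pigeon Hole Bound (\autoref{fact:nonempty}): if $|X| > n+1$ then $s(X) \ge 1$, so $s(X) = 0$ forces $|X| \le n+1$. Note that this half does not actually use the well-behavedness hypothesis.

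For the backward direction, I would begin by analyzing the size of threshold sets and of their unions. First, every $T(y) \in \mathcal T_n$ has exactly $n+1$ elements: the sequences $y^{t_k}$ for $k = 1, \ldots, n+1$ are pairwise distinct, because as $k$ increases by one across some value $y_i = j$, the $i$-th coordinate flips from $1$ to $0$, so each increment changes the sequence. Second, if $y \ne y'$ in $R_n$ then $T(y) \ne T(y')$, since one recovers $y_i$ from $T(y)$ by counting the number of sequences in $T(y)$ whose $i$-th coordinate is $1$. Finally, every $T(y)$ contains the all-zero and all-one sequences, so the union of two distinct threshold sets has at least $(n+1) + (n+1) - 2 = 2n > n+1$ elements (for $n \ge 2$; the small cases $n \le 1$ are trivial). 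Combining these observations, if $X \in \cW_n$ satisfies $|X| \le n+1$, then $X$ is either empty or equal to a single threshold set $T(y)$.

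It then remains to exhibit a comparator network of size $0$ sorting $X$ in each case. If $X = \emptyset$ the empty network $\eps$ trivially satisfies $\emptyset^{\eps} = \emptyset = \emptyset^{\mathbf s}$. Otherwise write $y = (1, \ldots, n)^\sigma$, so that by the remark after \autoref{def:thresholdsets} we have $X = T(y) = (B^n)^{\mathbf s \sigma}$. Taking $c = e_{\sigma^{-1}}$ (a pure exchange network, hence of size $0$), we get $X^c = (B^n)^{\mathbf s \sigma \sigma^{-1}} = (B^n)^{\mathbf s}$, and on the other hand $X^{\mathbf s} = (B^n)^{\mathbf s \sigma \mathbf s} = (B^n)^{\mathbf s}$ since every sequence in $(B^n)^{\mathbf s \sigma}$ is a permutation of a sorted Boolean sequence and thus has the same sort as the corresponding element of $(B^n)^{\mathbf s}$. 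Hence $X^c = X^{\mathbf s}$, so $s(X) \le 0$ and equality holds. The only mild subtlety in the whole argument is verifying that distinct threshold sets always force the union to exceed $n+1$ in size, but the all-zero/all-one intersection argument handles it cleanly.
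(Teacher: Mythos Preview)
Your overall structure matches the paper's proof: the forward direction via the pigeon hole bound, and the backward direction by arguing that a well-behaved $X$ with $|X| \le n+1$ is either empty or a single threshold set, which is a permuted copy of $(B^n)^{\mathbf s}$. The paper finishes by invoking the permutation invariance lemma (\autoref{fact:permute}) rather than exhibiting an explicit exchange network, but your explicit construction is equally valid.

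There is, however, a genuine slip in your inclusion--exclusion step. You observe that every threshold set contains the all-zero and all-one sequences; this gives $|T(y) \cap T(y')| \ge 2$. But to conclude $|T(y) \cup T(y')| \ge (n+1)+(n+1)-2 = 2n$ you would need $|T(y) \cap T(y')| \le 2$, which is false in general. For example with $n=3$, $y=(1,2,3)$, $y'=(1,3,2)$ one computes $T(y) \cap T(y') = \{(0,0,0),(0,1,1),(1,1,1)\}$, so the union has only $5$ elements, not $2n = 6$. The repair is immediate and is what the paper uses implicitly: two \emph{distinct} sets each of cardinality $n+1$ cannot satisfy $T(y) \subseteq T(y')$, so their union has at least $n+2 > n+1$ elements. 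With that correction your argument goes through; the ``mild subtlety'' you flagged is indeed the one place where the reasoning as written fails.
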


\begin{proof}
    One direction follows from the pigeon hole bound (\autoref{fact:nonempty}) which tells us that $|X| \le n + 1$ if $s(X) = 0$.

    For the converse direction, note that when $|X| \le n + 1$, $X$ is the union of at most one threshold set, as every threshold set contains $n + 1$ sequences.
    If $X$ is the empty set, the empty network vacuously sorts $X$. Otherwise $X$ is a threshold set which by definition is permutation-similar to the set of sorted sequences $(B^n)^\mathbf s$.
    In that case, as $s((B^n)^\mathbf s) = 0$, we get $s(X) = 0$ from invariance under permutation (\autoref{fact:permute}).
\end{proof}

We still need to show that well-behaved sequence sets are closed under the actions of comparators, exchanges and Boolean negation.
We also need to deal with ill-behaved sets we might get as the result of pruning a well-behaved set.

First we define an operation that produces the largest well-behaved sequence set contained in a given sequence set:

\begin{definition}[Well-Behaved Interior] \label{def:wbint}
    Given a sequence set $X$ we define its \emph{well-behaved interior}, denoted by $X^\circ$, as the largest well-behaved sequence set contained in $X$. As well-behaved sequence sets are by definition closed under union, this is well-defined and equivalent to the union of all threshold sets contained in $X$.
\end{definition}

As the well-behaved interior $X^\circ$ is contained in $X$, there is a variant of the Huffman Bound that uses the well-behaved interior of pruned sequence sets:

\begin{corollary}[The Well-Behaved Huffman Bound] \label{fact:huffmanwp}
    Given a well-behaved nonempty sequence set $X \subseteq \mathcal W \setminus \{\emptyset\}$, the minimal size for sorting $X$ is bounded by $s(X) \ge H_{1{+}\mathrm{max}}\{\, s((X / i)^\circ) \mid i \in p(X) \,\}_\#$.
\end{corollary}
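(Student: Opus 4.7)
The plan is to derive this as a direct consequence of the Huffman Bound (\autoref{fact:huffmanbound}) together with the Well-Behaved Prunable Channels corollary (\autoref{fact:wbprunable}) and the monotonicity statements for $s$ and for Huffman's algorithm. The key observation is that $(X/i)^\circ \subseteq X/i$ by \autoref{def:wbint}, so replacing each leaf value in the Huffman bound by the size for sorting the well-behaved interior can only decrease the leaf values, and hence only decrease the output of $H_{1{+}\mathrm{max}}$.

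First I would invoke \autoref{fact:wbprunable} to guarantee that $p(X) \ne \emptyset$, which is the hypothesis needed to apply \autoref{fact:huffmanbound}. This gives
\[
    s(X) \;\ge\; H_{1{+}\mathrm{max}}\,\{\, s(X/i) \mid i \in p(X) \,\}_\#.
\]
Next, for each $i \in p(X)$, since $(X/i)^\circ \subseteq X/i$ by the definition of the well-behaved interior, the Minimal Size Inclusion Monotonicity (\autoref{fact:mono}) yields $s((X/i)^\circ) \le s(X/i)$.

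Then I would apply the Monotonicity of Huffman's Algorithm (\autoref{fact:huffmanmono}) to the Huffman algebra $(\mathbb N_0, \le, 1{+}\mathrm{max})$ already verified in the proof of \autoref{fact:huffmanbound}. Since $1{+}\mathrm{max}$ is monotone in each argument, replacing each leaf label $s(X/i)$ by the smaller value $s((X/i)^\circ)$ can only decrease the root label, giving
\[
    H_{1{+}\mathrm{max}}\,\{\, s(X/i) \mid i \in p(X) \,\}_\# \;\ge\; H_{1{+}\mathrm{max}}\,\{\, s((X/i)^\circ) \mid i \in p(X) \,\}_\#.
\]
Chaining these two inequalities delivers the claimed bound.

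There is no real obstacle here: the corollary is essentially a weakening of \autoref{fact:huffmanbound} tailored to keep the computation inside $\mathcal W$, and every ingredient (nonemptiness of $p(X)$, inclusion monotonicity, Huffman monotonicity) has already been established. The only thing to be careful about is the direction of the inequality: we weaken the lower bound, so the use of monotonicity goes the ``safe'' way.
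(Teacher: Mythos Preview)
Your proposal is correct and follows exactly the same route as the paper's proof: invoke \autoref{fact:wbprunable} for $p(X)\ne\emptyset$, apply the Huffman bound (\autoref{fact:huffmanbound}), then use $(X/i)^\circ\subseteq X/i$ together with \autoref{fact:mono} and \autoref{fact:huffmanmono} to pass to the well-behaved interiors. The paper states this more tersely but cites precisely the same four ingredients.
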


\begin{proof}
    As $(X / i)^\circ \subseteq X/i$ for all $X$ and prunable $i$, this is a direct consequence of the Huffman bound (\autoref{fact:huffmanbound}), the monotonicity of Huffman's algorithm (\autoref{fact:huffmanmono}) and the monotonicity of the minimal size for sorting a sequence set (\autoref{fact:mono}).
    The condition $p(X) \ne \emptyset$ necessary for the Huffman bound follows from $X$ being well-behaved and nonempty (\autoref{fact:wbprunable}).
\end{proof}

To show that well-behaved sequence sets are closed under the other required operations, we first show that this is case for the threshold sets from which they are built.

\begin{lemma}[Threshold-Set Operations]
    Given a threshold set $X \in \mathcal T_n$ we have $X^{[i, j]} \in \mathcal T_n$, $X^{(i, j)}\in \mathcal T_n$ and $X^{\mathbf b} \in \mathcal T_n$ for all comparators $[i,j] \in C_n$ and all exchanges $(i, j) \in E_n$.
\end{lemma}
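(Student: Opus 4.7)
The plan is to verify each of the three operations separately. In every case I will identify an explicit $y \in R_n$ such that $T(x)^{\mathrm{op}} = T(y)$, from which membership in $\mathcal T_n$ is immediate.

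For exchanges, the key observation is that the threshold functions $t_k$ act pointwise on channels and hence commute with any channel permutation, so $(x^{t_k})^{(i,j)} = (x^{(i,j)})^{t_k}$. Collecting over $k$ yields $T(x)^{(i,j)} = T(x^{(i,j)})$, and $x^{(i,j)} \in R_n$ since exchanges preserve being a permutation of $(1, \ldots, n)$. This case is essentially automatic.

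For Boolean negation, I would introduce the elementwise complement $\bar x := (n+1-x_1, \ldots, n+1-x_n)$, which is again in $R_n$. A componentwise computation gives $1 - [x_i \ge k] = [x_i < k] = [n+1-x_i \ge n+2-k]$, i.e.\ $\mathbf b(x^{t_k}) = \bar x^{t_{n+2-k}}$. Since $k \mapsto n+2-k$ is a bijection on $\{1, \ldots, n+1\}$, collecting over $k$ gives $T(x)^{\mathbf b} = T(\bar x)$.

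The comparator case is the one that requires actual work. I would define $x' \in R_n$ to be $x$ if $x_i \le x_j$ and $x^{(i,j)}$ otherwise, so that $x'_i \le x'_j$ by construction. The target identity is $(x^{t_k})^{[i,j]} = (x')^{t_k}$ for every $k$. If $x_i \le x_j$, then $[x_i \ge k] \le [x_j \ge k]$ for all $k$, so the $i,j$-bits of $x^{t_k}$ are already in the right order and $[i,j]$ acts as the identity, matching $x' = x$. If $x_i > x_j$, then the $i,j$-bits of $x^{t_k}$ are out of order precisely when $x_j < k \le x_i$ and coincide otherwise; in either subcase the action of the comparator on $x^{t_k}$ agrees with the action of the exchange $(i, j)$, yielding $(x^{t_k})^{[i,j]} = (x^{(i,j)})^{t_k} = (x')^{t_k}$. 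I do not expect any real obstacle beyond this case analysis: the threshold formalism already does most of the work, and the only subtlety is keeping track of which values of $k$ trigger the comparator to swap.
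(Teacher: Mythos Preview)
Your proposal is correct and follows the same approach as the paper: in each case you exhibit an explicit $y \in R_n$ with $T(x)^{\mathrm{op}} = T(y)$, which is precisely what the paper asserts (tersely, as $T(x)^c = T(x^c)$ for comparators and exchanges, and $T(x)^{\mathbf b} = T((n,\ldots,1)^\sigma)$ for negation). Your comparator case analysis and the explicit index computation for negation simply unpack what the paper leaves implicit, and your $\bar x$ coincides with the paper's $y = (1,\ldots,n)^{\rho\sigma}$.
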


\begin{proof}
    The action of a comparator or an exchange $c$ on a threshold set $T(x)$ with $x=(1, \ldots, n)^\sigma$ results in $T(x)^c = T(x^c)$ which is again a threshold set. Applying Boolean negation to $T(x)$ results in $T(x)^\mathbf b = T(y)$ where $y=(n, \ldots, 1)^\sigma = (1, \ldots, n)^{\rho\sigma}$ and $\rho$ is the permutation that reverses the order of channels.
\end{proof}

Next we extend this to well-behaved sets.

\begin{lemma}[Well-Behaved Operations] \label{fact:wbclose}
    Given a well-behaved sequence set $X \in \mathcal W_n$ we have $X^{[i, j]} \in \mathcal W_n$, $X^{(i, j)} \in \mathcal W_n$ and $X^{\mathbf b} \in \mathcal W_n$ for all comparators $[i,j] \in C_n$ and all exchanges $(i, j) \in E_n$.
\end{lemma}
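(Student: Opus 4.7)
The plan is to lift the closure properties from threshold sets (established in the previous lemma) to unions of threshold sets, which is exactly the definition of a well-behaved sequence set. The key observation is that all three operations under consideration act elementwise on sequence sets, so their action distributes over set unions.

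First I would unfold the definition of well-behavedness: since $X \in \mathcal W_n$, there is a family $\{T_k\}_{k \in K}$ of threshold sets in $\mathcal T_n$ with $X = \bigcup_{k \in K} T_k$. Next, I would invoke the general fact recalled in the Preliminaries that for any operation $r$ acting on sequences, the induced action on a sequence set is $X^r = \{\, x^r \mid x \in X \,\}$, which immediately gives $\bigl(\bigcup_k T_k\bigr)^r = \bigcup_k T_k^r$ for any such $r$. Applying this to $r = [i,j]$, $r = (i,j)$, and $r = \mathbf b$ respectively yields
\begin{equation*}
    X^{[i,j]} = \bigcup_{k \in K} T_k^{[i,j]}, \qquad X^{(i,j)} = \bigcup_{k \in K} T_k^{(i,j)}, \qquad X^{\mathbf b} = \bigcup_{k \in K} T_k^{\mathbf b}.
\end{equation*}

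The previous lemma (Threshold-Set Operations) guarantees that each of $T_k^{[i,j]}$, $T_k^{(i,j)}$, and $T_k^{\mathbf b}$ is itself a threshold set in $\mathcal T_n$. Hence each of $X^{[i,j]}$, $X^{(i,j)}$, and $X^{\mathbf b}$ is a union of threshold sets of length $n$, and is therefore well-behaved by definition.

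There is no real obstacle here; the proof is essentially a one-line consequence of the previous lemma together with the fact that elementwise actions commute with unions. The only thing to be careful about is being explicit about the distributivity over unions, so that the reader sees why the threshold-level statement suffices to cover arbitrary unions.
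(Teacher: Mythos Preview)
Your proposal is correct and follows essentially the same approach as the paper: both proofs use that elementwise actions distribute over unions and then invoke the previous lemma (Threshold-Set Operations) to conclude that each summand remains a threshold set, hence the union is well-behaved. The paper additionally phrases the threshold-set closure via an induced action on $R_n$ (writing $T(y)^c = T(y^c)$), but this is a cosmetic difference and not a substantively different argument.
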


\begin{proof}
    Given a $c \in C_n \cup E_n \cup \{\mathbf b\}$, the action $X^c$ is, by definition, the elementwise action of $c$ on the sequences of $X$ and thus preserves unions.
    For a well-behaved sequence set $X$, this allows us to let $c$ act on each threshold set contained in $X$:
    \[
        X^c = \bigcup\,\{\, T(y)^c \mid y \in R_n,\, T(y) \subseteq X\,\}.
    \]

    We also saw, that the action of $c$ maps threshold sets to threshold sets.
    As the function $T$ defining threshold sets is a bijection from $R_n$ to $\mathcal T_n$, we can define an action of $c$ on the permutation sequences in $R_n$ such that for every sequence $y \in R_n$ we have $T(y)^c = T(y^c)$ and
    \[
        X^c = \bigcup\,\{\, T(y^c) \mid y \in R_n,\, T(y) \subseteq X\,\},
    \]
    which is a union of threshold sets and thus a well-behaved sequence set.

\end{proof}

As these operations include Boolean negation and exchanges, they include the operations whose orbits define similar sequence sets.

\begin{corollary}[Well-Behavedness of Similar Sequence Sets] \label{fact:wbpart}
    Given a well-behaved sequence set $X \in \cW_n$, we also have $Y \in \cW_n$ for all similar sequence sets $Y \sim X$. \qed
\end{corollary}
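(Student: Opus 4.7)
The plan is to reduce the similarity claim to the two closure properties already established in \autoref{fact:wbclose}, namely closure of $\cW_n$ under exchanges and under Boolean negation.

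First I would invoke the observation made immediately after \autoref{def:sim}: since permutation of channels commutes with Boolean negation, the equivalence $\sim$ factors as $\permsim$ followed by $\negsim$, so given $Y \sim X$ one can find an intermediate $T$ with $X \permsim T \negsim Y$. It therefore suffices to show that $\cW_n$ is closed under each of the two generating relations separately.

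For the $\negsim$ step I would just cite $X^{\mathbf b} \in \cW_n$ from \autoref{fact:wbclose} directly. For the $\permsim$ step I would note that $T \permsim X$ means $T = X^\sigma$ for some $\sigma \in S_n$, and that we have identified $\sigma$ with the fixed exchange word $e_\sigma \in E_n^*$. Then a short induction on the length of $e_\sigma$, using the exchange-closure clause of \autoref{fact:wbclose} at each step, shows $X^{e_\sigma} \in \cW_n$, i.e.\ $T \in \cW_n$. Combining the two steps gives $Y \in \cW_n$.

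There is no real obstacle here; the entire content is a bookkeeping chain through \autoref{fact:wbclose}. The only minor point to take care of is remembering that a permutation is applied as a word of exchanges rather than in one shot, so that the base case in \autoref{fact:wbclose} must be iterated; but since $E_n^*$ is by definition finite-length, this iteration is routine.
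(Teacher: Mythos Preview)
Your proposal is correct and is exactly the intended argument: the paper marks this corollary with \qed because it is an immediate consequence of \autoref{fact:wbclose}, and your unpacking via the $X \permsim T \negsim Y$ factorization together with iterated exchange-closure and negation-closure is precisely how that consequence is obtained.
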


\section{An Algorithm for Computing $s(n)$} \label{sec:algo}

We obtain an algorithm for computing $s(n)$ from a more general algorithm that can compute $s(X)$ for every well-behaved sequence set $X \in \cW_n$, which includes $s(B^n) = s(n)$.

The algorithm for computing $s(X)$ for well-behaved $X$ uses dynamic programming.
We will start with a simple dynamic programming algorithm and then present several improvements until we reach the algorithm used to compute $s(11)$ and $s(12)$.

\subsection{Dynamic Programming}

For dynamic programming to be applicable, the problem under consideration needs to exhibit optimal substructure.
In our case this means that computing $s(X)$ for well-behaved sequence set $X$ must be reduced to a set of subproblems that again each consist of computing $s(Y)$ for another well-behaved sequence set $Y$.
Each of these problems must be smaller by a suitable measure.

Given a well-behaved sequence set $X \in \cW_n$ we can quickly test whether $s(X) = 0$ by checking for $|X| \le n + 1$ (\autoref{fact:wbnonempty}). This provides us with a base-case.

For $s(X) \ge 1$, using the successor recurrence (\autoref{fact:succ}) we can compute $s(X)$ from $s(X^{[i,j]})$ for each successor $X^{[i,j]} \in \mathcal U(X)$. As comparators maintain well-behavedness (\autoref{fact:wbclose}) we know that every successor of $X$ is also well-behaved.

We have also seen that the well-behaved sequence sets are partitioned into classes of similar sequence sets (\autoref{fact:wbpart}), where for every two similar sequence sets $X \sim Y$ we have $s(X) = s(Y)$ (\autoref{fact:sim}).
Thus it would be wasteful to consider both $s(X)$ and $s(Y)$ as distinct subproblems.
We can avoid that by defining a canonical representative for each class of similar sequence sets.

\begin{definition}
	Given a sequence set $X$ the \emph{canonical representative similar to $X$} is written as $X^\sim$ and defined as an arbitrary but fixed representative of the equivalence class $[X]_\sim$.
	Given a family of sequence sets $\mathcal F$, the family of canonical representatives $\{\, X^\sim \mid X \in \mathcal F\,\}$ is denoted by $\mathcal F^\sim$.
\end{definition}

Now, whenever we are asked to compute $s(X)$, we instead compute $s(X^\sim)$, increasing the overlap of subproblems.
We will give more details on how to compute the canonical representative $X^\sim$ similar to a given $X$ in \autoref{sec:implement}.

Applying memoization to the successor recurrence while using canonicalization gives us the following simple but not very efficient dynamic programming algorithm:

\begin{algo}[Memoizing Dynamic Programming Algorithm] \label{alg:memodp}
	\algorithm{
		global $t \gets (X \mapsto$ if $s(X) = 0$ then $0$ else $\top)$

		\cmnt{$\uparrow$ Here $s(X) = 0$ if and only if $|X| \le n + 1$ for $X \in \cW_n^\sim$}
		\AW
		procedure $\proc{MemoMinSize}(X)\colon$ \cmnt{$X \in \cW^\sim$} \AC{
			if $t(X) = \top\colon$ \cmnt{This can only happen for $s(X) > 0$} \AC{
				$t(X) \gets 1 + \min\,\{\, \proc{MemoMinSize}(Y) \mid Y \in \mathcal U(X)^\sim \,\}$
			}

			return $t(X)$
		}
	}
\end{algo}

Note that we do not perform an explicit check for $s(X) = 0$ as part of \proc{MemoMinSize} but instead initialize $t(X) = 0$ for all $s(X) = 0$.
This allows us to use a single check for both $s(X) = 0$ and for whether we already computed a $s(X) \ne 1$.
An implementation would only store those $t(X)$ that have values that differ from the initial value, as unchanged values can be computed on the fly whenever necessary.
This makes such an initialization practical.

Doing this might seem unnecessarily convoluted for a simple algorithm like this, but it will simplify our extensions of this algorithm.

To prove correctness of this algorithm, we need to show that \proc{MemoMinSize} always terminates.
For this, we introduce a suitable measure on \proc{MemoMinSize}'s argument which strictly decreases during the recursion:

\begin{definition}[Sequence Set Weights]
	Let $w(X)$ denote the \emph{weight} of a sequence set $X \subseteq B^n$, defined to be $\sum_{x, y \in X} \Delta(x, y)$ where $\Delta(x, y)$ is the Hamming distance between $x$ and $y$ given by $\Delta(x, y) = |\{\, i \in [n] \mid x_i \ne y_i\,\}|$.
\end{definition}

\begin{lemma}[Successor Weights] \label{fact:succweight}
	Given a sequence set $X \subseteq B^n$, for every successor $X^{[i,j]} \in \mathcal U(X)$, we have a strictly decreasing weight $w(X^{[i,j]}) < w(X)$.
\end{lemma}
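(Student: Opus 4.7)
The plan is to show that a strict weight decrease $w(X^{[i,j]}) < w(X)$ always accompanies the non-triviality condition $X^{[i,j]} \notpermsim X$ that is built into membership in $\mathcal U(X)$. The argument splits into a pointwise Hamming-distance bound, a lift to $w$, and a strictness step that uses the definition of $\mathcal U$.

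First I would prove the pointwise inequality $\Delta(x^{[i,j]}, y^{[i,j]}) \le \Delta(x, y)$ for every $x, y \in B^n$. Since the comparator $[i,j]$ fixes all channels other than $i$ and $j$, only the contribution of these two channels can change. A finite case analysis on $((x_i, x_j), (y_i, y_j)) \in B^2 \times B^2$ shows that the restricted distance is preserved in every case except when $\{(x_i, x_j), (y_i, y_j)\} = \{(0,1), (1,0)\}$; in that single exceptional case both sequences are mapped to ones agreeing on channels $i$ and $j$, so the restricted distance drops from $2$ to $0$.

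Next I would lift this to a bound on $w$. Writing $f$ for the surjection $x \mapsto x^{[i,j]}$ from $X$ onto $X^{[i,j]}$ and grouping the sum by image, one obtains
\[
    w(X^{[i,j]}) \;=\; \sum_{x', y' \in X^{[i,j]}} \Delta(x', y') \;\le\; \sum_{x, y \in X} \Delta(f(x), f(y)) \;\le\; \sum_{x, y \in X} \Delta(x, y) \;=\; w(X),
\]
where the first inequality uses that each fibre of $f$ is nonempty and the second is the pointwise bound above.

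For strictness I would argue by contrapositive. Suppose no pair $x, y \in X$ realises $\{(x_i, x_j), (y_i, y_j)\} = \{(0,1), (1,0)\}$. Then either no sequence in $X$ has $(x_i, x_j) = (1,0)$, in which case $[i,j]$ acts as the identity on each element of $X$ and $X^{[i,j]} = X$; or no sequence has $(x_i, x_j) = (0,1)$, in which case $[i,j]$ agrees with the exchange $(i,j)$ on every element of $X$ and so $X^{[i,j]} = X^{(i,j)} \permsim X$. Both outcomes contradict $X^{[i,j]} \in \mathcal U(X)$, so such a pair must exist and it contributes at least $2$ to $w(X) - w(X^{[i,j]})$, forcing the inequality to be strict. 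The main obstacle is the interplay between the set-valued action, which may identify elements of $X$, and the pairwise sum defining $w$; the grouping-by-image step above handles this cleanly without ever passing through multisets.
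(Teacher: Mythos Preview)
Your proof is correct and follows the same approach as the paper: a pointwise bound $\Delta(x^{[i,j]},y^{[i,j]})\le\Delta(x,y)$, the two-step chain of inequalities lifting it to $w$, and the observation that equality in the pointwise bound across all pairs forces $X^{[i,j]}\in\{X,X^{(i,j)}\}$, contradicting membership in $\mathcal U(X)$. Your phrasing of the strictness step (isolating the single case $\{(x_i,x_j),(y_i,y_j)\}=\{(0,1),(1,0)\}$ and arguing that its absence means one of the two patterns is missing from $X$ entirely) is in fact a bit cleaner than the paper's ``swaps for all or for none'' wording, but the content is the same.
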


\begin{proof}
	Given two sequences $x$ and $y$ the inequality $\Delta(x^{[i, j]}, y^{[i, j]}) \le \Delta(x, y)$ holds for all $[i, j] \in C_n$. This can be verified by checking the 16 possible combinations of values for $x_i, x_j, y_i, y_j$.

	Now $w(X^{[i, j]}) \le w(X)$ holds for all $[i, j] \in C_n$ as \[\sum_{x, y \in X^{[i, j]}} \Delta(x, y) \le \sum_{x, y \in X} \Delta(x^{[i, j]}, y^{[i, j]}) \le \sum_{x, y \in X} \Delta(x, y).\]

	We can also see that for $w(X^{[i, j]}) = w(X)$ to hold, we need $\Delta(x^{[i, j]}, y^{[i, j]}) = \Delta(x, y)$ for all $x, y \in X$.
	This is exactly the case when the comparator $[i, j]$ swaps the values of $i$ and $j$ for either all or for none of the $x \in X$ which happens exactly if $X^{[i, j]} = X$ or $X^{[i, j]} = X^{(i, j)}$.
	In both cases the sequence sets are permutation-similar, $X^{[i, j]} \permsim X$, so for a successor $X^{[i, j]} \in \mathcal U(X)$ we have $w(X^{[i,j]}) < D(X)$.
\end{proof}

As we perform canonicalization, we also need to make sure that it does not change the weight of a sequence set:

\begin{lemma}[Weights of Similar Sequence Sets]
	Given two sequence sets $X, Y \subseteq B^n$ with $X \sim Y$, we have $w(X) = w(Y)$
\end{lemma}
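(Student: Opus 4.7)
The plan is to reduce to the two generating relations $\permsim$ and $\negsim$ and show that the Hamming distance is invariant under each of them pointwise, whence the sum $\sum_{x,y\in X}\Delta(x,y)$ is clearly preserved.

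First I would invoke the remark following \autoref{def:sim}: since $\sim$ is generated by $\permsim$ and $\negsim$, and these commute in the sense that whenever $X\sim Y$ we can interpose a $T$ with $X\permsim T\negsim Y$, it suffices to establish $w(X)=w(Y)$ separately in the two cases $X\permsim Y$ and $X\negsim Y$.

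For the permutation case, suppose $Y = X^\sigma$ for some $\sigma\in S_n$. Then $x\mapsto x^\sigma$ is a bijection $X\to Y$, so it is enough to show $\Delta(x^\sigma,y^\sigma)=\Delta(x,y)$ for all $x,y\in X$. This is immediate from the defining identity $(x^\sigma)_{i^\sigma}=x_i$: the map $i\mapsto i^\sigma$ is a bijection $[n]\to[n]$ that restricts to a bijection between $\{i:x_i\ne y_i\}$ and $\{j:(x^\sigma)_j\ne (y^\sigma)_j\}$, so the two sets have the same cardinality. Summing gives $w(X^\sigma)=w(X)$.

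For the negation case, suppose $Y=X^{\mathbf b}$. Again $x\mapsto x^{\mathbf b}$ is a bijection $X\to Y$, and $\Delta(x^{\mathbf b},y^{\mathbf b})=\Delta(x,y)$ because for each channel $i$ the values $x_i$ and $y_i$ agree if and only if their negations agree. Hence $w(X^{\mathbf b})=w(X)$.

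There is no real obstacle here; the only subtle point is to remember that $w$ is a \emph{set}-based sum, so I need to use that the relevant actions are bijections on the underlying sequence sets (which they are, as noted in the preliminaries) in order to re-index the sum over $Y$ as a sum over $X$. Chaining the two invariances along any path generating $X\sim Y$ completes the proof.
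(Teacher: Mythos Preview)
Your proposal is correct and follows essentially the same approach as the paper: both reduce to the pointwise invariance of the Hamming distance under permutation and under Boolean negation, and conclude by re-indexing the sum. The only cosmetic difference is that the paper phrases the permutation case via transpositions $(i,j)\in E_n$ (which generate $S_n$) rather than a general $\sigma$, but this is the same argument.
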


\begin{proof}
	Follows from $\Delta(x, y) = \Delta(x^{(i, j)}, y^{(i, j)})$ for every $(i, j) \in E_n$ and $\Delta(x, y) = \Delta(x^\mathbf b, y^\mathbf b)$
\end{proof}

With this we can show that our algorithm is correct:

\begin{lemma}[Correctness of \proc{MemoMinSize}]
	For all $X \in \cW^\sim$, $\proc{MemoMinSize}(X)$ terminates and returns $s(X)$.
\end{lemma}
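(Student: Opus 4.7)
The plan is to establish termination and correctness simultaneously by strong induction on the weight $w(X)$, maintaining as an invariant that every entry of the global table $t$ is either $\top$ or equals $s(X)$ for its key $X$, and that once $\proc{MemoMinSize}(X)$ returns, $t(X) = s(X)$.

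For the base case, suppose $w(X)$ is minimal among all well-behaved sequence sets (in fact, for any $X$ with $s(X) = 0$). By the well-behaved sortedness test (\autoref{fact:wbnonempty}), $s(X) = 0$ if and only if $|X| \le n+1$, and the initialization sets $t(X) = 0$ in exactly that case. Hence when $s(X) = 0$ the condition $t(X) = \top$ fails on entry, no recursive call is made, and the function returns $0 = s(X)$, preserving the invariant.

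For the inductive step, assume the claim holds for every $Y \in \cW^\sim$ with $w(Y) < w(X)$, and consider $X \in \cW^\sim$ with $s(X) \ge 1$. If $t(X) \ne \top$, then by the stored invariant $t(X) = s(X)$ and we return correctly. Otherwise the algorithm recurses on each $Y \in \mathcal U(X)^\sim$. For each such $Y$ there is some $Z \in \mathcal U(X)$ with $Y = Z^\sim \sim Z$. By the successor weight lemma (\autoref{fact:succweight}) we have $w(Z) < w(X)$, and by the weight-invariance of similarity $w(Y) = w(Z) < w(X)$. Moreover $Z$ is well-behaved by \autoref{fact:wbclose} and hence $Y \in \cW^\sim$ by \autoref{fact:wbpart}. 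The induction hypothesis therefore applies, so each recursive call terminates and returns $s(Y)$. Combining the successor recurrence (\autoref{fact:succ}) with similarity invariance (\autoref{fact:sim}),
\[
s(X) \;=\; 1 + \min\{\, s(Z) \mid Z \in \mathcal U(X) \,\} \;=\; 1 + \min\{\, s(Y) \mid Y \in \mathcal U(X)^\sim \,\},
\]
so the value written into $t(X)$ is precisely $s(X)$, the invariant is preserved, and the returned value is correct. Termination of the entire top-level call follows because the recursion tree has depth bounded by the finitely many strictly decreasing values of $w$.

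The only delicate point is the interaction between memoization and the inductive hypothesis: the table $t$ is shared across recursive invocations, so one must argue that every write sets $t(X)$ to the correct value and every read returns a correct value. This is handled by promoting the invariant on $t$ into the induction statement, which is why I would phrase the induction as jointly asserting termination, correctness of the returned value, and preservation of the table invariant, rather than treating them separately.
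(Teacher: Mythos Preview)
Your proposal is correct and follows essentially the same approach as the paper: strong induction on the weight $w(X)$, carrying the invariant that $t(X) \in \{\top, s(X)\}$ with $t(X)=0$ whenever $s(X)=0$, and appealing to \autoref{fact:succweight}, the weight-invariance under similarity, \autoref{fact:wbclose}/\autoref{fact:wbpart}, \autoref{fact:succ}, and \autoref{fact:sim} in the same way. Your explicit remark that the table invariant must be folded into the induction hypothesis to handle shared state across recursive calls is exactly the point the paper makes more tersely.
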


\begin{proof}
	For the initial value of the map $t$, we have either $t(X) = s(X)$ or $t(X) = \top$ for every $X \in \cW^\sim$ and we have $t(X) = 0$ if $s(X) = 0$.
	We will show that this is an invariant of the algorithm.

	The non-negative integer weight of \proc{MemoMinSize}'s argument strictly decreases with every recursive invocation, thus the algorithm terminates.
	By induction we can assume that this invariant is maintained when recursively invoking \proc{MemoMinSize}.

	For an $X$ with $t(X) = \top$, by the invariant, $s(X) > 0$, which is the condition of the successor recurrence.
	When \proc{MemoMinSize} assigns to $t(X)$ it assigns the value $s(X)$ computed from the successor recurrence (\autoref{fact:succ}) with additional canonicalization, assuming the recursive calls $\proc{MemoMinSize}(Y)$ return $s(Y)$.

	The recursive calls receive a well-behaved argument as they are constructed from $X$ using operations closed under well-behavedness (\autoref{fact:wbclose}).
	Therefore correct results of the recursive calls follow from induction.

	Due to invariance under similarity (\autoref{fact:sim}), the added canonicalization in the successor recurrence does not change the result.

	Therefore all assignment made to $t(X)$ maintain the invariant.
	Finally, when \proc{MemoMinSize} returns, we have $t(X) \ne \top$ thus get a return value of $t(X) = s(X)$ from our invariant.
\end{proof}

\subsection{Successive Approximation}

By moving from memoization based dynamic programming to successive approximation based dynamic programming, we obtain an improved algorithm that we can later extend to make use of the Huffman bound.

To do so we replace the map $t$ in \autoref{alg:memodp}, which stores the values of $s$ already computed, with the map $b$ which stores bounding intervals such that it is a pointwise over-approximation of $s$ on canonical well-behaved sets, i.e.\ such that $s(X) \in b(X)$ for all $X \in \mathcal W^\sim$.
Below we will write $s \pwin b$ and $b' \pwsubseteq b$ for such pointwise membership and inclusion, respectively, both restricted to the domain of the right hand side $b$.

The algorithm progresses by repeatedly selecting a sequence set $X$ and updating the interval $b(X)$ by moving its bounds closer to $s(X)$, improving the approximation.
Repeatedly updating $b(X)$ eventually turns it into the singleton interval $\{s(X)\}$, in which case we say that $X$ is \emph{fathomed}.

Similar to how we initialized $t$, we will use the well-behaved sortedness test (\autoref{fact:wbnonempty}) to initialize $b(X) = \{0\}$ for all $X \in \cW^\sim$ with $s(X) = 0$.
This ensures that for any non-fathomed $X$ we have $s(X) \ge 1$.

To avoid infinite intervals we will use an arbitrary upper bound $\hat s(n) \ge s(n)$ and initialize $b(X) = \{1, \ldots, \hat s(n)\}$ for every $X \in \cW^\sim$ with $s(X) \ge 1$. This initial state of the map $b$ is
denoted $b_0$.
As before, an implementation would only store values of $b$ for keys that differ from the corresponding values in $b_0$.

For $\hat s(n)$ it is advantageous to use the best known upper bounds for $s(n)$ as this means the algorithm does not have to rediscover these.

In the memoization variant we always recurse on all subproblems before computing and storing the value for the current problem.
In a successive approximation approach, the updates to approximation $b$ and the scheduling of subproblems are decoupled.
This freedom makes it possible to heuristically schedule exploration of subproblems in a way that attempts to make the most progress with the least amount of work required.

We first define a procedure which updates $b(X)$ for a given $X \in \cW^\sim$ based on other current values of $b$.
The update replaces $b(X)$ with the best approximation of $s(X)$ derivable from the successor recurrence under the assumption that $s \pwin b$.

To get this best approximation of $s(X)$, we compute the (canonicalized) successor recurrence (\autoref{fact:succ}) with all right hand side uses of $s$ replaced with an arbitrary function $a$ that is pointwise contained in our current approximation $b$.
We then take the set containing all possible values of this modified successor recurrence for different values of $a$.
This is the set $B = \{\, 1 + \min\, \{\, a(Y) \mid Y \in \mathcal U(X)^\sim \,\} \mid a \pwin b \,\}$.
It is a valid over-approximation of $s(X)$ as, by assumption, $s$ itself is among the possible choices for $a$.

We then intersect $b(X)$ with the smallest interval containing $B$, i.e. $b(X) \gets b(X) \cap \{ \min B, \ldots, \max B \}$.
As $1 + \min\, \{\, a(Y) \mid Y \in \mathcal U(X)^\sim \,\}$ is monotone in each $a(Y)$, we have $\min B = 1 + \min\, \{\, \min b(Y) \mid Y \in \mathcal U(X)^\sim \,\}$ and $\max B = 1 + \min\, \{\, \max b(Y) \mid Y \in \mathcal U(X)^\sim \,\}$.

\begin{algo}
	\algorithm{
		{\normalfont\emph{arbitrarily choose $\hat s(n) \ge s(n)$ for all $n$}}

		global $b \gets b_0 = (X \mapsto$ if $s(X) = 0$ then $\{0\}$ else $\{1, \ldots, \hat s(n)\})$

		\cmnt{$\uparrow$ Here $s(X) = 0$ if and only if $|X| \le n + 1$ for $X \in \cW_n^\sim$}

		\AW

		procedure $\proc{SuccessorUpdate}(X)\colon$ \AC{
			if $X$ is not fathomed$\colon$ \cmnt{Also imples $s(X) \ne 0$ due to the initialization of $b$} \AC{
				for $f$ in $\{\min, \max\}\colon$ $B_f \gets 1 + \min\, \{\, f(b(Y)) \mid Y \in \mathcal U(X)^\sim \,\}$

				$b(X) \gets b(X) \cap \{B_{\min}, \ldots, B_{\max}\}$ \cmnt{As $B_{\min} \le s(X) \le B_{\max}$}
			}
		}
	}
\end{algo}

As we perform the update by intersecting with the best approximation, by construction, we get the following invariant and progress guarantee:

\begin{lemma}[\proc{SuccessorUpdate} Invariant] \label{fact:succupdateinv}
	Given a global state $b = b_u$ which is a point-wise over-approximation of $s$, i.e.\ $s \pwin b_u \pwsubseteq b_0$, performing a $\proc{SuccessorUpdate}(X)$ for an $X \in \cW^\sim$ results in a global state $b = b_v$ which is also a point-wise over-approximation of $s$ and which is a refinement of the previous global state, i.e.\ $s \pwin b_v \pwsubseteq b_u \pwsubseteq b_0$. \qed
\end{lemma}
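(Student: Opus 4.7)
The plan is to separate the easy pointwise-refinement part from the main content, which is that $s(X)$ still lies in the updated interval. Since $\proc{SuccessorUpdate}$ modifies $b$ only at the single argument $X$, and only in the non-fathomed branch, pointwise refinement away from $X$ is automatic, and transitively yields $b_v \pwsubseteq b_u \pwsubseteq b_0$. At $X$ itself the update is an intersection $b(X) \cap \{B_{\min}, \ldots, B_{\max}\}$ with the old $b(X)$, which is a subset of it, so refinement at $X$ is also immediate.

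For $s \pwin b_v$, at every $Y \ne X$ the value of $b$ is unchanged and $s(Y) \in b_u(Y) = b_v(Y)$ follows from the hypothesis. The remaining task is $s(X) \in b_v(X)$. If $X$ is fathomed, $b_v(X) = b_u(X)$ and the hypothesis suffices. Otherwise I would first argue that non-fathomedness forces $s(X) \ge 1$: if $s(X) = 0$, then by the initialization $b_0(X) = \{0\}$, so $b_u(X) \subseteq \{0\}$; combined with $s(X) \in b_u(X)$ this gives $b_u(X) = \{0\}$, a singleton, contradicting non-fathomedness. With $s(X) \ge 1$ in hand, the successor recurrence (\autoref{fact:succ}) together with similarity invariance (\autoref{fact:sim}) applied to each successor yields
\[
s(X) \;=\; 1 + \min\,\{\, s(Y) \mid Y \in \mathcal U(X)^\sim \,\}.
\]
The hypothesis $s \pwin b_u$ gives $\min b_u(Y) \le s(Y) \le \max b_u(Y)$ for every such $Y$, and monotonicity of $1 + \min\{\cdot\}$ in each argument then sandwiches $s(X)$ between $B_{\min}$ and $B_{\max}$ exactly as defined in the procedure. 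Hence $s(X) \in \{B_{\min}, \ldots, B_{\max}\} \cap b_u(X) = b_v(X)$, as desired.

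The main obstacle is essentially bookkeeping: the substantive inputs (the successor recurrence, similarity invariance, and closure of $\cW$ under comparator actions) are already available. The only subtlety is the small argument using the initialization of $b_0$ to rule out $s(X) = 0$ in the non-fathomed branch, without which the appeal to the successor recurrence (\autoref{fact:succ}) would not be justified; once that is noted, the rest is a direct application of monotonicity.
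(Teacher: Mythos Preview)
Your proposal is correct and matches the paper's approach: the paper proves this lemma ``by construction,'' pointing to the preceding discussion where the interval $\{B_{\min},\ldots,B_{\max}\}$ is derived by substituting an arbitrary $a \pwin b$ (including $a = s$) into the successor recurrence and using monotonicity of $1+\min$, and then noting that intersecting with $b(X)$ only refines. You have simply spelled out those details, including the small observation (which the paper records as a comment in the pseudocode) that non-fathomedness together with the initialization of $b_0$ forces $s(X) \ge 1$ so that \autoref{fact:succ} applies.
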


\begin{lemma}[\proc{SuccessorUpdate} Progress] \label{fact:succupdateprog}
	Given a global state $b = b_u$ with $s \pwin b_u \pwsubseteq b_0$ where all canonical successors $U(X)^\sim$ of $X$ are fathomed, performing a $\proc{SuccessorUpdate}(X)$ for an $X \in \cW^\sim$ results in a global state $b = b_v$ where $X$ is fathomed. \qed
\end{lemma}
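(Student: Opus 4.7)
The plan is to perform a case split on whether $X$ is already fathomed before the update, and then show that in the nontrivial case the update collapses $b(X)$ to a singleton. If $X$ is fathomed on entry there is nothing to do, since $b_v(X) = b_u(X)$ is already a singleton. Otherwise the guard in \proc{SuccessorUpdate} triggers the actual update, and by the initialization of $b_0$ (which pins every $s(X)=0$ sequence set to $\{0\}$) together with $b_u \pwsubseteq b_0$ we know $s(X) \ge 1$, so the successor recurrence (\autoref{fact:succ}) applies.

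Next, I would unpack the hypothesis that every canonical successor $Y \in \mathcal U(X)^\sim$ is fathomed in $b_u$, i.e.\ $b_u(Y) = \{s(Y)\}$, so that $\min b_u(Y) = \max b_u(Y) = s(Y)$. Substituting into the definitions gives
\[
B_{\min} \;=\; 1 + \min\{\, s(Y) \mid Y \in \mathcal U(X)^\sim\,\} \;=\; B_{\max}.
\]
By the similarity invariance of the minimal size (\autoref{fact:sim}), canonicalization of successors is irrelevant to $s$, so the successor recurrence yields $s(X) = 1 + \min\{\, s(Y) \mid Y \in \mathcal U(X)^\sim \,\} = B_{\min} = B_{\max}$.

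Finally I would invoke the standing invariant $s \pwin b_u$ to obtain $s(X) \in b_u(X)$, and then conclude that
\[
b_v(X) \;=\; b_u(X) \cap \{B_{\min},\dots,B_{\max}\} \;=\; b_u(X) \cap \{s(X)\} \;=\; \{s(X)\},
\]
which is precisely the statement that $X$ is fathomed in $b_v$. I do not anticipate any real obstacle: the argument is a direct unfolding of the successor recurrence under the fathoming hypothesis, and the only subtlety is ensuring $s(X) \ge 1$ so that the recurrence is applicable, which is handled cleanly by the initialization of $b_0$.
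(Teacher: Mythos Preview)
Your proposal is correct and is exactly the intended argument; the paper states this lemma with a bare \qed, treating it as immediate from the construction of \proc{SuccessorUpdate}, and your proof is precisely the natural unfolding of that construction.
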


The algorithm below uses this update recursively to improve any non-fathomed $X$.
It allows for an arbitrary heuristic choice in two places.
The first choice is whether to further improve the bound after an initial improvement when $X$ is not yet fathomed.
The second choice is which subset of successors to recurse on between two \proc{SuccessorUpdate} calls.

\begin{algo}
	\algorithm{
		procedure $\proc{SuccessorImprove}(X)\colon$ \AC{
			if $X$ is not fathomed$\colon$ \cmnt{Otherwise there is nothing to improve} \AC{
				$c \gets b(X)$

				loop$\colon$ \AC{
					$\proc{SuccessorUpdate}(X)$

					if $b(X) \ne c$ and $(X$ is fathomed or {\normalfont\emph{arbitrary choice}}$)\colon$ return

					$\mathcal A \gets \{\, Y \in \mathcal U(X)^\sim \mid Y \textbf{ is not fathomed} \,\}$

					{\normalfont\emph{arbitrarily choose a non-empty $\mathcal B \subseteq \mathcal A$}}

					for $Y \in \mathcal B\colon$ $\proc{SuccessorImprove}(Y)$
				}
			}
		}
	}
\end{algo}

This algorithm maintains the same invariant as \proc{SuccessorUpdate}.
It also guarantees progress when invoked for an arbitrary non-fathomed $X$.

\begin{lemma}[Correctness of \proc{SuccessorImprove}] \label{fact:succimprovecorrect}
	Given a global state $b = b_u$ with $s \pwin b_u \pwsubseteq b_0$, performing $\proc{SuccessorImprove}(X)$ for a non-fathomed $X \in \cW^\sim$ results in a global state $b = b_v$ with a strictly improved bound $b_v(X) \subset b_u(X)$ and which is a refinement of the previous global state, $s \pwin b_v \pwsubseteq b_u \pwsubseteq b_0$.
\end{lemma}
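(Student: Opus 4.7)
The plan is to prove all three conclusions simultaneously by strong induction on the sequence-set weight $w(X)$. The key ingredients are the strict weight decrease on canonical successors (\autoref{fact:succweight} combined with weight-invariance under similarity, which together give $w(Y) < w(X)$ for every $Y \in \mathcal U(X)^\sim$), the invariant preservation result \autoref{fact:succupdateinv}, and the progress guarantee \autoref{fact:succupdateprog} for \proc{SuccessorUpdate}. The base case is vacuous: at minimum weight $w(X) = 0$ one has $|X| \le 1 \le n+1$, so $s(X) = 0$ and $X$ is already fathomed, contradicting the hypothesis of the lemma.

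Assuming the induction hypothesis, the invariant $s \pwin b \pwsubseteq b_0$ is handled locally: each step inside the loop is either a \proc{SuccessorUpdate}(X), which preserves it by \autoref{fact:succupdateinv}, or a recursive \proc{SuccessorImprove}(Y) on a $Y \in \mathcal U(X)^\sim$ of strictly smaller weight, which preserves it by the induction hypothesis. Strict improvement is equally direct once termination is in hand: the only exit from the inner loop is the explicit \textbf{return}, whose guard enforces $b(X) \ne c$; combined with the monotonicity half of the invariant this yields $b_v(X) \subsetneq c = b_u(X)$.

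The main obstacle is termination of the outer loop. To handle it I would introduce the potential
\[
\Phi \;=\; \sum_{Y \in \mathcal U(X)^\sim} |b(Y)|,
\]
which is a bounded non-negative integer. Between two consecutive passes through the top of the loop without a \textbf{return}, the algorithm executes recursive calls \proc{SuccessorImprove}(Y) on a non-empty $\mathcal B \subseteq \mathcal A$ of non-fathomed successors; by the induction hypothesis each such call terminates and strictly shrinks $b(Y)$, so $\Phi$ drops by at least $|\mathcal B| \ge 1$. Hence only finitely many non-returning iterations can occur before every successor is fathomed, at which point \autoref{fact:succupdateprog} forces the next \proc{SuccessorUpdate}(X) to fathom $X$; since the entry value $c$ is not a singleton, this yields $b(X) \ne c$, and the newly fathomed $X$ then triggers the mandatory branch of the \textbf{return} guard. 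A subtle point worth checking is that the loop body never reaches the choice of $\mathcal B$ with $\mathcal A = \emptyset$: if that happened at the start of some iteration then \proc{SuccessorUpdate}(X) would already have fathomed $X$ and the guard would fire before step~3 is evaluated.
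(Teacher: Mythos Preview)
Your proof is correct and follows essentially the same approach as the paper: induction on the weight $w(X)$ to justify the recursive calls, the \proc{SuccessorUpdate} invariant and progress lemmas for the updates, and the observation that the successor bounds can only shrink finitely often before all successors are fathomed, forcing a return. Your explicit potential $\Phi$ is just a cleaner packaging of the paper's informal ``eventually all canonical successors will be fathomed''; one small imprecision is that $\Phi$ need not drop by a full $|\mathcal B|$ per iteration (a later $Y \in \mathcal B$ may already have been fathomed as a side effect of an earlier recursive call, in which case its \proc{SuccessorImprove} call is a no-op), but it does drop by at least $1$ via the first call, which is all your argument actually needs.
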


\begin{proof}
	First we observe that, as before in \proc{MemoMinSize}, the recursion depth is limited due to the strictly decreasing sequence set weight (\autoref{fact:succweight}). Thus by induction we can assume correctness holds for all recursive \proc{SuccessorImprove} calls.

	If the procedure terminates, we get $s \pwin b_v \pwsubseteq b_u \pwsubseteq b_0$ from the correctness assumption of recursive calls and the \proc{SuccessorUpdate} invariant.

	We only return when we have $b_v(X) \ne b_u(X)$, so from $b_v \pwsubseteq b_u$ we get $b_v(X) \subset b_u(X)$.

	The choice for $\mathcal B$ is always possible, as $\mathcal A = \emptyset$ could only happen when the progress condition of \proc{SuccessorUpdate} holds (\autoref{fact:succupdateprog}).
	In that case \proc{SuccessorUpdate} would have fathomed $X$ by setting $b(X) \gets \{s(X)\}$ which would have caused us to immediately return before reaching this choice.

	Finally, to show termination, we can observe that at least one $b(Y)$ for a $Y \in \mathcal A$ gets smaller every iteration.
	This follows from the correctness assumption of the recursive calls and the nonempty choice of $\mathcal B$.
	As all $b(Y)$ for $Y \in \mathcal U(X)^\sim$ are finite and can only decrease in size, eventually all canonical successors will be fathomed.
	This is again the progress condition (\autoref{fact:succupdateprog}) that causes \proc{SuccessorUpdate} to fathom $X$ which then causes an immediate return from \proc{SuccessorImprove}.
\end{proof}

\begin{corollary}
	Given a well-behaved sequence set $X \in \cW$ we can compute $s(X)$ using \inlinealg{\;$b \gets b_0;$}\inlinealg{while}\inlinealg{$|b(X^\sim)| > 1\colon$}\inlinealg{$\,\{\proc{SuccessorImprove}(X^\sim)\}$\;}\ resulting in $b(X^\sim) = \{s(X)\}$.
\end{corollary}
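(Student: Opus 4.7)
The plan is to chain together the invariant, progress, and termination guarantees already established for \proc{SuccessorImprove}, so that the loop can only exit once $b(X^\sim)$ has been pinned down to the singleton $\{s(X)\}$. The main work is already done by the previous lemmas, so this is really an exercise in bookkeeping.

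First I would verify that the initial state $b_0$ satisfies the pointwise over-approximation invariant $s \pwin b_0$ on $\cW^\sim$. By the well-behaved sortedness test (\autoref{fact:wbnonempty}), the condition $|X| \le n + 1$ used to decide the initial value coincides with $s(X) = 0$, so the singleton $\{0\}$ correctly contains $s(X)$ in that branch. In the complementary branch, $s(X) \ge 1$ and $s(X) \le s(n) \le \hat s(n)$ by inclusion monotonicity (\autoref{fact:mono}), so $s(X) \in \{1, \ldots, \hat s(n)\} = b_0(X)$. Here I would also record that $X^\sim \in \cW$ by well-behavedness under similarity (\autoref{fact:wbpart}), so it is a legitimate argument for \proc{SuccessorImprove}, and that $s(X^\sim) = s(X)$ by similarity invariance (\autoref{fact:sim}).

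Next I would argue that each iteration of the loop preserves the invariant and strictly shrinks $b(X^\sim)$: the loop guard $|b(X^\sim)| > 1$ guarantees $X^\sim$ is not yet fathomed, so the correctness of \proc{SuccessorImprove} (\autoref{fact:succimprovecorrect}) produces an updated state $b$ with $s \pwin b \pwsubseteq b_0$ and with $b(X^\sim)$ a proper subset of its previous value. Since $b_0(X^\sim) \subseteq \{0, 1, \ldots, \hat s(n)\}$ is finite and $b(X^\sim)$ can only ever shrink, this forces termination after finitely many iterations.

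Upon termination the loop condition fails, so $|b(X^\sim)| \le 1$; combined with $s(X^\sim) \in b(X^\sim)$ from the invariant, this pins $b(X^\sim)$ to $\{s(X^\sim)\} = \{s(X)\}$, as desired. There is no serious obstacle here—everything is routine chaining of previous lemmas—apart from bookkeeping the edge case $s(X) = 0$, in which $X^\sim$ is already fathomed by $b_0$ and the loop body never executes, but then the conclusion $b(X^\sim) = \{0\} = \{s(X)\}$ holds immediately by the initialization.
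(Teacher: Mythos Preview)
Your proposal is correct and follows essentially the same approach as the paper: repeatedly invoking the correctness of \proc{SuccessorImprove} (\autoref{fact:succimprovecorrect}) to strictly shrink $b(X^\sim)$ until it is fathomed, then using similarity invariance to identify $s(X^\sim)$ with $s(X)$. The paper's own proof is a two-line sketch of exactly this argument; your version simply fills in the bookkeeping (checking the invariant for $b_0$, noting $X^\sim \in \cW^\sim$, handling the $s(X)=0$ edge case) that the paper leaves implicit.
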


\begin{proof}
	Every call to $\proc{SuccessorImprove}(X^\sim)$ will decrease $|b(X^\sim)|$ until $X^\sim$ is fathomed and we have $b(X^\sim) = \{s(X^\sim)\} = \{s(X)\}$.
\end{proof}

With a suitable heuristic this already improves upon the memoizing algorithm.
When a sequence sets $X$ has two successors $Y, Z \in \mathcal U(X)$ with $\max b(Y) \le \min b(Z)$, the successive approximation algorithm never has to improve the bounds for $Z$ as $s(Y) \le s(Z)$.

\subsection{Using the Huffman Bound}

Apart from freedom in choosing \proc{SuccessorImprove}'s heuristics, the successive approximation algorithm offers flexibility by allowing extension of the bound improvement strategy with any kind of update that maintains the same invariant as \proc{SuccessorUpdate} does (\autoref{fact:succupdateinv}).
This is how we will incorporate the Huffman bound in our algorithm.

First we are going add canonicalization to the well-behaved Huffman bound (\autoref{fact:huffmanwp}), denoting it with $h$, obtaining $s(X) \ge h = H_{1{+}\mathrm{max}}\,\{\, s((X/i)^{\circ\sim}) \mid i \in p(X) \,\}_\#$ given a non-empty well-behaved $X \subseteq \cW^\sim \setminus \{\emptyset\}$.
Following the same approach as for the successor recurrence, we are going to construct a set $H$ from the canonicalized well-behaved Huffman bound.
We replace every right hand side use of $s$ with an arbitrary $a$ pointwise contained in our current approximation $b$ of $s$.
From this we obtain a set of possible values that contains $h$: \[h \in H = \{\, H_{1{+}\mathrm{max}}\,\{\,a((X / i)^{\circ\sim}) \mid i \in p(X) \,\}_\# \mid a \pwin b \,\}.\]
Since Huffman's algorithm is monotone (\autoref{fact:huffmanmono}), we can again efficiently compute the bounds of the smallest interval containing $H$:
\begin{align*}
	\min H & = H_{1{+}\mathrm{max}}\,\{\, \min b((X/i)^{\circ\sim}) \mid i \in p(X) \,\}_\# \quad\text{ and } \\
	\max H & = H_{1{+}\mathrm{max}}\,\{\, \max b((X/i)^{\circ\sim}) \mid i \in p(X) \,\}_\#.
\end{align*}

The only update that we can safely perform is intersecting $b(X)$ with $\{\min H, \ldots\}$ as $h \in H$ is only a lower bound for $s(X)$.
Nevertheless computing $\max H$ is useful to us. If $\max H \le \min b(X)$ holds, we discover that performing more work to decrease the size of $H$ is unnecessary, as our bounds $b(X)$ are already as good as we can hope to obtain from the Huffman bound for this $X$.
We will use the return value to notify the caller when this happens.

In addition to applying the Huffman bound to $X \in \cW^\sim$, we can also apply it to the negated sequence set $X^\mathbf b$ which in general might get us a different bound.

Note that we don't have to test for $X = \emptyset$ when using the well-behaved Huffman bound, as $s(\emptyset) = 0$ and thus $\emptyset$ is fathomed.

This results in the following update algorithm:

\begin{algo} \label{alg:huffmanupdate}
	\algorithm{
		procedure $\proc{HuffmanUpdate}(X)\colon$ \AC{
			$u \gets -\infty$

			if $X$ is not fathomed$\colon$ \cmnt{Otherwise there is nothing to update} \AC{
				for $Z$ in $\{X, X^\mathbf b\}\colon$ \AC{
					for $f$ in $\{\min, \max\}\colon$ $H_f \gets H_{1{+}\mathrm{max}}\,\{\, f(b((Z/i)^{\circ\sim})) \mid i \in p(X) \,\}_\#$

					$b(X) \gets b(X) \cap \{H_{\min}, \ldots\}$ \cmnt{As $s(X) \ge H_{\min}$}

					$u \gets \max\,\{u, H_{\max}\}$
				}
			}

			return $\min b(X) \ge u$
		}
	}
\end{algo}

Again, as we constructed \proc{HuffmanUpdate} such that it compute a sound approximation, we get the following invariant:

\begin{lemma}[\proc{HuffmanUpdate} Invariant] \label{fact:huffmanupdateinv}
	Given a global state $b = b_u$ which is a point-wise over-approximation of $s$, i.e.\ $s \pwin b_u \pwsubseteq b_0$, performing a $\proc{HuffmanUpdate}(X)$ for an $X \in \cW^\sim$ results in a global state $b = b_v$ which is also a point-wise over-approximation of $s$ and which is a refinement of the previous global state, i.e.\ $s \pwin b_v \pwsubseteq b_u \pwsubseteq b_0$. \qed
\end{lemma}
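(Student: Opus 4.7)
The plan is to split on whether $X$ is already fathomed. If it is, then \proc{HuffmanUpdate} only computes the return value and leaves $b$ itself untouched, so both $s \pwin b_v$ and $b_v \pwsubseteq b_u$ are inherited directly from the hypothesis on $b_u$. In the remaining case, the only writes that \proc{HuffmanUpdate} performs are the two assignments $b(X) \gets b(X) \cap \{H_{\min}, \ldots\}$, so $b_v$ agrees with $b_u$ at every coordinate other than $X$, and at $X$ satisfies $b_v(X) \subseteq b_u(X)$. This immediately yields the refinement conclusion $b_v \pwsubseteq b_u \pwsubseteq b_0$.

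The substantive step is soundness, $s \pwin b_v$. Since every coordinate other than $X$ is unchanged, it suffices to check that each of the two intersections preserves $s(X) \in b(X)$, which reduces to showing $s(X) \ge H_{\min}$ in each of the iterations $Z \in \{X, X^{\mathbf b}\}$. For $Z = X$, the fact that $X$ is not fathomed, together with the initialization rule $b_0(X) = \{0\}$ whenever $s(X) = 0$, forces $X$ to be nonempty; by \autoref{fact:wbprunable} we then have $p(X) \ne \emptyset$, so the well-behaved Huffman bound (\autoref{fact:huffmanwp}) gives
\[
  s(X) \ge H_{1{+}\mathrm{max}}\,\{\, s((X/i)^{\circ}) \mid i \in p(X) \,\}_\#.
\]
Similarity invariance (\autoref{fact:sim}) lets me rewrite each leaf $s((X/i)^{\circ})$ as $s((X/i)^{\circ\sim})$, and monotonicity of Huffman's algorithm (\autoref{fact:huffmanmono}) combined with the hypothesis $s \pwin b_u$ (which supplies $s((X/i)^{\circ\sim}) \ge \min b_u((X/i)^{\circ\sim})$) lets me lower each leaf value to obtain exactly $s(X) \ge H_{\min}$.

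For $Z = X^{\mathbf b}$ the argument is the same with one preliminary step: \autoref{fact:wbclose} ensures that $X^{\mathbf b}$ is well-behaved and hence still has prunable channels by \autoref{fact:wbprunable}, and negation invariance (\autoref{fact:invert}) gives $s(X) = s(X^{\mathbf b})$. Applying the well-behaved Huffman bound to $X^{\mathbf b}$ and then repeating the similarity-invariance and Huffman-monotonicity steps yields $s(X) = s(X^{\mathbf b}) \ge H_{\min}$, so intersecting $b(X)$ with $\{H_{\min}, \ldots\}$ preserves $s(X) \in b(X)$ in this iteration too. Combined, both intersections preserve soundness, completing $s \pwin b_v$.

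I do not anticipate a real obstacle; the proof is essentially a chain of citations. The one spot that warrants bookkeeping care is keeping the two normalizations on each pruned set aligned, since the algorithm always takes the well-behaved interior $(\cdot)^\circ$ and then a canonical similarity representative $(\cdot)^\sim$ before indexing into $b$. It is precisely the combination of \autoref{fact:huffmanwp}, \autoref{fact:sim}, and \autoref{fact:huffmanmono} that allows those two normalizations to be absorbed into the chain of inequalities without flipping its direction.
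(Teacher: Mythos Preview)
Your proof is correct and aligns with the paper's approach: the paper gives no explicit proof (the lemma is stated with a trailing \qed), treating the invariant as immediate from the construction of \proc{HuffmanUpdate} as a sound over-approximation. Your argument is precisely the detailed unpacking of that claim, chaining the well-behaved Huffman bound (\autoref{fact:huffmanwp}), similarity invariance (\autoref{fact:sim}), and monotonicity of Huffman's algorithm (\autoref{fact:huffmanmono}) in the expected order.
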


To enable the algorithm using \proc{HuffmanUpdate} to make progress we ensure that it will return \textbf{true} when all involved sequence sets are fathomed.

\begin{lemma}[\proc{HuffmanUpdate} Progress] \label{fact:huffmanupdateprog}
	Given a global state $b$ with $s \pwin b \pwsubseteq b_0$ and where $(Y/i)^{\circ\sim}$ is fathomed for all $i \in p(Y)$ and all $Y \in \{X, X^\mathbf b\}$, performing a $\proc{HuffmanUpdate}(X)$ will result in a return value of {\normalfont\textbf{true}}.
\end{lemma}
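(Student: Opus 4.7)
The plan is to analyze the procedure \proc{HuffmanUpdate} in two cases, driven by whether $X$ is already fathomed on entry, and then to exploit the hypothesis that all the relevant canonicalized well-behaved pruned sets are fathomed to conclude that the returned condition $\min b(X) \ge u$ holds.

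First, suppose $X$ is fathomed. Then the outer \textbf{if} block is skipped, $u$ retains its initial value $-\infty$, and since $b(X)$ is a nonempty singleton (in particular $\min b(X) \in \mathbb{N}_0$), the comparison $\min b(X) \ge -\infty$ trivially holds, so the procedure returns \textbf{true}.

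Now suppose $X$ is not fathomed. The key observation is that the hypothesis makes every leaf value appearing in the Huffman computations deterministic. Concretely, for each $Z \in \{X, X^\mathbf b\}$ and each prunable channel $i$, the set $(Z/i)^{\circ\sim}$ is fathomed, so $b((Z/i)^{\circ\sim}) = \{s((Z/i)^{\circ\sim})\}$ and therefore
\[
    \min b((Z/i)^{\circ\sim}) = \max b((Z/i)^{\circ\sim}) = s((Z/i)^{\circ\sim}).
\]
Consequently, within the iteration for a given $Z$, the quantities $H_{\min}$ and $H_{\max}$ computed by Huffman's algorithm agree; call their common value $h_Z$. The update $b(X) \gets b(X) \cap \{h_Z, \ldots\}$ forces $\min b(X) \ge h_Z$ after the iteration, and subsequent intersections can only further restrict $b(X)$ from below. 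Meanwhile $u$ is updated to $\max\{u, h_Z\}$, so after the two iterations $u = \max\{h_X, h_{X^\mathbf b}\}$ while $\min b(X) \ge \max\{h_X, h_{X^\mathbf b}\}$. Hence $\min b(X) \ge u$ and the return value is \textbf{true}.

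The only subtlety to handle carefully is confirming that $b(X)$ remains nonempty after the intersections, so that $\min b(X)$ is well-defined. This follows immediately from the \proc{HuffmanUpdate} invariant (\autoref{fact:huffmanupdateinv}): every state visited during the procedure satisfies $s \pwin b$, so in particular $s(X) \in b(X)$ at all times. No step in the argument is substantively difficult; the main thing to get right is the bookkeeping that fathomedness of the pruned canonical sets collapses the $\min$/$\max$ Huffman computations onto a single value, which is exactly what aligns the bound written into $b(X)$ with the threshold $u$ accumulated from $H_{\max}$.
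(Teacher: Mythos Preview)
Your proof is correct and follows the same approach as the paper's (much terser) argument: fathomedness of all $(Z/i)^{\circ\sim}$ forces $H_{\min}=H_{\max}$ for each $Z\in\{X,X^\mathbf b\}$, whence the intersections push $\min b(X)$ up to at least $u$. Your treatment is in fact more complete, since you separately handle the trivial case where $X$ is already fathomed and explicitly justify that $b(X)$ stays nonempty via the invariant.
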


\begin{proof}
	In this case the Huffman bound for $X$ and for $X^\mathbf b$ will be computed exactly, so for both $H_{\min} = H_{\max}$ and thus after the update $\min b(X) \ge u$.
\end{proof}

Before presenting a new improvement routine that combines \proc{SuccessorUpdate} and \proc{HuffmanUpdate}, we will introduce a third way to update bounds.

\subsection{Pruning Channels}

If we are given a sequence set $X \in \cW$ that has only a single prunable channel $p(X) = \{i\}$, we can do better than either the \proc{SuccessorUpdate} or the \proc{HuffmanUpdate}.

We start with an observation about threshold sets:

\begin{lemma}
	Given a threshold set $T(y)$ with $y \in R_n$ where $y_i = n$, there is only a single prunable channel $p(T(y)) = \{i\}$ and $T(y)/i = T(y/i)$.
\end{lemma}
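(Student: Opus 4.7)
The plan is to unwrap the definitions of threshold sets, one-hot sequences, and the pruning operation, and verify both assertions by direct computation. Both claims reduce to keeping track of which threshold function $t_k$ sends $y$ to a particular pattern.

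First I would establish that $i$ is prunable. By definition, $j \in p(T(y))$ iff $\ohot{j}{n} \in T(y)$, i.e.\ iff there exists $k$ with $1 \le k \le n+1$ such that $y^{t_k} = \ohot{j}{n}$. This requires $y_j \ge k$ and $y_{j'} < k$ for all $j' \neq j$. Since the values of $y$ are exactly $\{1, \ldots, n\}$, the condition $y_{j'} < k$ for all $j' \neq j$ forces $k > \max\{y_{j'} : j' \neq j\}$, which in turn forces $y_j$ to be the maximum value $n$. Hence $j = i$ is the unique prunable channel, and taking $k = n$ confirms $\ohot{i}{n} = y^{t_n} \in T(y)$.

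For the second part I would compute $T(y)/i$ directly. By \autoref{def:prunedset}, $T(y)/i = \{x/i : x \in T(y),\, x_i = 1\}$. Writing $x = y^{t_k}$, the condition $x_i = 1$ becomes $y_i \ge k$, i.e.\ $k \le n$, so $k$ ranges over $\{1, \ldots, n\}$. Pointwise function application commutes with removing a coordinate, so $y^{t_k}/i = (y/i)^{t_k}$, giving
\[
T(y)/i = \{(y/i)^{t_k} : 1 \le k \le n\}.
\]
Because $y_i = n$, the sequence $y/i$ is a permutation of $(1, \ldots, n-1)$, so $y/i \in R_{n-1}$ and the defining expression for $T(y/i)$ uses exactly the threshold functions $t_1, \ldots, t_n$. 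Hence $T(y/i)$ equals the right-hand side above, proving $T(y)/i = T(y/i)$.

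There is no real obstacle here; the only thing to be careful about is the index range for threshold functions when the length drops from $n$ to $n-1$ (the upper limit becomes $n$ rather than $n+1$), which matches precisely the set of $k$'s surviving the $x_i = 1$ restriction.
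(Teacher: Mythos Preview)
Your proof is correct and follows essentially the same approach as the paper: you identify the unique prunable channel by analyzing which threshold $t_k$ can yield a one-hot sequence, and you compute $T(y)/i$ by commuting threshold application with coordinate deletion and tracking the index range $1 \le k \le n$. Your treatment of the first part is in fact slightly more explicit than the paper's, which simply asserts that $\ohot{j}{n} \in T(y)$ exactly if $j = i$.
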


\begin{proof}
	For a threshold set $T(y)$ with $y_i = n$ we have $\ohot jn \in T(y)$ exactly if $j = i$, so $i$ is the only prunable channel of $T(y)$.

	When pruning that channel, we can substitute the definition of threshold sets (\autoref{def:thresholdsets}) and exchange the order of pointwise threshold function application and pruning of sequences to get
	\begin{align*}
		T(y)/i &= \{\,y^{t_k}/i \mid 1 \le k \le n + 1, (y^{t_k})_i = 1\,\} \\
		&= \{\,(y/i)^{t_k} \mid 1 \le k \le n + 1,\, y_i \ge k\,\} \\
		&= \{\,(y/i)^{t_k} \mid 1 \le k \le n\,\} = T(y/i).
	\end{align*}
\end{proof}

With this, we get the following property for all well-behaved sequence set with only a single prunable channel:

\begin{lemma}[Unique Prunable Channel]
	Given a well-behaved sequence set $X \in \cW_n$ with $|p(X)| = 1$, the minimal size for sorting $X$ does not change when pruning $X$, i.e. $s(X) = s((X/i)^\circ)$.
\end{lemma}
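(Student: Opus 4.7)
The plan is to prove the two inequalities $s((X/i)^\circ) \le s(X)$ and $s(X) \le s((X/i)^\circ)$ separately. The first is immediate from the well-behaved Huffman bound (\autoref{fact:huffmanwp}): with $p(X) = \{i\}$ the multiset appearing in the bound has a single element, and Huffman's algorithm on a one-element multiset just returns its element.

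For the reverse direction I would first unpack the structure of $X$. Writing $X = \bigcup_k T(y_k)$ and invoking the preceding lemma, each $T(y_k)$ has a unique prunable channel, namely the index $j$ with $(y_k)_j = n$. Since $p(X) = \{i\}$ is the union of these singletons, every contributing $y_k$ must have $(y_k)_i = n$. Two consequences follow: every $x \in X$ is either the all-zero sequence or satisfies $x_i = 1$, and $X/i = \bigcup_k T(y_k/i)$ is a union of threshold sets of length $n - 1$, so $(X/i)^\circ = X/i$.

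The main step is then the construction of a partial sorting network for $X$ of the same size as a minimal one for $X/i$. Given such a minimal $c'$ on $n-1$ channels, I would pick any $\pi \in S_n$ that sends channel $i$ to channel $n$ (for instance the cyclic shift $i \mapsto n,\, i+1 \mapsto i,\, \ldots,\, n \mapsto n-1$) and consider $\pi c'$, where $c'$ is reinterpreted as a network on $n$ channels acting only on channels $\{1, \ldots, n-1\}$ and leaving channel $n$ alone. Exchanges contribute no comparators, so $s(\pi c') = s(c') = s((X/i)^\circ)$. The choice of $\pi$ gives $x^\pi = (x/i,\, x_i)$ for every $x \in X$. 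For $x = 0$ both operations fix the zero sequence, yielding $0 = 0^\mathbf s$. For $x \in X$ with $x_i = 1$ we have $x/i \in X/i$, so $(x^\pi)^{c'} = ((x/i)^\mathbf s,\, 1)$, which equals $x^\mathbf s$ because appending a $1$ to a sorted Boolean sequence of length $n-1$ with $k-1$ ones gives the sorted Boolean sequence of length $n$ with $k$ ones.

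The only real obstacle is spotting the trick: recognising that when $i$ is the unique prunable channel, channel $i$ is Boolean-maximal in every sequence of $X$ (except the all-zero one) and can be permuted to the end at no comparator cost, so sorting $X$ collapses into sorting $X/i$ with a trailing $1$. Everything else is a routine unfolding of the definitions of threshold sets, pruning, and well-behavedness.
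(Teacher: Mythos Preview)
Your proposal is correct and follows essentially the same approach as the paper: both directions are handled identically in substance, with the lower bound coming from the single-element Huffman bound and the upper bound from routing channel $i$ straight to the last output while sorting the remaining channels with a minimal network for $X/i$. The only cosmetic differences are that the paper phrases the upper-bound construction in circuit language (``insert an input channel at $i$ and connect it directly to a new last output'') rather than via an explicit permutation $\pi$, and that the paper invokes \autoref{fact:huffmanbound} together with $(X/i)^\circ = X/i$ where you invoke \autoref{fact:huffmanwp} directly.
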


\begin{proof}
	First, we will show that when $X$ has only a single prunable channel $i \in p(X)$ and is well-behaved, the pruned sequence set $X/i$ is also well-behaved and thus $(X/i)^\circ = X/i$.

	As the set of prunable channels preserves unions, and every threshold set has exactly one prunable channel, we know that every threshold set $T(y) \subseteq X$ has the same unique prunable channel.

	The pruning operation on sequence sets also preserves unions, so we can perform it individually on each contained threshold set.
	Using the previous lemma, and the fact that all contained threshold sets have the prunable channel $i$, we then get \[
		X/i = \bigcup\, \{\, T(y)/i \mid y \in R_n,\, T(y) \subseteq X\,\} = \bigcup\, \{\, T(y/i) \mid y \in R_n,\, T(y) \subseteq X\,\},
	\]
	which as a union of threshold sets is a well-behaved sequence set, so $(X/i)^\circ = X/i$.

	Next we will show that $s(X) = s(X/i)$.
	As $s(X) \ge H_{1{+}\mathrm{max}}\,\{s(X/i)\}_\# = s(X/i)$ follows from the Huffman bound (\autoref{fact:huffmanbound}) we only need to show $s(X) \le s(X/i)$.

	Given a partial sorting network on $c$ on $X/i$ we can construct a partial sorting network $d$ on $X$ having the same size by inserting an input channel such that it becomes channel $i$ and directly connecting it to a new last output channel.
	As $x_i = 1$ for all $x \in X$ with $x \ne (0, \ldots, 0)$ the output of $d$ acting on an $x \in X$ is always sorted.
	Thus this construction gives us a partial sorting sorting network on $X$ having size $s(X/i)$ and thus $s(X) \le s(X/i)$.

	From $s(X) = s(X/i)$ and $(X/i)^\circ = X/i$ we then get $s(X) = s((X/i)^\circ)$.
\end{proof}

Therefore, with only one prunable channel $i \in p(X)$, we can always set $b(X)$ to the value of $b((X/i)^\sim)$.
This allows us to always improve $b(X)$ by first improving $b((X/i)^\sim)$ and then performing this update.
If $X$ has multiple prunable channels, it is still possible that $X^\mathbf b$ has only a single prunable channel $i$, in which case we instead set $b(X)$ to $b((X^\mathbf b/i)^\sim)$.

\subsection{Combining Strategies for Improving Bounds}

With this we can now present a bound improvement routine combining these three updates, giving us the final version of our algorithm, presented as the following pseudocode:

\begin{algo}[Successive Approximation Algorithm for $s(X)$]
	\algorithm{
		global $b \gets b_0 = \cdots$		\cmnt{As above}

		procedure $\proc{SuccessorUpdate}(X)\colon \cdots$

		procedure $\proc{HuffmanUpdate}(X)\colon  \cdots$
		}\vskip-1.75em\algorithm{
		procedure $\proc{Improve}(X)\colon$ \AC{
			if $X$ is not fathomed$\colon$ \cmnt{Otherwise there is nothing to improve} \AC{
				$c \gets b(X)$

				if $\proc{PrunedImproveStep}(X, c)\colon$ return

				if $\proc{HuffmanImproveStep}(X, c)\colon$ return

				$\proc{SuccessorImproveStep}(X, c)$
			}
		}
		}\vskip-1.75em\algorithm{
		procedure $\proc{PrunedImproveStep}(X, c)\colon$ \cmnt{Returns whether it made progress} \AC{
			for $Z$ in $X^\ang{\mathbf b}$ if $|p(Z)| = 1\colon$ \AC{
				$Y \gets (Z/i)^\sim$ where $i \in p(Z)$

				loop$\colon$ \AC{
					$b(X) \gets b(X) \cap b(Y)$ \cmnt{Update using the fact that $s(X) = s(Y)$}

					\cmnt{Each iteration improves $Y$, guaranteeing we will eventually return}

					if $b(X) \ne c$ and $(X$ is fathomed or {\normalfont\emph{arbitrary choice}}$)\colon$ return true

					$\proc{Improve}(Y)$
				}
			}

			return false
		}
		}\vskip-1.75em\algorithm{
		procedure $\proc{HuffmanImproveStep}(X, c)\colon$ \cmnt{Returns whether it made progress} \AC{
			loop$\colon$ \AC{
				$\mathit{done} \gets \proc{HuffmanUpdate}(X)$

				if $b(X) \ne c$ and $(X$ is fathomed or {\normalfont\emph{arbitrary choice}}$)\colon$ return true

				if $\mathit{done}\colon$ return false

				$\mathcal A \gets \{\, (Y/i)^{\circ\sim} \mid Y \in \{X, X^{\mathbf b\sim}\},\, i \in p(Y),\, (Y/i)^{\circ\sim} \textbf{ is not fathomed} \,\}$

				{\normalfont\emph{arbitrarily choose $\mathcal B \subseteq \mathcal A$ such that $|\mathcal B| \ge 1$}}

				for $Y \in \mathcal B\colon$ $\proc{Improve}(Y)$
			}
		}
		}\vskip-1.75em\algorithm{
		procedure $\proc{SuccessorImproveStep}(X, c)\colon$ \cmnt{Guaranteed to make progress} \AC{
			loop$\colon$ \AC{
				$\proc{SuccessorUpdate}(X)$

				if $b(X) \ne c$ and $(X$ is fathomed or {\normalfont\emph{arbitrary choice}}$)\colon$ return

				$\mathcal A \gets \{\, Y \in \mathcal U(X)^\sim \mid Y \textbf{ is not fathomed}\,\}$

				{\normalfont\emph{arbitrarily choose $\mathcal B \subseteq \mathcal A$ such that $|\mathcal B| \ge 1$}}

				for $Y \in \mathcal B\colon$ $\proc{Improve}(Y)$
			}
		}
		}\vskip-1.75em\algorithm{
		procedure $\proc{MinSize}(X)\colon$ \AC{
			while $X^\sim$ is not fathomed$\colon$ $\proc{Improve}(X^\sim)$

			return $k \in b(X^\sim)$
		}
	}
\end{algo}

\begin{theorem}[Correctness of \proc{Improve}]
	Given a global state $b = b_u$ with $s \pwin b_u \pwsubseteq b_0$, performing $\proc{Improve}(X)$ for a non-fathomed $X \in \cW^\sim$ results in a global state $b = b_v$ with a strictly improved bound $b_v(X) \subset b_u(X)$ and which is a refinement of the previous global state, $s \pwin b_v \pwsubseteq b_u \pwsubseteq b_0$.
\end{theorem}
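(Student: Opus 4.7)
The plan is to prove the theorem by induction on the lexicographic pair $(n, w(X))$, where $n$ is the length of sequences in $X$ and $w(X)$ is its weight. This measure strictly decreases on every recursive invocation of \proc{Improve}: \proc{SuccessorImproveStep} recurses on members of $\mathcal U(X)^\sim$, which have length $n$ but strictly smaller weight by \autoref{fact:succweight}, while both \proc{PrunedImproveStep} and \proc{HuffmanImproveStep} recurse on pruned sets of length $n - 1$. The induction hypothesis therefore gives correctness of \proc{Improve} for every such recursive call.

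First I would verify invariant preservation $s \pwin b_v \pwsubseteq b_u$. The only direct writes to $b$ come from \proc{SuccessorUpdate}, \proc{HuffmanUpdate}, and the pruning assignment $b(X) \gets b(X) \cap b(Y)$ inside \proc{PrunedImproveStep}. Soundness of the first two is \autoref{fact:succupdateinv} and \autoref{fact:huffmanupdateinv}. Soundness of the third follows because when some $Z \in \{X, X^{\mathbf b}\}$ satisfies $|p(Z)| = 1$, the Unique Prunable Channel lemma gives $s(Z) = s((Z/i)^\circ)$, and similarity invariance (\autoref{fact:sim}) yields $s(X) = s(Y)$ for $Y = (Z/i)^\sim$, so $s(X) \in b(Y)$ and the intersection preserves the over-approximation property. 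Invariance is propagated through the recursive calls by the induction hypothesis.

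Next I would establish strict improvement $b_v(X) \subset b_u(X)$ by case analysis on which step produces the change. If some $Z \in X^{\ang{\mathbf b}}$ has $|p(Z)| = 1$, \proc{PrunedImproveStep} enters its loop; by the induction hypothesis each \proc{Improve}$(Y)$ call strictly shrinks $b(Y)$, so after finitely many iterations $Y$ becomes fathomed and the next intersection makes $b(X)$ a singleton strictly contained in $c$ (since $X$ was not fathomed, $c$ is not a singleton). If \proc{HuffmanImproveStep} returns \textbf{true}, its guard forces $b(X) \ne c$. Otherwise control reaches \proc{SuccessorImproveStep}, whose progress argument transposes verbatim from \autoref{fact:succimprovecorrect}: \proc{HuffmanImproveStep} can only have shrunk $b(X)$ and does not disturb the hypotheses of the successor step, and the latter's loop recursively fathoms every non-fathomed canonical successor until \autoref{fact:succupdateprog} forces \proc{SuccessorUpdate} to fathom $X$ itself.

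The main obstacle will be arguing termination of the three inner loops within a single call. Each loop body either hits an exit condition or invokes \proc{Improve} on at least one non-fathomed argument of strictly smaller measure; by the induction hypothesis such a call strictly shrinks the corresponding bound. Since every $b(\cdot)$ is a subset of the finite set $\{0, \ldots, \hat s(n)\}$, only finitely many shrinkages are possible, after which the respective progress condition (\autoref{fact:succupdateprog} or \autoref{fact:huffmanupdateprog}) forces either the fathomed exit or, in the Huffman case, $\mathit{done} = \textbf{true}$, so the loop exits. One must also check that the non-empty choice of $\mathcal B$ in \proc{HuffmanImproveStep} is always possible when the loop continues: this holds because reaching the choice point requires $\mathit{done} = \textbf{false}$, which by \autoref{fact:huffmanupdateprog} implies $\mathcal A \ne \emptyset$.
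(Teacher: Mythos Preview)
Your proposal is correct and follows essentially the same approach as the paper's proof: induction on the lexicographic pair of sequence length and weight, invariant preservation via the update lemmas (you spell out the pruning case more explicitly than the paper does), and termination of each inner loop by the finite-shrinkage argument together with the respective progress lemmas. The case analysis and the check that $\mathcal B$ can be chosen nonempty in \proc{HuffmanImproveStep} match the paper's reasoning.
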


\begin{proof}
	Here \proc{Improve} is mutually recursive with \proc{Pruned\-Improve\-Step}, \proc{Huffman\-Improve\-Step} and \proc{Successor\-Improve\-Step}.
	To justify induction, we can observe that all recurring $\proc{Improve}$ calls via \proc{PrunedImproveStep} or \proc{HuffmanImproveStep} reduce the length of the sequence set by one.
	A call via \proc{SuccessorImproveStep} keeps the length constant but strictly reduces the weight (\autoref{fact:succweight}).
	Thus infinite recursion is prevented by the lexicographically strictly decreasing pair of length and weight.

	All updates performed to a $b(X)$ by each of \proc{PrunedImproveStep}, \proc{HuffmanImproveStep} and \proc{SuccessorImproveStep} maintain $s(X) \in b(X)$ and only decrease the size of $b(X)$.
	Thus $s \pwin b_v \pwsubseteq b_u \pwsubseteq b_0$ holds.

	When \proc{PrunedImproveStep} or \proc{HuffmanImproveStep} return \textbf{true}, the size of $b(X)$ decreased and \proc{Improve} returns immediately.
	In those cases we also have $b_v(X) \subset b_u(X)$.

	When both \proc{PrunedImproveStep} and \proc{HuffmanImproveStep} return \textbf{false}, we perform a \proc{SuccessorImproveStep} which will terminate and ensure $b_v(X) \subset b_u(X)$ by the same reasoning we used to show correctness of \proc{SuccessorImprove} (\autoref{fact:succimprovecorrect}).

	For \proc{HuffmanImproveStep} we know that $\proc{HuffmanUpdate}(X)$ will return $\textbf{false}$ if $\mathcal A$ would be empty (\autoref{fact:huffmanupdateprog}).
	Following the same argument we used for \proc{SuccessorImprove}, we see that all candidates for $\mathcal A$ will eventually be fathomed, which forces $\proc{HuffmanUpdate}$ to return.

	Finally, when \proc{PrunedImproveStep} does not immediately return \textbf{false}, by induction we know that $\proc{Improve}(Y)$ will eventually fathom $Y$, causing \proc{PrunedImproveStep} to return \textbf{true}.
\end{proof}

As we saw before, repeatedly calling a procedure with this property, allows us to compute $s(X)$ for every well-behaved sequence set $X$.

\begin{corollary}
	Given a well-behaved sequence set $X \in \cW$, we can compute $\proc{MinSize}(X) = s(X)$. \qed
\end{corollary}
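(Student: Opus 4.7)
The plan is to derive this corollary almost entirely from the correctness theorem for \proc{Improve} that has just been established, by verifying that the outer loop of \proc{MinSize} terminates and delivers the correct value in $b(X^\sim)$. Since every hard work has been pushed into \proc{Improve}, the argument is short.

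First I would unfold what \proc{MinSize}$(X)$ does: it computes the canonical representative $X^\sim$ and then repeatedly calls $\proc{Improve}(X^\sim)$ until $X^\sim$ is fathomed, i.e.\ $|b(X^\sim)| = 1$. The initial state is $b = b_0$, which by construction satisfies $s \pwin b_0$ and gives $b_0(X^\sim)$ as a finite subset of $\mathbb N_0$ (either $\{0\}$ or $\{1, \ldots, \hat s(n)\}$, depending on the pigeonhole condition). By \autoref{fact:wbpart}, $X^\sim$ lies in $\cW^\sim$, so the precondition of the correctness theorem for \proc{Improve} is met.

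Next, I would argue termination of the \textbf{while} loop. Each iteration begins in a state $b = b_u$ with $s \pwin b_u \pwsubseteq b_0$; that is the loop invariant, preserved by the correctness theorem for \proc{Improve}. If $X^\sim$ is not yet fathomed, the theorem guarantees that the call $\proc{Improve}(X^\sim)$ returns with a new state $b = b_v$ satisfying $s \pwin b_v \pwsubseteq b_u$ and $b_v(X^\sim) \subsetneq b_u(X^\sim)$. Hence $|b(X^\sim)|$ strictly decreases in each iteration; since it is a nonnegative integer, the loop exits after finitely many iterations with $|b(X^\sim)| = 1$.

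Finally, at termination the invariant still gives $s(X^\sim) \in b(X^\sim)$, so $b(X^\sim) = \{s(X^\sim)\}$. By similarity invariance of the minimal size (\autoref{fact:sim}) and $X \sim X^\sim$, we have $s(X^\sim) = s(X)$. Therefore the returned value $k \in b(X^\sim)$ is exactly $s(X)$, establishing the claim. The only thing that could conceivably go wrong here is a subtle failure of the pointwise invariant under the canonicalisation step at the very start, but this is immediate from $s(X) = s(X^\sim)$ and the initialization of $b_0$, so there is no real obstacle: the corollary is a direct packaging of the already proved correctness theorem.
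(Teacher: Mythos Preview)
Your proposal is correct and follows essentially the same approach as the paper: the paper treats this corollary as immediate (marked with $\qed$), relying on the correctness theorem for \proc{Improve} to give a strictly shrinking $b(X^\sim)$ each iteration until it is the singleton $\{s(X^\sim)\} = \{s(X)\}$. Your write-up simply spells out this termination-plus-invariant argument in detail, which is exactly the intended reasoning.
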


\section{Implementation} \label{sec:implement}

The algorithm described in the previous section performs several non-trivial operations on sequence sets, for which an efficient implementation is essential for the algorithm's overall performance.
In this section we will describe the algorithms used in the subroutines implementing these operations.

Even with efficient subroutines, computing $s(11)$ and $s(12)$ requires significant computing resources.
Thus we want to use an optimized implementation which makes efficient use of the available memory and of all threads of a multi-core CPU.
We give an outline of how to implement a parallel variant of the algorithm in \autoref{sec:parallel}, where we also mention further heuristic choices used in our implementation.
For more details, the implementation's source code is available at \cite{harderJixSortnetopt2020}.

\subsection{Canonicalization} \label{sec:canon}

Our algorithm makes heavy use of canonicalization, so it is essential that we have a fast procedure to compute the canonical sequence set similar to an arbitrary given sequence set.

As at most two sequence sets are similar under negation, it is sufficient to have a canonicalization procedure for similarity under permutation, which then can be performed on an $X$ and an $X^\mathbf b$.
We then select the smaller of both results by an arbitrary easily computed order on sequence sets.

To find a canonical representative for a class of sequence sets similar under permutation, we can use canonical graph labeling.
We can represent a sequence set $X \subseteq B^n$ as a hypergraph (allowing for empty and singleton edges) by defining a map $f$ that takes a given sequence to the set of channels that are set to one, i.e. $f(x) = \{\, i \in [n] \mid x_i = 1\,\}$.
The hypergraph for $X$ is the hypergraph on the vertices $[n]$ with hyperedges $f(X)$.

Now two given sequence sets $X, Y$ are permutation-similar, $X \permsim Y$, if and only if their hypergraphs are isomorphic. Thus if we are able to perform a canonical re-labeling of the hypergraph vertices, we get a permutation that takes every sequence set to a unique representative with respect to permutations.

A canonical re-labeling of a hypergraph can be performed by canonically re-labeling the bipartite incidence graph of the hypergraph and ignoring the labels of the vertices on the side used to represent the hyperedges.
This can be done with an existing open source graph canonicalization library like \emph{nauty} or \emph{Traces} by \textcite{mckayPracticalGraphIsomorphism2014}.

In our implementation, we use a canonical labeling algorithm that works directly on our sequence set representation.
It is based on the breadth-first search approach pioneered by Traces \cite{mckayPracticalGraphIsomorphism2014,pipernoSearchSpaceContraction2011}.
To simplify the implementation, we do not use orbits of the automorphism group for pruning, and instead only prune the subtree which led us to discover an automorphism.
We also use heuristics that are tuned for our use-case, e.g. we only individualize vertices of the incidence graph corresponding to the hypergraph's vertices, as these always form a base of the incidence graph's automorphism group.

\subsection{Well-Behaved Interior Computation}

We defined the well-behaved interior $X^\circ$ of a sequence set $X$ as the union of all contained threshold sets (\autoref{def:wbint}).

This definition is not suitable for a direct implementation as it would have to test all $n!$ different length-$n$ threshold sets.
Even iterating over only those $y \in R_n$ for which the corresponding threshold set $T(y)$ is contained in $X$ is something we would like to avoid.

Instead we are going to derive an efficient implementation of $X^\circ$ that works directly on the Boolean sequences contained in $X$ without decomposing them into a union of threshold sets.

Consider $B^n$ as a poset using the product order, i.e. for given sequences $x, y \in B^n$, we have $x \le y$ if and only if $x_i \le y_i$ for all $i$.
It is bounded below by $\bot = (0, \ldots, 0)$ and above by $\top = (1, \ldots, 1)$.

\begin{figure}
	\centering
\newcommand{\tessaract}{
    \foreach \x in {0, 1} {
        \foreach \y in {0, 1} {
            \foreach \z in {0, 1} {
                \foreach \w in {0, 1} {
                    \path ($
                        (1.5 * \x, 0.8 * \x) +
                        (2 * \y, 2.4 * \y) +
                        (-0.7 * \z, 1.4 * \z) +
                        (-2.5 * \w, 2 * \w) +
                        (0, 0)
                    $)
                        node[n\x\y\z\w] (n\x\y\z\w) {\scriptsize\x\y\z\w};
                }
            }
        }
    }

    \foreach \x in {0, 1} {
        \foreach \y in {0, 1} {
            \foreach \z in {0, 1} {
                \path[e\x\y\z x] (n\x\y\z0) edge (n\x\y\z1);
            }
        }
    }

    \foreach \x in {0, 1} {
        \foreach \y in {0, 1} {
            \foreach \w in {0, 1} {
                \path[e\x\y x\w] (n\x\y0\w) edge (n\x\y1\w);
            }
        }
    }

    \foreach \x in {0, 1} {
        \foreach \z in {0, 1} {
            \foreach \w in {0, 1} {
                \path[e\x x\z\w] (n\x0\z\w) edge (n\x1\z\w);
            }
        }
    }

    \foreach \y in {0, 1} {
        \foreach \z in {0, 1} {
            \foreach \w in {0, 1} {
                \path[ex\y\z\w]  (n0\y\z\w) edge (n1\y\z\w);
            }
        }
    }
}%
\newcommand{\tessaractstyles}{
    \tikzset{
        hdnode/.style={},
        hdedge/.style={}
        n0000/.style={hdnode},
        n0001/.style={hdnode},
        n0010/.style={hdnode},
        n0011/.style={hdnode},
        n0100/.style={hdnode},
        n0101/.style={hdnode},
        n0110/.style={hdnode},
        n0111/.style={hdnode},
        n1000/.style={hdnode},
        n1001/.style={hdnode},
        n1010/.style={hdnode},
        n1011/.style={hdnode},
        n1100/.style={hdnode},
        n1101/.style={hdnode},
        n1110/.style={hdnode},
        n1111/.style={hdnode},
        ex000/.style={hdedge},
        ex001/.style={hdedge},
        ex010/.style={hdedge},
        ex011/.style={hdedge},
        ex100/.style={hdedge},
        ex101/.style={hdedge},
        ex110/.style={hdedge},
        ex111/.style={hdedge},
        e0x00/.style={hdedge},
        e0x01/.style={hdedge},
        e0x10/.style={hdedge},
        e0x11/.style={hdedge},
        e1x00/.style={hdedge},
        e1x01/.style={hdedge},
        e1x10/.style={hdedge},
        e1x11/.style={hdedge},
        e00x0/.style={hdedge},
        e00x1/.style={hdedge},
        e01x0/.style={hdedge},
        e01x1/.style={hdedge},
        e10x0/.style={hdedge},
        e10x1/.style={hdedge},
        e11x0/.style={hdedge},
        e11x1/.style={hdedge},
        e000x/.style={hdedge},
        e001x/.style={hdedge},
        e010x/.style={hdedge},
        e011x/.style={hdedge},
        e100x/.style={hdedge},
        e101x/.style={hdedge},
        e110x/.style={hdedge},
        e111x/.style={hdedge},
    }
}
\begin{subfigure}{.49\textwidth}
    \centering
    \begin{tikzpicture}[scale=0.7, inner sep=2pt]

        \tessaractstyles
        \tikzset{
            highlight/.style={},
            hdnode/.style={color=black!70!white},
            hdedge/.style={dotted, color=black!90!white},
            hedge/.style={},
            n0000/.style={highlight},
            n0001/.style={highlight},
            n0010/.style={highlight},
            n0011/.style={highlight},
            n0101/.style={highlight},
            n0111/.style={highlight},
            n1001/.style={highlight},
            n1100/.style={highlight},
            n1110/.style={highlight},
            n1111/.style={highlight},
            e000x/.style={hedge},
            e001x/.style={hedge},
            e00x0/.style={hedge},
            e00x1/.style={hedge},
            e01x1/.style={hedge},
            e0x01/.style={hedge},
            e0x11/.style={hedge},
            e111x/.style={hedge},
            e11x0/.style={hedge},
            ex001/.style={hedge},
            ex111/.style={hedge},
        }

        \tessaract

        \path (n1111.base) node[anchor=base] {\scriptsize$\phantom{\top = 1111} = \top$};
        \path (n0000.base) node[anchor=base] {\scriptsize$\bot = \phantom{0000 = \bot}$};

    \end{tikzpicture}
    \caption{Hasse diagram of $B^4$ and the induced subgraph for a sequence set $X$}
	\label{fig:tesseract_a}
\end{subfigure}
\begin{subfigure}{.49\textwidth}
    \centering
    \makeatletter
    \begin{tikzpicture}[scale=0.7, inner sep=2pt]

        \tessaractstyles
        \tikzset{
            highlight/.style={},
            fwdonly/.style={color=red},
            bwdonly/.style={color=blue},
            hdnode/.style={execute at begin node={\pgfsys@begininvisible}, execute at end node={\pgfsys@endinvisible}},
            n0000/.style={highlight},
            n0001/.style={highlight},
            n0010/.style={highlight},
            n0011/.style={highlight},
            n0101/.style={highlight},
            n0111/.style={highlight},
            n1001/.style={bwdonly},
            n1100/.style={fwdonly},
            n1110/.style={fwdonly},
            n1111/.style={highlight},
            hdedge/.style={transparent},
            both/.style={},
            fwd/.style={draw=red,dashed},
            bwd/.style={draw=blue,dotted},
            ex111/.style={both},
            e111x/.style={fwd},
            e11x0/.style={fwd},
            e0x11/.style={both},
            e00x1/.style={both},
            e001x/.style={both},
            e000x/.style={both},
            e00x0/.style={both},
            e01x1/.style={both},
            e0x01/.style={both},
            ex001/.style={bwd},
        }
        \tessaract

        \path (n1111.base) node[anchor=base] {\scriptsize$\phantom{\top = 1111} = \top$};
        \path (n0000.base) node[anchor=base] {\scriptsize$\bot = \phantom{0000 = \bot}$};

    \end{tikzpicture}
    \makeatother
    \caption{Induced subgraph for $X^\circ$ with removed vertices and edges.}
    \label{fig:tesseract_b}
\end{subfigure}
	\caption{Efficient computation of the largest contained well-behaved sequence set}
	\label{fig:tesseract}
\end{figure}
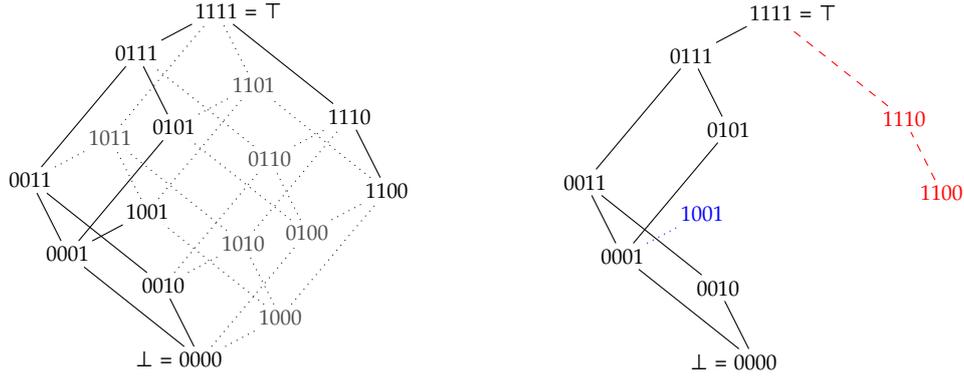

The Hasse diagram of $B^n$ is the $n$-dimensional hypercube where $\bot$ and $\top$ are two vertices of maximal distance and all edges point to the vertex with a smaller distance to $\top$ (see \autoref{fig:tesseract_a}).
The edges are the ordered pairs of sequences $x, y \in B^n$ with a Hamming distance $\Delta(x, y) = 1$ and $\Delta(x, \bot) < \Delta(y, \bot)$.

We will now represent a given sequence set $X \subseteq B^n$ as the subgraph of $B^n$'s Hasse diagram induced by the vertices $X$ (see \autoref{fig:tesseract_a}).
For every contained threshold set $\mathcal T_n \in X$ this representation yields a path from $\bot$ to $\top$.
Hence a sequence $x \in X$ is part of a threshold set $T_y$ with $x \in T_y \subseteq X$ if and only if in the graph for $X$, $x$ is reachable from $\bot$ and $\top$ is reachable from $x$.

The sequences for which this is true can be found by performing two depth first searches, a forward search starting from $\top$ and a backwards search starting from $\bot$.
This gives us an efficient procedure for computing $X^\circ$.

In the example of \autoref{fig:tesseract_b} the nodes and edges not visited by a forward search are drawn in red and the nodes and edges not visited by a backwards search are drawn in blue.
The black nodes are visited by both searches and correspond to the sequences of $X^\circ$.

As a further optimization for computing $(X/i)^\circ$, the pruning itself can be performed as part of the forward search, by starting the forward search at $\ohot in$ instead of $\bot$. This will stop the search from visiting vertices corresponding to sequences that would have been excluded during the pruning.
Additionally, if $X$ was already well-behaved before pruning, the backwards search will visit all vertices and thus is redundant and can be skipped.
The well-behaved pruned sequence set then contains $x/i$ for every visited vertex $x$.

\section{Certificates} \label{sec:certificates}

Even if we are convinced that our algorithm is correct, there is ample opportunity to make mistakes when implementing it.
Especially a multithreaded implementation, as we are using, is prone to having bugs that are difficult to debug and might very well go unnoticed.

Thus, to instill confidence in the algorithm's result, we would like to use formal methods that allow us to reduce the amount of code we need to trust.
The combination of all software, specifications, proofs, etc.\ that have to be manually checked to fully verify the result is called the \emph{trusted base}.
Our goal is to minimize the size of the trusted base, while preferably including general-purpose tools with established trust over problem-specific parts.
We will explicitly enumerate the trusted base at the end of \autoref{sec:verify}.

A direct approach would be to use an established proof assistant to write a machine-checked formal correctness proof of our implementation.
Currently, I do not consider this approach to be practical.

Instead we are going to augment our algorithm such that it produces some kind of \emph{certificate} of the result's correctness.
This certificate can then be checked by a much simpler, formally verified algorithm which will use our certificate-producing algorithm as an \emph{untrusted oracle}.

This approach of certifying the results of an untrusted oracle was already used by \textcite{cruz-filipeOptimizingCertifiedProof2015} to verify the computation of $s(9)$ performed in \cite{codishSortingNineInputs2016}.

For this approach to be effective we need to find a form of certificate for which there is a simple and relatively efficient verification algorithm, so that producing a machine-checked correctness proof for the implementation of the certificate-verifying algorithm becomes practical.

Note that it is sufficient to certify that our result for $s(n)$ is a lower bound, as an upper bound can be certified by exhibiting a single sorting network of that size.
The results computed using our algorithm match the sizes of long known sorting networks (see \autoref{sec:intro}).

To produce certificates for a lower bound on $s(n)$ we will introduce a formal system with a small number of inference rules based on our results for partial sorting networks from \autoref{sec:partialnets}.
A certificate will then consist of a valid derivation of a lower bound of $s(n)$ within that system.

While a direct way to generate such a derivation from our algorithm would be possible in theory, it would require prohibitive amounts of additional memory during the algorithms execution.
Instead we will present a way to construct a certificate based on the algorithm's final state.
This requires a significant amount of additional processing, but does not increase the peak memory requirements and also results in a much smaller certificate which is also proportionally faster to verify.

\subsection{Formal System for Certificates}

We define our formal system for certificates using a small number of inference rules.
Every derivable statement in our system has the form $s(X) \ge k$ for some fixed $k$ and fixed sequence set $X \subseteq B^n$ for some $n$.
The premises of inference rules can be other such statements as well as additional conditions on the occurring sequence sets.

To show that $s(n) \ge k$ it is sufficient to derive $s(X) \ge k$ for an arbitrary $X \subseteq B^n$, as $s(X)$ is monotone (\autoref{fact:mono}).

Note that the choice of inference rules here is driven by implementation and verification concerns regarding certificate generation and checking.

\begin{definition}
    Our formal system for certifying lower bounds on the minimal size of partial sorting networks consists of the following four inference rules:
{
\addtolength{\jot}{1em}
\begin{gather*}
    \inferrule*[right=Triv]{\strut}{s(X) \ge 0} \quad
    \inferrule*[right=PH]{X \subseteq B^n \\ |X| \ge n + 1}{s(X) \ge 1}\\
    \inferrule*[right=Succ]{
        X \subseteq B^n \\
        \mathcal U(X) = \{Z_1, \ldots, Z_m\} \\
        c_1, \ldots, c_m \in S_n\ang{\mathbf b} \\
        Y_1 \subseteq Z_1^{c_1} \;\;\cdots\;\; Y_m \subseteq Z_m^{c_m} \\
        s(Y_1) \ge k_1 \;\;\cdots\;\; s(Y_m) \ge k_m \\
    }{
        s(X) \ge 1 + \min\,\{k_1, \ldots, k_m \}
    }\\
    \inferrule*[right=Huffman]{
        X \subseteq B^{n + 1} \\
        b \in \ang{\mathbf b} \\
        p(X^b) = \{p_1, \ldots, p_m\} \\
        c_1, \ldots, c_m \in S_n\ang{\mathbf b} \\
        Y_1 \subseteq (X^b/p_1)^{c_1} \;\;\cdots\;\; Y_m \subseteq (X^b/p_m)^{c_m} \\
        s(Y_1) \ge k_1 \;\;\cdots\;\; s(Y_m) \ge k_m \\
    }{
        s(X) \ge H_{1{+}\mathrm{max}}\,\{k_1, \ldots, k_m\}_\#
    }\\
\end{gather*}
}
\end{definition}

\begin{lemma} The inference rules of our formal system for certifying lower bounds are sound.
\end{lemma}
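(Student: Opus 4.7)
The plan is to verify soundness of each of the four inference rules separately, reducing each to results already established for partial sorting networks. The \textsc{Triv} rule is immediate because the size of any comparator network is a non-negative integer. The \textsc{PH} rule is the Pigeon Hole Bound (\autoref{fact:nonempty}) applied to $X$, so the only thing to check is that the cardinality premise matches (or is stronger than) the hypothesis of that lemma.

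For \textsc{Succ}, the argument combines the Successor Recurrence (\autoref{fact:succ}) with subsumption monotonicity and similarity invariance. Each $c_i \in S_n\ang{\mathbf b}$ is a composition of a channel permutation and (possibly) Boolean negation, so by \autoref{fact:sim} we have $s(Z_i^{c_i}) = s(Z_i)$; combined with $Y_i \subseteq Z_i^{c_i}$ and \autoref{fact:mono} this yields $s(Y_i) \le s(Z_i^{c_i}) = s(Z_i)$. Chaining with the premise $s(Y_i) \ge k_i$ gives $s(Z_i) \ge k_i$ for each $i$. Assuming $s(X) \ge 1$, \autoref{fact:succ} then yields $s(X) = 1 + \min_i s(Z_i) \ge 1 + \min_i k_i$, as required.

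For \textsc{Huffman}, apply the Huffman Bound (\autoref{fact:huffmanbound}) to $X^b$. Since $b \in \ang{\mathbf b}$, \autoref{fact:sim} gives $s(X) = s(X^b)$. For each prunable channel $p_i \in p(X^b)$, the premise $Y_i \subseteq (X^b/p_i)^{c_i}$ combined with \autoref{fact:sim} and \autoref{fact:mono} yields $s(X^b/p_i) = s((X^b/p_i)^{c_i}) \ge s(Y_i) \ge k_i$. Monotonicity of Huffman's algorithm (\autoref{fact:huffmanmono}) then gives $H_{1{+}\mathrm{max}}\{s(X^b/p_i) \mid p_i \in p(X^b)\}_\# \ge H_{1{+}\mathrm{max}}\{k_1, \ldots, k_m\}_\#$, and composing with \autoref{fact:huffmanbound} closes the derivation.

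I expect the hard part to lie not in the main chain of inequalities, which is essentially a repackaging of \autoref{fact:mono}, \autoref{fact:sim}, \autoref{fact:succ}, \autoref{fact:huffmanbound} and \autoref{fact:huffmanmono}, but in corner cases: the \textsc{Succ} rule when $s(X) = 0$ (where \autoref{fact:succ} does not directly apply), and verifying that the cardinality threshold in \textsc{PH} is strong enough for \autoref{fact:nonempty}. If $\mathcal U(X)$ is empty or the premise structure forces $s(X) \ge 1$ via the $k_i$'s these degenerate cases collapse; otherwise a small separate argument, or an implicit strengthening of the rule premises, will be needed to plug the gap.
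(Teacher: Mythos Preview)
Your proposal is correct and follows essentially the same route as the paper: both reduce \textsc{Triv} to non-negativity, \textsc{PH} to \autoref{fact:nonempty}, and \textsc{Succ}/\textsc{Huffman} to \autoref{fact:succ} and \autoref{fact:huffmanbound} respectively, each combined with minimal-size monotonicity (\autoref{fact:mono}), similarity invariance (\autoref{fact:sim}), and monotonicity of $\min$ and of Huffman's algorithm (\autoref{fact:huffmanmono}). You are in fact more careful than the paper's brief proof sketch in explicitly flagging the corner cases (the strict-versus-non-strict cardinality threshold in \textsc{PH} and the applicability of \autoref{fact:succ} to \textsc{Succ} when $s(X)=0$), which the paper simply does not discuss.
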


\begin{proof}
The \textsc{Triv} rule holds trivially and \textsc{PH} is the Pigeon Hole Bound (\autoref{fact:nonempty}).

The \textsc{Succ} and \textsc{Huffman} rules combine the monotonicity of $s(X)$ (\autoref{fact:mono}) with the successor recurrence (\autoref{fact:succ}) and the Huffman bound (\autoref{fact:huffmanbound}), respectively.
As the premises are only lower bounds, they also derive a lower bound which is valid due to the monotonicity of $\min$ and of $H_{1{+}\mathrm{max}}$ (\autoref{fact:huffmanmono}).
Both rules also allow using permuted and optionally negated sequence sets in the premises, which is permitted as $s(X)$ is invariant under similarity (\autoref{fact:sim}).
\end{proof}

Using explicit permutations and negations as part of the inference rules allows us to use canonicalization while we produce a derivation, without introducing the notion of canonicity into the formal system.
This also means that a derivation can be verified without the need of canonicalization procedure.
Performing canonicalization is a lot more complex and also slower than the operations necessary to check the inference rules we chose.

A certificate consists of a derivation in this formal system.
We represent such a derivation as a DAG where each node is a derivation step, i.e.\ an instantiation of an inference rule.
The edges connect derived facts to their use in premises of other nodes.
To store such a certificate we fix a topological order of the derivation steps.
Then for each step, in order, we store which inference rule was used as well as information that fixes all variables appearing in that rule.
For each derived fact in the premises of the derivation step, we store only the position of that step in the topological order we chose.

Verifying such a certificate then consists of iterating over all these steps.
For each step we check that the specified variable values satisfy the rule's conditions, that the recorded positions of the derived facts used in the premises come before the position of the step they are used in and finally that each step at the recorded position for each premise derives the fact expected as premise by the step we are currently checking.

\subsection{Generating Certificates}

All lower bounds $\min b(X)$ derived during the algorithms execution can be justified using the rules of this inference system.
A direct approach would be to emit corresponding derivation steps whenever a $\min b(X)$ value changes.
The storage requirements as well as the slow down from emitting all these steps to storage makes this impractical.
An alternative would be to only maintain the derivations of the current $\min b(X)$, discarding the previous derivations of weaker facts.
This would result in a smaller certificate, but the derivation would have to be kept in memory, so that it can be updated, making the required amount of memory impractical.

Instead we are going to generate a certificate from the final set of $X \mapsto \min b(X)$ pairs.
This requires us to re-derive these lower bounds, which takes a significant amount of processing, but it also allows us to find better derivations to minimize the size and the verification runtime of the resulting certificate.

To minimize the certificate we use essentially the same strategy used in the \proc{Prune} algorithm of the generate-and-prune approach that was used to initially compute $s(9)$ \cite{codishSortingNineInputs2016}.

To do that we take all sequence sets of the pairs $X \mapsto \min b(X)$ and partition them into parts $\mathcal X_{n, k}$ containing the pairs with $X \subseteq B^n$ and $\min b(X) = k$, i.e.\ pairs with sequence sets of the same length and the same lower bound.

Now given any two subsuming $X, Y \in \mathcal X_{n, k}$ with $X \subs Y$, for every hypothetical derivation step that uses $s(Y) \ge \min b(Y) = k$ as premise, we can derive the same fact with a step that uses $s(X) \ge \min b(X) = k$ instead.
We might have to adjust the permutation and optional negation within the derivation step, which is always possible for both the \textsc{Succ} and the \textsc{Huffman} rule.

Therefore, when there is a derivation DAG using only the facts $s(X) \ge \min b(X)$ for all pairs $X \mapsto \min b(X)$, there is also a derivation DAG that uses only the subset of these facts that correspond to the non-subsumed elements of each part $\mathcal X_{n, k}$, which we will denote by $\mathcal X^{\min}_{n, k}$.

Finding the non-subsumed elements can be done sequentially for each part.
This allows us to use more elaborate data structures, that have a higher memory overhead, without going above the peak memory requirements of the initial computation.
We describe our approach below in \autoref{subsec:subsume}.

After having computed each part $\mathcal X^{\min}_{n, k}$ of non-subsumed sequence sets, we then need to find a derivation for some $s(X) \ge s(n)$ where $X \in \mathcal X^{\min}_{n, s(n)}$.
We do this using a procedure $f$ that given a sequence set $X \subseteq B^n$ selects a sequence set $f(X) \subs X$ with $f(X) \in \mathcal X^{\min}_{n, k}$ such that $k$ is maximized, but otherwise arbitrarily.
We will only compute this for $X$ where such a $f(X)$ exists.
This can be implemented using the same index data structure we use to find the non-subsumed sequence sets.

With this we can use the algorithm below to derive $s(X) \ge s(n)$ for an $X \in \mathcal X^{\min}_{n, s(n)}$.
It computes $f(B^n)$ to find such a suitable $X$ and then invokes a recursive procedure \proc{GenerateCertificateStep}$(f(B^n))$ that will derive $s(f(B^n)) \ge \min b(f(B^n)) = s(n)$.

The \proc{GenerateCertificateStep}$(X)$ procedure first checks whether the bound $s(X) \ge \min b(X)$ was already derived, in which case there is nothing to do, and otherwise tries using a \proc{Triv}, a \proc{PH}, a \proc{Huffman} and a \proc{Succ} step in this order.
For that it allows all bounds $s(Y) \ge \min b(Y)$, including subsumed $Y$, as potential premises.
For the first succeeding step, all premises $s(Y) \ge \min b(Y)$ are then replaced by $s(f(Y)) \ge \min b(f(Y)) = \min b(Y)$ so that only non-subsumed sequence sets occur.
Then \proc{GenerateCertificateStep} is invoked recursively for each of these $f(Y)$ and finally the step to derive $s(X) \ge \min b(X)$ is emitted.

\begin{algo}
\algorithm{
    global $\mathcal G \gets \emptyset$ \cmnt{sequence sets $X$ for which we derived $s(X) \ge \min b(X)$}

    \AW

    procedure $\proc{GenerateCertificate}(n)\colon$ \AC{
        $\proc{GenerateCertificateStep}(f(B^n))$
    }

    \AW

    procedure $\proc{GenerateCertificateStep}(X)\colon$ \ACX{
        if $X \in \mathcal G\colon$ return

        $\mathcal G \gets \mathcal G \cup \{X\}$

        if $\min b(X) \le 1\colon$ \AC{
            \cmnt{$\downarrow$ Sufficient for well-behaved sequence sets}

            emit \proc{Triv} or \proc{PH} step for $s(X) \ge \min b(X)$

            return
        }

    }

    $\hskip0.5em\;\vdots$
}

\algorithm{
    $\hskip0.5em\;\vdots$
    \AXC{

        for $Y$ in $X^\ang{\mathbf b}\colon$ \cmnt{First try deriving via the Huffman bound} \AC{
            if $\min b(X) \le H_{1{+}\mathrm{max}}\, \{\, \min b((Y/i)^\sim) \mid i \in p(Y) \,\}_\#\colon$ \AC{
                $\mathcal Z \gets \{\, f(Y/i) \mid i \in p(Y) \,\}_\#$

                for $Z \in \mathcal Z\colon$ \AC {
                    $\proc{GenerateCertificateStep}(Z)$
                }

                emit \proc{Huffman} step for $s(X) \ge \min b(X)$ using $s(Z) \ge \min b(Z)$ for all $Z \in_\# \mathcal Z$

                return
            }
        }

        \cmnt{$\downarrow$ Now only \proc{Succ} remains, and thus it can derive $s(X) \ge \min b(X)$}

        $\mathcal Z \gets \{\, f(Y) \mid Y \in \mathcal U(X) \,\}_\#$

        for $Z \in \mathcal Z\colon$ \AC {
            $\proc{GenerateCertificateStep}(Z)$
        }

        emit \proc{Succ} step for $s(X) \ge \min b(X)$ using $s(Z) \ge \min b(Z)$ for all $Z \in_\# \mathcal Z$
    }
}
\end{algo}

Additionally as a small optimization of the certificate size, we can remove all nodes that correspond to a step using the \textsc{Triv} or \textsc{PH} rule.
Whenever another step refers to them, instead of storing a reference to the now removed step, we store whether \textsc{Triv} or \textsc{PH} was used.
For these premises, we also do not store any permutation or indicate whether optional negation was used.
Instead we assume that the premise uses the largest sequence set admissible by the inference rule, requiring no permutation or negation.

This is possible since both \textsc{Triv} or \textsc{PH} stay valid when replacing the sequence set $X$ in the conclusion with a sequence set that is subsumed by $X$.

We omit a correctness proof of this certificate generation procedure, as our intention is to base the trust on formally verified checking of the resulting certificate.

\subsection{Subsumption Testing} \label{subsec:subsume}

When performing subsumption testing, we can handle negation by performing two separate tests for subsumption that only consider permutations.
Therefore, for the rest of this section, we use \emph{subsumption} to refer to subsumption restricted to permutation-similarity.

To efficiently test for subsumption, we use the general idea presented in \cite{codishSortingNineInputs2016} which defines several functions from sequence sets to the integers, all of them order-preserving with respect to subsumption.
We will call these functions \emph{subsumption abstractions}.
This allows quickly detecting non-subsumption for a large number of required subsumption tests, leaving only a small fraction for which a more expensive test is required. We use a larger set of such functions than used in \cite{codishSortingNineInputs2016} to exclude more non-subsuming pairs.
We refer to our implementation in \cite{harderJixSortnetopt2020} for more details.

If a candidate pair $X, Y$ passes the tests using subsumption abstractions, we need to search for a permutation under which one includes the other, i.e.\ $X \subseteq Y^\sigma$.
To avoid testing all $n!$ permutations, we use the approach described by \textcite{frasinaruImprovedSubsumptionTesting2019}.

There a \emph{compatibility test} is defined for each ordered pair of channels $i, j$.
The compatibility test must succeeds for those $i, j$ for which there exists a permutation $\sigma$ mapping $i$ to $j$ with $X \subseteq Y^\sigma$.
For $i, j$ without such a permutation the result can be arbitrary, i.e.\ the compatibility test is an over-approximation of this property.
Each failing compatibility test excludes several permutations that could witness $X$ subsuming $Y$.
The subsumption test is then completed by enumerating and testing the remaining permutations.

To efficiently enumerate the candidate permutations they define a \emph{subsumption graph}.
The subsumption graph is a bipartite graph and has the channels of $X$ as vertices on one side and the channels of $Y$ as vertices on the other side.
There is an edge between channel $i$ of $X$ and channel $j$ of $Y$ exactly if $i$ and $j$ pass the compatibility test. The candidate permutations then correspond to the perfect matchings of this subsumption graph.

We use a different strategy than \cite{frasinaruImprovedSubsumptionTesting2019} to define the compatibility test.
Given a subsumption abstraction $f$, we can define a family of functions $f_i(X) = f(X/i)$, one for each channel $i \in [n]$.
We call such a family of functions $f_i$ a \emph{channel abstraction}.

Given a channel abstraction $f_i$, there can be a permutation $\sigma$ with $X \subseteq Y^\sigma$ only if for all channels $i \in [n]$ we have $f_i(X) \le f_j(Y)$ where $j = i^\sigma$.
This makes $f_i(X) \le f_j(Y)$ a compatibility test.
When building a subsumption graph, we use the conjunction of several such tests obtained from different channel abstractions.

As a significant further improvement, we store the sequence sets in a spatial search tree that maintains bounding hyperrectangles for each subtree.
Each sequence set is stored at the point in space given by the vector containing all channel abstractions of that sequence set. When recursing the search tree, looking for a subsuming (or subsumed) candidate, we perform an approximate test on whether a perfect matching exists, using the minimal (or maximal) coordinates of the bounding box as channel abstractions. If no such matching exists, we can skip the complete subtree contained in that bounding box, otherwise we recurse, performing the same check for all inner nodes.

\section{Formal Verification} \label{sec:verify}

Having generated a certificate, we wish to check it using a program that has a formal correctness proof.
Our certificate checker and a corresponding correctness proof are written and verified using the Isabelle/HOL proof assistant \cite{nipkowIsabelleHOLProof2002}.
Having a correctness proof that passes verification by the Isabelle/HOL proof assistant means that only the formalized problem statement, given below, has to be manually verified.

Note that for formal verification, we do not assume correctness of any proofs presented in this paper.
Every result which we require is proved and machine checked as part of the checker's formal correctness proof.

To perform a certificate check, we use Isabelle/HOL's code generation feature \cite{haftmannCodeGenerationHigherOrder2010} which allows extracting a function defined in Isabelle/HOL into a semantically equivalent function in one of several general purpose functional programming languages.
In our case we extract to the Haskell language.

The source code for the checker and its correctness proof are available at \cite{harderJixSortnetopt2020}.

\newcommand{\isaqopen}{\text{\guilsinglleft}}
\newcommand{\isaqclose}{\text{\guilsinglright}}
\newcommand{\isaq}[1]{\isaqopen#1\isaqclose}

\newcommand{\isaassign}{\mathrel{\text{\raise.165ex\hbox{$\scriptstyle:$}}\!\!=}}
\newcommand{\isadcolon}{\mathrel{\text{\raise.1ex\hbox{$\scriptstyle::$}}}}

For this section only, we number the channels starting at $0$, to match Isabelle/HOL's \textit{nat} type.
Also, Isabelle/HOL uses \textit{False} and \textit{True} as Boolean values, where we used $0$ and $1$.

To formally state the problem of certifying the minimal size of an $n$-channel sorting network in Isabelle/HOL,
we first define infinite Boolean sequences, represented as a function from the naturals to the Booleans:

\vskip0.5em\noindent
$\textbf{type-synonym }\textit{bseq} = \isaq{\textit{nat} \Rightarrow \textit{bool}}$
\vskip0.5em

We then represent finite sequences as infinite sequences that are constant True for all values starting at the index equal to their fixed length.

\vskip0.5em\noindent
$\textbf{definition }\textit{fixed-len-bseq} \isadcolon \isaq{\textit{nat} \Rightarrow \textit{bseq} \Rightarrow {bool}} \textbf{ where}$ \\
$\strut\quad\isaq{\textit{fixed-len-bseq n x} = (\forall i \ge n.\ x\ i = \textit{True})}$
\vskip0.5em

We represent a comparator as a pair of channel indices.
Here, we do not require both channels to be distinct.
\vskip0.5em\noindent
$\textbf{type-synonym }\textit{cmp} = \isaq{\textit{nat} \times \textit{nat}}$
\vskip0.5em

Next we define the action of a comparator on an infinite sequence:
\vskip0.5em\noindent
$\textbf{definition }\textit{apply-cmp} \isadcolon \isaq{\textit{cmp} \Rightarrow \textit{bseq} \Rightarrow {bseq}} \textbf{ where}$ \\
$\strut\quad\isaq{\textit{apply-cmp c x} = (\textrm{let } (i, j) = c \textrm{ in } x(
    i \isaassign \textit{min}\ (x\ i)\ (x\ j),\,
    j \isaassign \textit{max}\ (x\ i)\ (x\ j)
    ))}$
\vskip0.5em
The syntax $f(i \isaassign a)$ stands for the function $g$ defined by $g\ i = a$ and $g\ k = f\ k$ for all $k \ne i$.

With that we are ready to define lower size bounds for a partial sorting networks. The following definition allows us to state $s(X) \ge k$ as $\textit{partial-lower-size-bound }X'\ k$ where $X'$ contains the sequences of $X$, represented as \textit{bseq} values:

\vskip0.5em\noindent
$\textbf{definition }\textit{partial-lower-size-bound} \isadcolon \isaq{\textit{bseq set} \Rightarrow \textit{nat} \Rightarrow {bool}} \textbf{ where}$ \\
$\strut\quad\isaqopen\textit{partial-lower-size-bound X k} = ($ \\
$\strut\quad\quad\forall \textit{cn}.\ (\forall x \in X.\ \textit{mono}\ (\textit{fold apply-cmp cn x})) \rightarrow \textit{length cn} \ge k$ \\
$\strut\quad)\isaqclose$
\vskip0.5em

Here $\textit{cn}$ ranges over all comparator networks having no exchanges, so that $\textit{length cn}$ is the size of $\textit{cn}$.
With $\textit{fold apply-cmp cn x}$ we apply that network to a sequence $x$ and $\textit{mono y}$ is true exactly if $y$ is sorted.

Note that we, again, do not exclude comparators where both channels are equal. Since the action we defined for such a comparator is the identity, this does not affect the minimal number of required comparators.

Completing our formalization of sorting network size bounds, we can use this to define a property equivalent to $s(n) \ge k$ which is the same as $s(B^n) \ge k$:

\vskip0.5em\noindent
$\textbf{definition }\textit{lower-size-bound} \isadcolon \isaq{\textit{nat} \Rightarrow \textit{nat} \Rightarrow {bool}} \textbf{ where}$ \\
$\strut\quad\isaq{\textit{lower-size-bound n b} = \textit{partial-lower-size-bound } \{x.\ \textit{fixed-len-bseq n x}\}\ k}$
\vskip0.5em

We do not exclude comparators with indices beyond the first $n$ channels.
For all input sequences, these channels are constant \textit{True}, and thus the action of such a comparator would be the identity or an exchange and hence this does not affect the minimal number of required comparators.

We then define a certificate checking function \textit{check-proof-get-bound}.
This implements the checking as described in the previous section.
The argument of this function is the certificate and the result is either $\textit{None}$ if the certificate could not be validated or $\textit{Some }(n,\, k)$ if the certificate certifies $s(n) \ge k$.

The certificate is represented using algebraic data types, integers and functions between these. In that way the sets of representable certificate values for the corresponding Haskell and Isabelle/HOL type are the same.
As Isabelle/HOL does not have empty types and does not require functions to be effectively computable, diverging Haskell terms are semantically equivalent to Isabelle/HOL's \textit{undefined} term, and do not pose a problem here.

We are going to omit further details about the exact representation of certificates as well as the implementation details of \textit{check-proof-get-bound}.
The interested reader can find those details in the source code repository \cite{harderJixSortnetopt2020}.

Finally we want to show correctness of this certificate checking function.

\begin{lemma} \label{fact:verifycorrect}
    When there exists a certificate for which the Isabelle/HOL function \textit{check-proof-get-bound-spec} returns a value of $\textit{Some } (n,\, k)$, the bound $s(n) \ge k$ holds.
\end{lemma}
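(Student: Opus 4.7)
The plan is to prove the lemma by establishing soundness of the four inference rules as Isabelle/HOL lemmas, then showing that the checker enforces only valid derivations in the formal system. First I would formalize the rules \textsc{Triv}, \textsc{PH}, \textsc{Succ}, \textsc{Huffman} as Isabelle/HOL lemmas whose conclusions assert \textit{partial-lower-size-bound X' k}, where $X'$ is the \textit{bseq}-level representation of the sequence set $X$ appearing in the inference rule. This in turn requires porting the underlying results of \autoref{sec:partialnets} to Isabelle/HOL: monotonicity of $s$ under inclusion (\autoref{fact:mono}), invariance under similarity (\autoref{fact:sim}), the pigeon-hole bound (\autoref{fact:nonempty}), the successor recurrence (\autoref{fact:succ}), and the Huffman bound (\autoref{fact:huffmanbound}). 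Each must be proved directly against the definition of \textit{partial-lower-size-bound}, which quantifies over all comparator networks without exchanges.

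Next I would state the main invariant for the checking loop: at the point where step $i$ of the certificate has been processed and accepted, then for every $j \le i$ the claim attached to step $j$---which is of the form $s(X_j) \ge k_j$---holds as \textit{partial-lower-size-bound} $X_j'\ k_j$. The inductive step proceeds by case analysis on the rule tag of step $i$. For each tag I would verify that the checker (a) enforces the syntactic side conditions of that rule (e.g.\ the cardinality condition for \textsc{PH}, or that the stored permutations and optional Boolean negations produce the expected successor or pruned sets), (b) correctly resolves each premise to an earlier step in the recorded topological order, which by the induction hypothesis already satisfies its claim, and (c) combines the premise bounds via the corresponding soundness lemma to derive the claim of step $i$.

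Finally, when \textit{check-proof-get-bound-spec} returns $\textit{Some } (n,\, k)$, its definition guarantees that the last accepted step claims $s(X) \ge k$ for some $X \subseteq B^n$. By the invariant, \textit{partial-lower-size-bound} $X'\ k$ holds. Since $X' \subseteq \{x \mid \textit{fixed-len-bseq n x}\}$, the formal analogue of \autoref{fact:mono} (immediate from the universal quantifier in the definition of \textit{partial-lower-size-bound}) lifts this to \textit{lower-size-bound n k}, which is the formal statement of $s(n) \ge k$. A handful of bridging lemmas are needed to align the \textit{bseq} representation with the abstract sequence sets used in the source-level proofs, to handle that comparators may address channels beyond the first $n$, and to match the certificate-level encoding of permutations and negations with the group actions on sequence sets.

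The main obstacle will be the Isabelle/HOL formalization of the Huffman bound that underpins the \textsc{Huffman} rule. It requires mechanising Huffman algebras in the generality of \autoref{def:huffmanalgebra}, the correctness of Huffman's algorithm (\autoref{fact:huffmanalg}) together with its monotonicity (\autoref{fact:huffmanmono}), the pruning construction of \autoref{def:pruning}, and the proof of \autoref{fact:prunedpartialsorts} for partial sorting networks, all carried out against the extensional \textit{bseq} representation that is exposed to the checker. Once these are in place, soundness of \textsc{Triv}, \textsc{PH}, and \textsc{Succ} and the invariant argument for the checking loop are comparatively routine.
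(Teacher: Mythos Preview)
Your proposal is correct and follows essentially the same approach as the paper: formalize the results of \autoref{sec:partialnets} (including Huffman algebras and the correctness of Huffman's algorithm) in Isabelle/HOL, derive soundness of the four inference rules, show that the checker validates each step as a well-formed rule instance referencing only earlier steps, and conclude by induction over the sequence of steps. Your write-up is in fact more detailed than the paper's own proof sketch, and your identification of the Huffman-bound formalization as the main obstacle matches the paper's emphasis.
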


\begin{proof}
We can express this using the following equivalent formal statement:

\vskip0.5em\noindent
$\textbf{lemma }\textit{check-proof-get-bound-spec}\colon$\\
$\strut\quad\textbf{assumes }\isaq{\textit{check-proof-get-bound cert} = \textit{Some }(\textit{width},\, \textit{bound})}$\\
$\strut\quad\textbf{shows }\isaq{\textit{lower-size-bound } (\textit{nat width})\ (\textit{nat bound})}$
\vskip0.5em

We prove this by formalizing most of \autoref{sec:partialnets} in Isabelle/HOL.
This includes a formalization of Huffman's algorithm on arbitrary Huffman algebras and corresponding correctness proofs.
With that we can formally show correctness of the rules defining the certificate's formal system.
Then we show that \textit{check-proof-get-bound-spec} verifies that each individual step of the certificate is a valid instantiation of such a rule and that each step only references steps that precede it.
Given this, we can finish the proof by complete induction on the sequence of steps.

As the corresponding formal proof is machine checked by the Isabelle/HOL system, we omit further details.
We again refer the interested reader to \cite{harderJixSortnetopt2020}.
\end{proof}

\subsection{Extraction to Safe Haskell}

To run the certificate checking, we automatically extract a Haskell implementation of \textit{check-proof-get-bound-spec} using Isabelle/HOL's code generation features \cite{haftmannCodeGenerationHigherOrder2010}.

The code extraction maps certain Isabelle/HOL types to their corresponding types in the Haskell base library.
This includes the \textit{option} type and its \textit{Some} constructor which are mapped to Haskell's \textit{Maybe} type and its \textit{Just} constructor.
Thus we expect the certificate checking to terminate with a value of $\textit{Just }(n,\, k)$ instead of $\textit{Some }(n,\, k)$.

To pass a certificate value to the extracted certificate checking function,
we use an unverified certificate file parser written in Haskell and pass the result to the extracted certificate checking function.
The parser turns a certificate file given as a sequence of bytes into a value of the certificate type whose Haskell definition is extracted from its Isabelle/HOL definition.

In general, Haskell has certain \emph{unsafe} functions available in the base library that, if improperly used, break essential invariants of the Haskell compiler and runtime system.
For example, using \textit{unsafePerformIO}, it would be possible to construct a value of a function type that when applied twice to the same value produces different results and thus is not a function.
As Isabelle/HOL assumes that values of a function type are indeed functions, we need to make sure to not construct such values as part of the certificate.

Safe Haskell is a subset of the Haskell language and base library that excludes these unsafe functions \cite{tereiSafeHaskell2012}.
When declaring a Haskell source file as Safe Haskell the Haskell compiler automatically checks that only this subset is used.
We enable Safe Haskell for both the extracted code as well as the unverified parser.

Safe Haskell does offer an escape hatch that allows declaring additional general Haskell functions as usable from Safe Haskell, but this requires an explicit annotation, making obvious the places where it is necessary to manually verify the proper use of unsafe functions.

To speed up verification, we apply two changes to the extracted code.

The first change is the addition of strictness annotations to the parameters in the definition of certain functions and data type constructors.
By default Haskell uses \emph{lazy evaluation} (also called \emph{call-by-need}), which defers the computation of an expression until its value is required by another computation or is output to the user.
A strictness annotation ensures that whenever a value of an expression using the affected definition is required, the annotated parameter's value is computed first.

Strictness annotations only affect the evaluation order, and thus will not change the result of a terminating computation.
They might turn a terminating computation into a diverging one, but as Isabelle/HOL's code extraction does not ensure that the extracted code terminates for all defined values in the first place, this does not weaken our guarantees.
The formally proved properties will still hold for every terminating computation.

The second change is that we make the \textit{par} function provided by the GHC Haskell compiler available to Safe Haskell.
This function takes two arguments and returns the second, while starting a parallel computation of the first value.
Apart from the differences in evaluation order, it semantically equivalent to the definition $\textit{par a b} = b$.
As such it unconditionally fulfills the relevant requirements of a Safe Haskell function.

To use \textit{par} from Isabelle/HOL while being able to reason about it, we define it as $\textit{par a b} = b$ within Isabelle/HOL.
We then replace the extracted definition with the GHC provided, semantically equivalent \textit{par}.
We use this \textit{par} function to enable parallel checking of the certificate's steps.

These two changes are not necessary for performing certificate checking, but as they provide a significant speed up, I have not checked a certificate for $s(11)$ without applying them.

\subsection{Trusted Base}

The computation and verification of $s(n)$ involves several steps implemented across many components.
Formal verification allows us to trust the final result to be correct without having to trust every involved component.
The subset that has to be trusted to establish correctness is called the \emph{trusted base}.

Note that we do not require, or even expect, the trusted base to be free of all bugs.
Not every bug within a single component leads to incorrectness of the overall system.
As an example consider a compiler that miscompiles some programs, but not any of the programs used for verification.

In contrast to the adversarial setting of verifying an algorithm for arbitrary inputs or establishing the security of a software system, here our goal is to verify a specific single result of a computation.
Therefore only bugs that are triggered during that verification are relevant.
This allows us to be not concerned by bugs that cause a crash, abort the program with an error or otherwise make themselves known.

In the table below, we list all components used during the computation and verification of $s(n)$ for a given $n$. We indicate whether a component is part of the trusted base. If it is, we provide some information useful for assessing its trustworthiness, otherwise we summarize why that component does not have to be trusted.

\begin{longtable}{%
    >{\raggedright\arraybackslash}p{3.7cm}%
    >{\raggedright\arraybackslash}p{1.3cm}%
    p{\dimexpr \textwidth-3.7cm-1.3cm-6\tabcolsep}%
}
    \toprule
    Component & Trusted Base & Notes \\
    \midrule
    \endfirsthead
    \toprule
    Component & Trusted Base & Notes \\
    \midrule
    \endhead
    \bottomrule
    \endfoot
    \bottomrule
    \endlastfoot
    Hardware and Operating System & Yes &
        All other used components support multiple operating systems and can run on hardware from different vendors.
        The certificate checking results reported below were obtained on an \enquote{AMD Ryzen 9 3950X} CPU with ECC memory running Linux.
        \\[0.5em]
    GHC Haskell Compiler and Runtime & Yes &
        GHC is the most widely used Haskell implementation.
        While there are other Haskell implementations, we do make optional use of GHC specific extensions for parallelizing the certificate checking and only provide build scripts that use GHC.
        \\[0.5em]
    Isabelle/HOL to Haskell Code Extraction & Yes &
        The translation from a functional program specified in Isabelle/HOL to an equivalent functional program in one of the supported target languages and arguments for the correctness of this translation are given in \cite{haftmannCodeGenerationHigherOrder2010}. We make use of the \emph{Code-Target-Numeral} feature, which maps natural numbers and integers to the corresponding native types of the target language, trusting the correctness of their implementation.
        \\[0.5em]
    Isabelle/HOL Proof Checking & Yes &
        The Isabelle system itself is constructed in a way that minimizes its trusted base by using a small and well tested logical inference kernel. Every proof that is accepted by Isabelle is ultimately reduced to the valid inferences offered by this inference kernel. That way, the implementation of more complex proof automation features does not have to be trusted.
        \\[0.5em]
    Formal Problem Statement & Yes &
        This is the only trusted component which is not general purpose, but rather specific to this work. The full formal problem statement is given above, including arguments for its correctness.
        \\[0.5em]
    Certificate Checking Implementation & No &
        The implementation of the certificate checking algorithm is not part of the trusted base, as it is covered by a formal machine checked correctness proof.
        \\[0.5em]
    Formal Certificate Checking Correctness Proof & No &
        The formal correctness proof of the certificate checking algorithm is not part of the trusted base, as it is fully machine checked using the Isabelle/HOL proof assistant. The proof does not assume any lemmas or theorems presented in this paper. As part of the correctness proof, all required facts are derived from the definitions of the formal problem statement.
        \\[0.5em]
    Search Implementation & No &
        The implementation of the algorithm described in \autoref{sec:algo} and \autoref{sec:parallel} does not have to be trusted, as it outputs data from which we generate a certificate.
        \\[0.5em]
    Search Algorithm Correctness Proof & No &
        The correctness proof given in \autoref{sec:algo} is not part of the trusted base, as the result of the algorithm is independently verified via a certificate generated from the algorithm's final state.
        \\[0.5em]
    Certificate Generation & No &
        The certificate generation does not have to be trusted, as the certificate checking makes no assumption on how the certificate was generated.
        \\[0.5em]
    Certificate Parser & No &
        During certificate checking, we use an unverified parser that parses a certificate file into a certificate datatype. It is not part of the trusted base, as the formal correctness proof of the certificate checker makes no assumptions about the well-typed value passed as certificate.
\end{longtable}

\section{Computing $s(11)$ and $s(12)$} \label{sec:compute}

Using the parallel implementation of the algorithm describedin \autoref{sec:parallel} I computed $s(11)$.
The hardware used for this was an \enquote{AMD EPYC 7401P 24-Core Processor} with 48 hardware threads running at 2\,GHz.

The computation took 4 hours and 51 minutes with a peak memory usage of 178\,GiB.
It found that there are no 11-channel sorting networks with fewer than 35 comparators and generated 93\,GiB of output data to store the partition of sequence sets used for the subsequent certificate generation.
This output data contains 2,462,890,689 sequence sets for which bounds were computed.

\begin{theorem}
	The minimal number of required comparators of a sorting network with $11$ channels is $s(11) = 35$.
\end{theorem}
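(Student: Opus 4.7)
The plan is to establish $s(11) = 35$ by combining a matching upper bound with a certified lower bound. For the upper bound, I would exhibit a concrete 11-channel sorting network with 35 comparators; such a network has long been known (attributed to G.~Shapiro and M.~W.~Green, see the table of upper bounds in \autoref{sec:intro}), and verifying that it sorts all $2^{11}$ Boolean inputs is a finite mechanical check justified by the zero-one principle. This half is routine and contributes nothing substantially new.

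The core of the argument is the lower bound $s(11) \ge 35$. The plan is to instantiate the successive approximation algorithm developed in \autoref{sec:algo} with $X = B^{11}$ and $\hat s(11) = 35$, then call $\proc{MinSize}(B^{11})$ so as to drive $b(B^{11\sim})$ down to the singleton $\{s(11)\}$. By the corollary to the correctness of $\proc{Improve}$, this terminates with $b(B^{11\sim}) = \{s(11)\}$, and the entries of the final state provide a valid pointwise over-approximation $s \pwin b \pwsubseteq b_0$. From this final state I would run the certificate generation procedure of \autoref{sec:certificates}, which produces a derivation in the inference system $\{\textsc{Triv}, \textsc{PH}, \textsc{Succ}, \textsc{Huffman}\}$ culminating in a statement $s(Y) \ge 35$ for some $Y \subseteq B^{11}$; by monotonicity of $s$ (\autoref{fact:mono}) this yields $s(11) = s(B^{11}) \ge 35$.

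To remove the need to trust either the search implementation or the certificate generator, I would then feed the certificate file to the Isabelle/HOL-extracted Haskell function \textit{check-proof-get-bound}. By \autoref{fact:verifycorrect}, whenever this function returns $\textit{Just}\,(11, 35)$ the formal statement $\textit{lower-size-bound}\ 11\ 35$ holds, which by the definitions of \textit{lower-size-bound} and \textit{partial-lower-size-bound} is exactly $s(11) \ge 35$. Combined with the upper bound from the explicit 35-comparator network, this forces $s(11) = 35$.

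The main obstacle is not mathematical but computational: making the algorithm terminate on $B^{11}$ within feasible time and memory. This requires the full machinery built up in the paper, namely canonicalization under $\sim$, the well-behaved interior construction, the Huffman bound applied to both $X$ and $X^\mathbf b$, the unique-prunable-channel shortcut, and the parallel implementation described in \autoref{sec:parallel}; the certificate checking phase likewise needs the subsumption-based pruning of $\mathcal{X}_{n,k}$ to keep the certificate small enough to be verified in reasonable time. Once the search completes and the verified checker accepts the resulting certificate (as reported from the run on the EPYC 7401P), the theorem follows immediately.
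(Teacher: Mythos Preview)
Your proposal is correct and follows essentially the same approach as the paper: run the search algorithm on $B^{11}$, generate a certificate from its final state, feed that certificate to the formally verified checker to obtain $s(11)\ge 35$ via \autoref{fact:verifycorrect}, and combine with the known $35$-comparator upper bound. The paper's own proof differs only in that it additionally reports the concrete resource figures for each phase (search, subsumption pruning, certificate generation, verification) and points to the published certificate file.
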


\begin{proof}

	To verify this result, the first step of certificate generation consists of pruning all subsumed sequences sets in the output data.
	The parts of the partition were pruned sequentially, but each individual part was processed in parallel.
	This way only one part at a time is residing in memory.
	Overall this took 2 days 3 hours and 39 minutes with a peak memory usage of 16\,GiB.
	The resulting output consists of 15,432,816 non-subsumed sequence sets requiring 874\,MiB of storage.

	Next a certificate is generated by finding derivations of the lower bound of each of these sequence sets.
	This took 19 hours and 2 minutes with a peak memory usage of 54\,GiB.
	The resulting certificate has a storage size of 2926\,MiB and contains 12,659,079 steps.

	This certificate is available at \cite{harderCertificateMinimalSize2019} to allow independent verification of this result without recomputing a certificate.

	The certificate verification was then performed on a computer with an \enquote{AMD Ryzen 9 3950X 16-Core Processor} with 32 hardware threads running at 4.1\,GHz.
	It took 34 minutes with a peak memory usage of 6\,GiB and returned $\textit{Some}\ (11,\, 35)$, indicating a successful verification.

	From the machine checked formal correctness proof of the certificate verification routine (\autoref{fact:verifycorrect}) we get that $s(11) \ge 35$, matching the known upper bound $s(11) \le 35$.

\end{proof}

\begin{corollary}
	The minimal number of required comparators of a sorting network with $12$ channels is $s(12) = 39$.
\end{corollary}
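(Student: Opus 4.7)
The plan is to combine the newly established value $s(11) = 35$ with Van Voorhis's lower bound (\autoref{fact:vanvoorhis}) and a known upper bound. Van Voorhis's bound states $s(n) \ge s(n-1) + \lceil \log_2 n \rceil$, so applied at $n = 12$ we obtain
\[
    s(12) \;\ge\; s(11) + \lceil \log_2 12 \rceil \;=\; 35 + 4 \;=\; 39.
\]
This uses only the preceding theorem establishing $s(11) = 35$ together with a result that has already been proved in the excerpt, so no new machinery is required on the lower-bound side.

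For the matching upper bound, I would cite the explicit 12-channel sorting network of size $39$ due to M.~W.~Green, as listed in the table in \autoref{sec:intro} and attributed there to \textcite{knuthArtComputerProgramming1998}. Exhibiting such a network shows $s(12) \le 39$ directly from the definition of $s(n)$, without needing any further computation or certificate-checking.

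Combining the two bounds gives $s(12) = 39$, completing the proof. The main (and only) subtlety is that the lower bound is conditional on the previously established value of $s(11)$; no new obstacle arises here, since both Van Voorhis's bound and the upper-bound construction are already in place. In particular, unlike the $s(11)$ case, no additional certificate generation or formal verification is needed for this corollary.
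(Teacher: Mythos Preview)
Your proposal is correct and matches the paper's own proof essentially line for line: apply Van Voorhis's bound (\autoref{fact:vanvoorhis}) at $n=12$ using $s(11)=35$ to get $s(12)\ge 39$, then invoke the known upper bound $s(12)\le 39$. The only minor inaccuracy is the attribution of the size-$39$ network, which the paper credits to G.~Shapiro and M.~W.~Green jointly.
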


\begin{proof}
	Via Van Voorhis's bound (\autoref{fact:vanvoorhis}) using $s(11) = 35$ and the known upper bound $s(12) \le 39$.
\end{proof}

To see how our algorithm scales, and to compare our algorithm to previous approaches, we also perform the computation of $s(n)$ for smaller $n$ where Van Voorhis's bound is not strict.

In the table below we see for how many sequence sets a non-trivial bound was computed, how many of these sequence sets are not subsumed by another sequence set of the same bound and how many of the remaining bounds are derived as part of the final certificate.

Note that these values are affected by scheduling non-determinism, and will vary slightly with each run of the algorithm.

\begin{center}
	\begin{tabular}{rrrrrr}
		\toprule
		$n$ & Sequence Sets & Non-Subsumed & Certificate Steps \\
		\midrule
		5   & 34            & 26           & 24                \\
		7   & 550           & 247          & 212               \\
		9   & 206,279       & 13,034       & 11,934            \\
		11  & 2,462,890,689 & 15,432,816   & 12,659,079        \\
		\bottomrule
	\end{tabular}
\end{center}

While we have only four data points, and thus cannot make good predictions
for larger $n$, all measured values are consistent with double
exponential or even faster growth.

The following table contains the corresponding memory usage and runtime for the initial search, the pruning of subsumed sequence sets, the certificate generation and the certificate verification.

\begin{center}

	\newcommand{\dcol}[1]{\multicolumn{4}{c}{#1}}
	\newcommand{\ccs}{\kern 0.75em}
	\newcommand{\dcs}{\kern 0.66em}
	\newcommand{\lt}{{<}\kern 0.1em}
	\newcommand{\nt}{$^\star$}

	\noindent
	\begin{tabular}{r@{\ccs}r@{\,}l@{\dcs}r@{\,}l@{\ccs}r@{\,}l@{\dcs}r@{\,}l@{\ccs}r@{\,}l@{\dcs}r@{\,}l@{\ccs}r@{\,}l@{\dcs}r@{\,}l}
		\toprule
		$n$ &        \dcol{Search}        &       \dcol{Pruning}           & \dcol{Cert. Generation}  & \dcol{Cert. Verification}  \\
		\midrule
		5  & 14  & MiB & $\lt0.1$ & s    & $\lt1$ & MiB & $\lt0.1$ & s    & $\lt1$ & MiB & $\lt0.1$ & s    & $\lt1$ & GiB & $\lt1$   & s \\
		7  & 16  & MiB & $\lt0.1$ & s    & $\lt1$ & MiB & $\lt0.1$ & s    & $\lt1$ & MiB & $\lt0.1$ & s    & $\lt1$ & GiB & $\lt1$   & s \\
		9  & 58  & MiB & 0.5      & s    & 15     & MiB & $0.6$    & s    & 54     & MiB & $0.4$    & s    & $\lt1$ & GiB & $3.1$    & s \\
		11 & 178 & GiB & 4\,h 51  & m\nt & 16     & GiB & 2\,d 5   & h\nt & 54     & GiB & 19\,h 2  & m\nt & $5.8$  & GiB & 33\,m 40 & s \\
		\bottomrule
		\multicolumn{17}{l}{\footnotesize Using an \enquote{AMD Ryzen 9 3950X 16-Core Processor} at 4.1\,GHz} \\
		\multicolumn{17}{l}{\footnotesize $^\star$Using an \enquote{AMD EPYC 7401P 24-Core Processor} at 2\,GHz}
	\end{tabular}
\end{center}

This allows us to compare our our approach to the previous state of the art.

The initial computation of $s(9)$ was done once using the generate-and-prune method and once using a hybrid of generate-and-prune and SAT solving \cite{codishSortingNineInputs2016}.
Using the generate-and-prune method, the computation took around 12 days on 144 \enquote{Intel E8400} dual core CPUs at 2\,GHz and produced around 50\,GB of data used for verification.
Using the combined approach the runtime was reduced to around 7 days.

For a fair comparison, though, we need to look at implementations that have seen a similar amount of optimization work and we also should consider improvements made to the generate-and-prune method since then.

The runtime of the generate-and-prune method is dominated by the pruning step which finds non-subsumed sequence sets.
By using the improved algorithm for the subsumption check described in \cite{frasinaruImprovedSubsumptionTesting2019}, the runtime is reduced to 29 hours on a computer with \enquote{Intel Xeon E5-2670} CPUs having a total of 32 cores running at 2.60\,GHz.

By further improving the pruning step using a subsumption testing method similar to the one used for our certificate generation (see \autoref{subsec:subsume}), the runtime of generate-and-prune for 9 channels is reduced to 44 minutes on an \enquote{AMD Ryzen 9 3950X 16-Core Processor} running at 4.1\;GHz \cite{harderJixSortnetoptgnp2020}.

Comparing this to our new algorithm, we see that it significantly improves the state of the art. The runtime for $s(9)$ is improved from $44$ minutes to a fraction of a second.
This improved performance allowed us to compute $s(11)$.

\section{Conclusion}

Using Huffman's algorithm we generalized Van Voorhis's bound to partial sorting networks.
This allowed us to construct a dynamic programming algorithm that computes the minimal size $s(n)$ required for an $n$-channel sorting network.
Using an optimized parallel implementation of our algorithm we computed $s(11) = 35$. From this we also derived $s(12) = 39$ with the help of Van Voorhis's bound.

To verify our computation we extended our algorithm to generate a certificate that can be efficiently checked.
We presented a definition of lower size bounds for sorting networks formalized in Isabelle/HOL.
Based on that we developed certificate checker with a machine checked formal correctness proof.
Using this formally verified certificate checker we successfully checked the certificate for $s(11) \ge 35$.
The upper bounds $s(11) \le 35$ and $s(12) \le 39$ come from long-known sorting networks having those sizes and number of channels.

\subsection{Future Work}

The subsumption based pruning strategy already used in \cite{codishSortingNineInputs2016} proved to be very effective at minimizing the size of the certificate generated by our algorithm.
In principle an algorithm using subsumption during the search could directly discover the derivation contained in the certificate of $s(11) \ge 35$, performing much less work than our algorithm does.
In practice our algorithm seems to derive intermediate bounds in an order that does not work well with on-line subsumption.
Analyzing this further requires more experiments.
Finding a way to make on-line use of subsumption has the potential to significantly decrease the memory requirements, which are the biggest obstacle to computing $s(n)$ for larger $n$.

In \cite{codishSortingNineInputs2016}, performing parts of the search with SAT solvers resulted in a significant speed up compared to the initial generate-and-prune implementation. It would be interesting to find a CNF encoding of the minimal size (partial) sorting network problem that uses and benefits from the Huffman bound.

\section*{Acknowledgments}

I would like to thank Maja Kądziołka, Paul Khuong and Erika (@rrika9) for their valuable suggestions and comments on draft versions of this paper.

\printbibliography

\appendix

\section{Parallel Implementation} \label{sec:parallel}

In \autoref{sec:compute}, to reduce the cost of computing $s(11)$, we use an optimized parallel implementation of the algorithm described in \autoref{sec:algo}.
Here, we document some implementation choices made, including how the algorithm was parallelized and which heuristics were used.
We do not conduct a detailed performance analysis nor claim that these choices are optimal in any way.

\subsection{Representing Sequence Sets}

Storing the $X \mapsto b(X)$ pairs whenever $b(X) \ne b_0(X)$ takes up the vast majority of memory used by the algorithm.
Thus it is important that we use a memory efficient representation for these sequence sets.
We represent a sequence set $X \subseteq B^n$ as a bit-vector of length $2^n$ where the value at index $i$ is set exactly if the length $n$ binary representation of $i$ is in $X$.
For $X \mapsto b(X)$, we store this bit-vector using a packed representation where all $8$ bits of each byte are used to store elements of the bit-vector.

To speed up computations on sequence sets, we unpack these bit-vectors to a representation using one byte per element before performing any operations on them, as addressing whole bytes is often faster than addressing individual bits.

\subsection{Heuristic Choices for Sufficiently Improved Bounds}

Our algorithm includes two different kinds of heuristic choices.
One is the selection of subcomputations to recurse on between bound updates.
We will deal with this below when we describe how to parallelize the algorithm.
The other heuristic decides how much a bound has to improve before the \proc{Improve} routine returns.

The choice has to be somewhere between returning as soon as the bounding interval of the target sequence set changed and continuing until the interval becomes singleton when the target sequence set is fathomed.

After trying several variations, I settled on the heuristic that keeps improving bounds until either the lower bound of the interval improved or it becomes singleton, ignoring other improvements of the interval's upper bound.
While this seems to perform well in practice, a comprehensive comparison of different strategies would require further experiments.

\subsection{Opportunities for Parallelization}

A divide-and-conquer algorithm that splits a problem into independent subproblems can be parallelized by performing the corresponding subcomputations in parallel using multiple threads.
Adding dynamic programming via memoization to the picture only slightly complicates this.
A data structure that allows for concurrent reads and updates is required for the table caching the results of already computed subproblems.
It is also advisable to allow locking of individual entries of that table.
When a subcomputation is spawned, the corresponding entry is locked so that a concurrent request for the same subproblem will wait for that computation to finish instead of spawning a concurrent redundant computation.

In our case, to make efficient use of successive approximation, this strategy alone will not sufficiently parallelize the computation.
To see why that is the case, we need to take a closer look at how the improvement routines recurse.

For both, improvement via the Huffman bound and improvement via successors, we repeatedly recurse on some non-fathomed sequence sets until progress is made or until no further progress is possible.
So far we have not specified how to select which of these sequence sets the improvement routines should recurse on.
While the order does not matter for the correctness of the algorithm it can have a big impact on the runtime and number of encountered subproblems.

For both these improvement strategies, it is possible that recursing on a single subproblem is sufficient for making progress.
As we are not always able to detect this in advance, though, we have to make a heuristic decision.
Nevertheless, in a non-parallel setting it would be always advantageous to recurse only on a single subproblem at a time, repeatedly checking for progress when that subcomputation finishes and returning as early as possible.
However, in the parallel setting this is not a good approach at all, as it completely serializes all subcomputations.

To get most of the advantages of this non-parallel approach in the parallel setting, we use two approaches.

The first is to asynchronously check for progress using the bound update routines after every single subcomputation finishes, even if other subcomputations are still in progress concurrently.
If progress was made, the \proc{Improve} routine can return immediately and stop all other subcomputations it spawned whether directly or indirectly via further recursion.

This allows running subcomputations in parallel, while reducing the amount of work performed for subproblems whose result is not needed for progress at that moment.
Even if some amount of work is performed for these unnecessary subproblems, it might not all be wasted, as all improved bounds discovered during this work are still cached and might be useful in future computations.

While this slightly reduces the downsides of running multiple subcomputations of a single \proc{Improve} invocation in parallel, it is still not a good idea to to eagerly spawn all of them in parallel.

To address this, we use a second approach inspired by work-stealing \cite{blumofeSchedulingMultithreadedComputations1999} and the spark concept of the Glasgow Haskell Compiler's runtime system \cite{marlowRuntimeSupportMulticore2009}.

In work-stealing, subcomputations are kept in a queue local to each hardware thread and only migrated to other hardware threads when these become idle and have no subcomputations left in their own queue.
The goal there is to reduce the involved communications overhead when migrating subcomputations to different hardware threads to balancing work across all of them.

The Glasgow Haskell Compiler allows parallelizing Haskell programs by specifying places in the program where opportunistic parallel evaluation of a given subcomputation can be beneficial.
During runtime whenever such a place is encountered, a \emph{spark} is created.
This does not cause a parallel subcomputation to be started immediately, instead, whenever a hardware thread would become idle, starting the computation of a spark is among the choices the runtime's scheduler can make.
For more details about the parallelization features of the runtime system see \cite{marlowRuntimeSupportMulticore2009}.

As we are not much concerned about the communication overhead of migrating subcomputations between hardware threads, we can simplify our scheduler by using two global queues instead of having queues for each hardware thread.
We have an \emph{active queue} and a \emph{pending queue}, which together partition all prepared subcomputations by whether they have been started yet.
Subcomputations in our \emph{pending queue} are somewhat similar to GHC's sparks.

Whenever a hardware thread would become idle, it first continues work on a subcomputation in the active queue that was already started and is not blocked waiting for another subcomputation.
Only if no such subcomputation exists, it starts a new subcomputation, moving it from the pending to the active queue.
We call this \emph{deferred spawning}.

This allows the \proc{Improve} routine to prepare and enqueue all subcomputations, while only explicitly starting a single subcomputations at a time.
That way additional subcomputations are only started when necessary to prevent a hardware thread from becoming idle.
If the \proc{Improve} routine makes progress before that happens, we detect that using our asynchronous progress checks and then cancel all the subcomputations the routine prepared so that they are not started at all, should they still be in the pending queue.

In some cases, when performing improvement via successors, we do know in advance that multiple subcomputations are required for progress. In that case we also allow the \proc{Improve} routine to start all of them eagerly, to avoid a hardware thread becoming idle and starting a different possibly unnecessary subcomputation.

The active queue is implemented as a FIFO queue, giving roughly equal priority to all started subcomputations, while the pending queue is a priority queue.
This allows the \proc{Improve} routine to assign a heuristic priority to all prepared subcomputations, which further helps in starting subcomputations that are likely to be useful for making progress.

\subsection{Threading Runtime}

The parallelization strategies described in the previous subsection require primitives (e.g.\ deferred spawning, manually starting deferred subcomputations or recursive cancellation) that are usually not provided by an operating system's or programming language runtime's multithreading facilities.

For the implementation, I decided to realize these primitives within the Rust programming language \cite{rustprojectdevelopersRustProgrammingLanguagen.d.,matsakisRustLanguage2014}, making use of the recently added async-await feature \cite{matsakisRustBlogAsyncawait}.
Similar features are provided by several programming languages, often intended to facilitate writing programs that make efficient use of non-blocking I/O-operations.

In addition to non-blocking I/O, Rust's async-await implementation is also well suited for writing problem specific multithreaded task schedulers.
It offers a first-class representation of concurrent subcomputations called \emph{futures}.
In Rust they provide the necessary freedom required to implement the primitives described below.

\paragraph{Deferred Spawning}

As described in the previous section, by default threads spawned to perform subcomputations should only be started when otherwise a hardware thread would become idle.
We write \inlinealg{\,spawn$\colon$ $\{\cdots\}$\,}\ to perform such a deferred spawn of a thread performing the enclosed subcomputation.

When it becomes necessary to start a thread that was deferred, the thread with the smallest priority value among all deferred threads is selected.
That priority value $x$ can be specified when spawning a thread using \inlinealg{\,spawn$\colon$ $\{$global priority $x;$ $\cdots\}$\,}.
Besides the \textbf{global priority} threads also have a \textbf{group priority} which is specified in the same way and further described below.

We can also start a thread eagerly if a condition $c$ holds by writing \inlinealg{\,spawn$\colon$}\inlinealg{\,$\{$eager if $c;$}\inlinealg{\,$\cdots\}$\,}.
When the condition holds additionally specified priorities are ignored.

\paragraph{Scoped Thread Groups}

A scoped thread group $T$ is defined within a lexical scope of the program.
We write this as \inlinealg{\,thread group $T\colon$ $\{\ldots\}$\,}.
Within that scope, multiple threads can be spawned into that thread group.
When spawning a thread within thread group $T$, a key $K$ has to be specified: \inlinealg{\,spawn $K$ in $T\colon$ $\{\cdots\}$\,}.

After spawning any number of threads, we can poll for any one of those threads to finish using \inlinealg{\,if $K \gets$ poll $T\colon$ $\{\cdots\}$ else$\colon$ $\{\cdots\}$\,}.
This waits for a thread that hasn't been handled by a previous polling operation to finish and as soon as that happens it assigns the key with which that thread was spawned to $K$ before entering the \textbf{if} block.
If all threads spawned into that group have finished and have been polled (or have been canceled), the poll operation immediately executes the \textbf{else} block instead.

If a poll is performed, before blocking the execution to wait for a thread to finish, the poll operation ensures that at least one of the threads spawned into the polled thread group is running i.e.\ is in the active queue.
If that is not the case, it starts the thread with the smallest \textbf{group priority} among those that that have not been started.

It is also possible to cancel one or more threads identified by their spawning keys $K_1, \ldots, K_n$ using \inlinealg{\,cancel threads $\{K_1, \ldots, K_n\}$ in $T$\,}.
This not only cancels the identified threads, but also all threads spawned recursively from within these threads.

Whenever the scope of the thread group is left all the threads spawned within are canceled, again with recursive cancellation.
This happens either when execution reaches the end of the contained statements or when a control flow statement causes execution to leave the thread group's scope, for example a \textbf{return} leaving the function containing the thread group statement.

\paragraph{Cancelable Scoped Locks}

A scoped lock is held within a lexical scope of the program, indicated by \inlinealg{\,lock $K\colon$ $\{\ldots\}$\,}.
Here $K$ is the locked key.
Only one thread at a time is allowed hold a lock on the same key.
If another thread tries to locks the same key, it is blocked until the lock on that key is available again.

Whenever the scope of the lock is left, the lock is released.
Like thread group cancellation this happens no matter what control flow causes the scope to be left.
In addition to that, if the thread holding that lock is canceled, the lock is also released.

\subsection{Pseudocode}

Now we are ready to present the implemented parallel algorithm as pseudocode.
The code assumes the availability of subroutines computing the well-behaved interior of a pruned sequence set $(X/i)^\circ$, prunable channels $p(X)$, successor sequence sets $\mathcal U(X)$ and canonical sequence sets $X^\sim$.
It also makes use of the threading runtime described above.

The pseudocode also contains further heuristics in the form of the priorities used for ordering the recursive subproblems.
After trying several different variations, I found these to perform well in practice, but a comprehensive comparison of different heuristics would require further experiments.

The core of the algorithm is split into three routines.
The first is the \proc{ParallelImprove} routine which corresponds to \proc{Improve} with \proc{PrunedImproveStep} inline.

The second routine is \proc{ParallelHuffmanImproveStep} which is called from \proc{ParallelImprove} and corresponds to \proc{HuffmanImproveStep} with \proc{HuffmanUpdate} inline.

The third and final routine is \proc{ParallelSuccessorImproveStep}, also called from \proc{ParallelImprove}, and corresponding to \proc{SuccessorImproveStep} with \proc{SuccessorUpdate} inline.

To avoid race conditions, the value $c$, in which \proc{Improve} stored the initial value of $b(X)$ to detect progress, is passed as an argument to \proc{ParallelImprove}.
That way \proc{ParallelImprove} can immediately return when a different thread improved the bounds of $b(X)$ in the time between when \proc{ParallelImprove}'s caller last observed $b(X)$ and when \proc{ParallelImprove} obtains the lock on $X$.
The callers of \proc{ParallelImprove} also keep a local cache $b'(X)$ of the $b(X)$ values they observed, only updated when a polled thread finishes, which helps us avoiding race conditions.

Also note that for improvement via pruning, we omit the loop we had in \proc{PrunedImproveStep}.
Given a sequence set $X$ with a single prunable channel $i$, the bounds $b(X/i)$ are always contained in $b(X)$, as this is the case for the initial $b = b_0$ and we only update $b(X)$ by setting it to $b(X/i)$.

Another difference is that we use a global map $u(X)$ to cache all values $u$ which we computed in \proc{HuffmanUpdate}.
This allows us to quickly skip \proc{ParallelHuffmanImproveStep} without recomputing the Huffman bound, once its potential is exhausted for a given sequence set.

\begin{algo}
	\algorithm{
		global $b \gets b_0 = \cdots$		\cmnt{As before}

		global $u \gets ( X \mapsto 0 )$ \cmnt{Caching \proc{HuffmanUpdate}'s $u$}

		\AW

		procedure $\proc{ParallelImprove}(X,\, c,\, r)\colon$ \AC{
			lock $X$ \cmnt{Allow only one simultaneous thread improving $X$} \AC{
				\cmnt{$\downarrow$ If another thread already improved $X$}

				if $\min b(X) = \max b(X)$ or $b(X) \ne c\colon$ return

				\cmnt{$\downarrow$ Check for a single prunable channel}

				for $Z$ in $X^\ang{\mathbf b}$ if $|p(Y)| = 1\colon$  \AC{
					$Y \gets (Z/i)^{\circ\sim}$ where $i \in p(Y)$

					$\proc{ParallelImprove}(Y,\, c,\, r)$

					$b(X) \gets b(Y)$

					return
				}

				\cmnt{Is it possible to improve $b(X)$ using the Huffman bound?}

				if $u(X) > \min b(X)\colon$ \AC{
					$\proc{ParallelHuffmanImproveStep}(Z,\, c,\, r)$

					if $b(X) \ne c\colon$ return
				}

				$\proc{ParallelSuccessorImproveStep}(Z,\, c,\, r)$
			}
		}
	}

	\algorithm{
		procedure $\proc{ParallelSuccessorImproveStep}(X,\, c,\, r)\colon$ \AC{
			thread group $T\colon$ \AC{
				$b' \gets \emptyset$ \cmnt{Local cache to serially process updates of $b$}

				for $Y$ in $\mathcal U(X)^\sim\colon$ \AC{
					$b'(Y) \gets b(Y)$ \cmnt{Cache current bounds for $Y$}

					spawn $Y$ in $T\colon$ \AC{
						\cmnt{$\downarrow$ Eager if improving $X$'s bound requires improving $Y$'s bound}

						eager if $\min b'(Y) \le 1 + \min b'(X)$

						group priority $|Y|,\, \min b'(Y),\, \max b'(Y)$

						global priority $r, |Y|,\, \min b'(Y)$

						$\proc{ParallelImprove}(Y, b'(Y),\, r)$ \cmnt{Recurse to improve $Y$'s bounds}
					}
				}

				loop$\colon$ \AC{
					\cmnt{$\downarrow$ Recompute the bounds for $X$ using the cached bounds}

					{$b(X) \gets \{1 + \min\, \{\, \min z \mid z \in \im b' \,\}, \ldots, 1 + \min\, \{\, \max z \mid z \in \im b' \,\}\}$}

					\cmnt{$\downarrow$ If we improved the lower bound or found tight bounds}

					if $\min b(X) > \min c$ or $\min b(X) = \max b(X)\colon$ return

 					if $Y \gets$ poll $T\colon$ \cmnt{Wait for a thread in $T$ to finish} \ACC{
						$b'(Y) \gets b(Y)$ \cmnt{Update cached bounds for $Y$}

						if $\min b'(Y) \ne \max b'(Y)\colon$ \cmnt{Respawn a thread if $b(Y)$ not yet singleton} \AC{
							spawn $Y$ in $T\colon \cdots$ \cmnt{Identical to the \textbf{spawn} above} %
						}
					}
					else$\colon$  \cmnt{All threads in T finished} \AC {
						return
					}
				}
			}
		}
	}

	\algorithm{
		procedure $\proc{ParallelHuffmanImproveStep}(X,\, c,\, r)\colon$ \AC{
			thread group $T\colon$ \AC{
				$b' \gets \emptyset$ \cmnt{Local cache to serially process updates of $b$}

				\cmnt{$\downarrow$ Canonicalized prunings for $X^\ang{\mathbf b}$ as multisets with duplicates}

				for $c$ in $\ang{\mathbf b}\colon$ $p_c \gets \{\, (X^c/i)^\sim \mid i \in p(X^c) \,\}_\#$

				for distinct $Y$ in $p_c$ with $c$ in $\ang{\mathbf b}\colon$
						\cmnt{Ignore duplicates for spawning threads} \AC{
					$b'(Y) \gets b(Y)$ \cmnt{Cache current bounds for $Y$}

					spawn $Y$ in $T\colon$ \AC{
						group priority $\min b'(Y),\, |Y|,\, \max b'(Y)$

						global priority $r, |Y|,\, \min b'(Y)$

						$\proc{ParallelImprove}(Y, b'(Y),\, r)$ \cmnt{Recurse to improve $Y$'s bounds}
					}
				}

				loop$\colon$ \AC{
					\cmnt{$\downarrow$ Recompute the Huffman bounds for $X$ using the cached bounds}

					for $(c, f)$ in $\ang{\mathbf b} \times \{\min, \max\} \colon$ $H_{c, f} \gets H_{1{+}\mathrm{max}} \,\{\, f(b'(Y)) \mid Y \in_\# p_c \,\}_\#$

					$u(X) \gets \max_{c \in \ang{\mathbf b}} H_{c, \max}$ \cmnt{Cache the Huffman bounds' upper bound}

					$b(X) \gets b(X) \cap \{ \max_{c \in \ang{\mathbf b}} H_{c, \min}, \ldots \}$ \cmnt{Update with lower bounds}

					\cmnt{$\downarrow$ If we improved the lower bound or found tight bounds}

					if $\min b(X) > \min c$ or $\min b(X) = \max b(X)\colon$ return

					\cmnt{$\downarrow$ If the Huffman bound cannot lead to further improvement}

					if $\min b(X) \ge u(X)\colon$ return

					\cmnt{$\downarrow$ If the Huffman bound for one of $X^\ang{\mathbf b}$ cannot improve our bound}

					for $c$ in $\ang{\mathbf b}$ if $\min b(X) \ge H_{c, \max}\colon$
					cancel threads $p_c \setminus p_{c\mathbf b}$ in $T$

					if $Y \gets$ poll $T\colon$ \cmnt{Wait for a thread in $T$ to finish} \ACC{
						$b'(Y) \gets b(Y)$ \cmnt{Update cached bounds for $Y$}

						if $\min b'(Y) \ne \max b'(Y)\colon$ \cmnt{Respawn a thread if $b(Y)$ not yet singleton} \AC{
							spawn $Y$ in $T\colon \cdots$ \cmnt{Identical to the \textbf{spawn} above}
						}
					}
					else$\colon$  \cmnt{All threads in T finished} \AC {
						return
					}
				}
			}
		}
	}

\end{algo}

\newpage

\end{document}